\newtheorem{theorem}{Theorem}[section]
\newtheorem{lemma}[theorem]{Lemma}
\newtheorem{prop}[theorem]{Proposition}
\newtheorem{cor}[theorem]{Corollary}
\newtheorem{remark}[theorem]{Remark}
\newtheorem{definition}[theorem]{Definition}
\def\l{\lambda}
\def\a{\alpha}
\def\b{\beta}
\def\l{\lambda}
\def\a{\alpha}
\def\b{\beta}
\def\g{\gamma}
\def\d{\delta}
\def\e{\varepsilon}
\def\R{\mathbb{R}}
\def\C{\mathbb{C}}
\def\N{\mathbb{N}}
\def\Z{\mathbb{Z}}
\def\P{\mathbb{P}}
\def\T{\mathbb{T}}
\def\M{\mathcal{M}}
\def\Mba{\M_{\beta, \alpha}}
\def\Mmba{\M_{\beta, \alpha}}
\def\pmba{\pi_{\beta, \alpha}}
\def\W{\mathcal{W}}
\def\Wba{\W_{b,a}}
\def\P{\mathcal{P}}
\def\n{\nabla}
\def\nba{\nabla_{b,a}}
\def\cds{\!\cdot_\textbf{s}\!}
\def\mbf2{\mathbf{2}}
\def\Wa{W_1}
\def\Wb{W_2}
\begin{document}

\title{Global action-angle variables \\ for the periodic Toda lattice}
\author{Andreas Henrici \and Thomas Kappeler\footnote{Supported in part by the Swiss National Science Foundation, the programme SPECT, and the European Community through the FP6 Marie Curie RTN ENIGMA (MRTN-CT-2004-5652)}}

\maketitle

\begin{abstract}
In this paper we construct \emph{global} action-angle variables for the periodic Toda lattice.\footnote{2000 Mathematics Subject Classification: 37J35, 39A12, 39A70, 70H06}
\end{abstract}


\section{Introduction} \label{introduction}

Consider the Toda lattice with period $N$ ($N \geq 2$), 
\begin{displaymath}
\dot{q}_n = \partial_{p_n} H, \quad \dot{p}_n = -\partial_{q_n} H
\end{displaymath}
for $n \in \Z$, where the (real) coordinates $(q_n, p_n)_{n \in \Z}$ satisfy $(q_{n+N}, p_{n+N}) = (q_n, p_n)$ for any $n \in \Z$ and the Hamiltonian $H_{Toda}$ is given by
\begin{equation} \label{hamtodapq}
H_{Toda} = \frac{1}{2} \sum_{n=1}^N p_n^2 + \alpha^2 \sum_{n=1}^N e^{q_n - q_{n+1}}
\end{equation}
where $\a$ is a positive parameter, $\a > 0$. For the standard Toda lattice, $\a=1$. The Toda lattice was introduced by Toda \cite{toda} and studied extensively in the sequel. It is an important model for an integrable system of $N$ particles in one space dimension with nearest neighbor interaction and belongs to the family of lattices introduce and numerically investigated by Fermi, Pasta, and Ulam in their seminal paper \cite{fpu}. To prove the integrability of the Toda lattice, Flaschka introduced in \cite{fla1} the (noncanonical) coordinates
\begin{displaymath}
b_n := -p_n \in \R, \quad a_n := \alpha e^{\frac{1}{2} (q_n - q_{n+1})} \in \R_{>0} \quad (n \in \Z).
\end{displaymath}
These coordinates describe the motion of the Toda lattice relative to the center of mass. Note that the total momentum is conserved by the Toda flow, hence any trajectory of the center of mass is a straight line.

In these coordinates the Hamiltonian $H_{Toda}$ takes the simple form
\begin{displaymath}
H = \frac{1}{2} \sum_{n=1}^N b_n^2 + \sum_{n=1}^N a_n^2,
\end{displaymath}
and the equations of motion are
\begin{equation} \label{flaeqn}
\left\{ \begin{array}{lllll}
 \dot{b}_n & = & a_n^2 - a_{n-1}^2 \\
 \dot{a}_n & = & \frac{1}{2} a_n (b_{n+1} - b_n)
\end{array} \right. \qquad (n \in \Z).
\end{equation}
Note that $(b_{n+N}, a_{n+N}) = (b_n, a_n)$ for any $n \in \Z$, and $\prod_{n=1}^N a_n = \alpha^N$.
Hence we can identify the sequences $(b_n)_{n \in \Z}$ and $(a_n)_{n \in \Z}$ with the vectors $(b_n)_{1 \leq n \leq N} \in \R^N$ and $(a_n)_{1 \leq n \leq N} \in \R_{>0}^N$. Our aim is to study the normal form of the system of equations (\ref{flaeqn}) on the phase space
\begin{displaymath}
\M := \R^N \times \R_{>0}^N.
\end{displaymath}
This system is Hamiltonian with respect to the nonstandard Poisson structure $J \equiv J_{b,a}$, defined at a point $(b,a) = ((b_n, a_n)_{1 \leq n \leq N}$ by
\begin{equation} \label{jdef}
J = \left( \begin{array}{cc}
0 & A \\
-{}^t A & 0 \\
\end{array} \right),
\end{equation}
where $A$ is the $b$-independent $N \times N$-matrix
\begin{equation} \label{adef}
A = \frac{1}{2} \left( \begin{array}{ccccc}
a_1 & 0 & \ldots & 0 & -a_N \\
-a_1 & a_2 & 0 & \ddots & 0 \\
0 & -a_2 & a_3 & \ddots & \vdots \\
\vdots & \ddots & \ddots & \ddots & 0 \\
0 & \ldots & 0 & -a_{N-1} & a_N \\
\end{array} \right).
\end{equation}
The Poisson bracket corresponding to (\ref{jdef}) is then given by
\begin{eqnarray}
\{ F, G \}_J(b,a) & = & \langle (\nabla_b F, \nabla_a F), \, J \, (\nabla_b G, \nabla_a G) \rangle_{\R^{2N}} \nonumber\\
& = & \langle \nabla_b F, A \, \nabla_a G \rangle_{\R^N} - \langle \nabla_a F, A^t \, \nabla_b G \rangle_{\R^N}. \label{poisson}
\end{eqnarray}
where $F,G \in C^1(\M)$ and where $\nabla_b$ and $\nabla_a$ denote the gradients with respect to the $N$-vectors $b = (b_1, \ldots, b_N)$ and $a = (a_1, \ldots, a_N)$, respectively. Therefore, equations (\ref{flaeqn}) can alternatively be written as $\dot{b}_n = \{ b_n, H \}_J$, $\dot{a}_n = \{ a_n, H \}_J$ $(1 \leq n \leq N)$. Further note that
\begin{equation}
  \{ b_n, a_n \}_J = \frac{a_n}{2}; \quad \{ b_{n+1}, a_n \}_J = -\frac{a_n}{2},
\end{equation}
while $\{ b_n, a_k \}_J = 0$ for any $n,k$ with $n \notin \{ k, k+1 \}$.

Since the matrix $A$ defined by (\ref{adef}) has rank $N-1$, the Poisson structure $J$ is degenerate.
It admits the two Casimir functions\footnote{A smooth function $C: \M \to \R$ is a Casimir function for $J$ if $\{ C, \cdot \}_J \equiv 0$.}
\begin{equation} \label{casimirdef}
C_1 := -\frac{1}{N} \sum_{n=1}^N b_n \quad \textrm{and} \quad C _2 := \left( \prod_{n=1}^N a_n \right)^\frac{1}{N}
\end{equation}
whose gradients $\nba C_i \! = \! (\n_b C_i, \n_a C_i)$ ($i = 1,2$), given by
\setlength\arraycolsep{1.5pt} {
\begin{eqnarray}
\nabla_b C_1 & = & -\frac{1}{N} (1, \ldots, 1), \qquad \nabla_a C_1 = 0, \label{c1grad} \\
\nabla_b C_2 & = & 0, \qquad \nabla_a C_2 = \frac{C_2}{N} \left( \frac{1}{a_1}, \ldots, \frac{1}{a_N} \right), \label{c2grad}
\end{eqnarray}}
are linearly independent at each point $(b,a)$ of $\M$.

The main result of this paper ist the following one:
\begin{theorem} \label{fundthm}
The periodic Toda lattice admits globally defined action-angle variables. More precisely:
\begin{itemize}
\item[(i)] There exist real analytic functions $(I_n)_{1 \leq n \leq N-1}$ on $\M$ which are pairwise in involution and which Poisson commute with the Toda Hamiltonian $H$ and the two Casimir functions $C_1$, $C_2$, i.e. for any $1 \leq m,n \leq N-1$, $i = 1,2$,
  \begin{displaymath}
    \{ I_m, I_n \}_J = 0 \quad \textrm{on } \M
  \end{displaymath}
and
\begin{displaymath}
  \{ H, I_n \}_J = 0 \quad \textrm{and} \quad \{ C_i, I_n \}_J = 0 \quad \textrm{on } \M.
\end{displaymath}
\item[(ii)] For any $1 \leq n \leq N-1$ there exist a real analytic submanifold $D_n$ of codimension $2$ and a function $\theta_n: \M \setminus D_n \to \R$, defined mod $2 \pi$ and real analytic when considered mod $\pi$, so that on $\M \setminus \bigcup_{n=1}^{N-1} D_n$, $(\theta_n)_{1 \leq n \leq N-1}$ and $(I_n)_{1 \leq n \leq N-1}$ are conjugate variables. More precisely, for any $1 \leq m,n \leq N-1$, $i = 1,2$
  \begin{displaymath}
    \{ I_m, \theta_n \}_J = \delta_{mn} \quad \textrm{and} \quad \{ C_i, \theta_n \}_J = 0 \quad \textrm{on } \M \setminus D_n
  \end{displaymath}
and
\begin{displaymath}
  \{ \theta_m, \theta_n \}_J = 0 \quad \textrm{on } \M \setminus (D_m \cup D_n).
\end{displaymath}
\end{itemize}
\end{theorem}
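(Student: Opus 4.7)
The plan is to exploit Flaschka's Lax representation of (\ref{flaeqn}). Associate to $(b,a) \in \M$ the periodic Jacobi matrix $L(b,a)$ with diagonal entries $b_1, \ldots, b_N$ and off-diagonal entries $a_1, \ldots, a_N$; then (\ref{flaeqn}) takes the Lax form $\dot{L} = [B, L]$, so all spectral invariants of $L$ are conserved and pairwise in involution for $\{ \cdot, \cdot \}_J$. The central object is the discriminant $\Delta(\l) = \textrm{tr}\, M(\l)$ of the associated monodromy matrix, a polynomial of degree $N$ in $\l$ whose zeros form the periodic and antiperiodic spectra, organized into $N$ bands separated by $N-1$ (possibly collapsed) spectral gaps. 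The leading coefficient of $\Delta$ is controlled by $\prod a_n = C_2^N$, and the average of the periodic eigenvalues is controlled by $\sum b_n = -N C_1$, so the Casimirs are visible at the level of the spectral data.

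For each $1 \leq n \leq N-1$ I would define the $n$-th action by
\begin{displaymath}
I_n(b,a) = \frac{1}{\pi} \oint_{\G_n} \l \, \frac{\dot{\Delta}(\l)}{\sqrt{\Delta(\l)^2 - 4}} \, d\l,
\end{displaymath}
where $\G_n$ is a cycle on the hyperelliptic spectral curve $\{w^2 = \Delta(\l)^2 - 4\}$ encircling the $n$-th gap. The essential analytic point is that at a collapsed gap, $\Delta^2 - 4$ has a double zero at the collision point and $\dot{\Delta}$ vanishes there, so the integrand has a removable singularity and $\G_n$ can be deformed; this is how $I_n$ is shown to be real analytic on all of $\M$. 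The bracket relations $\{ I_m, I_n \}_J = 0$ and the commutation of each $I_n$ with $H$, $C_1$, $C_2$ then follow from the isospectrality of the Lax flow, combined with the explicit formula (\ref{poisson}).

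The angles would be constructed from the $N-1$ Dirichlet eigenvalues $\mu_1(b,a), \ldots, \mu_{N-1}(b,a)$ of the finite-interval Jacobi problem. Lifted to the spectral curve they form an effective divisor, and applying an Abel-type map with respect to a basis $(\omega_n)$ of holomorphic differentials dual to the gap cycles $(\G_n)$ defines $\theta_n$ modulo the lattice of periods, i.e. modulo $2\pi$ after rescaling. The singular locus $D_n$ is the codimension-two real-analytic submanifold where the $n$-th gap collapses and a Dirichlet eigenvalue sits at the collision point; there the choice of sheet on which to lift $\mu_j$ becomes ambiguous, which is precisely what produces real analyticity of $\theta_n$ only modulo $\pi$. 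The identities $\{ I_m, \theta_n \}_J = \d_{mn}$ and $\{ \theta_m, \theta_n \}_J = 0$ off $D_n$ would then be checked by computing the Hamiltonian flow of each $I_m$ on the Dirichlet divisor, a calculation familiar from the finite-gap treatment of the periodic KdV equation.

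The main obstacle will be global real analyticity of both the actions and the angles across the locus of collapsed gaps, where the standard finite-gap construction degenerates. This is exactly the difference between the local Arnold-Liouville coordinates, which are guaranteed, and the global statement claimed here: a uniform-in-families analysis of the spectral curve as a (possibly singular) branched cover of the $\l$-plane is needed, together with a careful renormalization of the contour integrals defining $I_n$ and of the Abel map defining $\theta_n$. A secondary difficulty is that since $J$ is degenerate and non-canonical, the conjugate-variable identities cannot be read off from a canonical transformation and must be verified directly from (\ref{poisson}) by tracking the variation of the spectral data under each flow.
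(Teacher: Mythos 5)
Your plan follows the paper's construction closely: actions as contour integrals of $\l\,\dot\Delta_\l/\sqrt{\Delta^2_\l-4}$ over the gap cycles $\Gamma_n$, angles as Abel-type sums over the Dirichlet divisor with respect to differentials normalized against the $\Gamma_k$, $D_n$ as the codimension-two locus of collapsed gaps, and real analyticity obtained by renormalizing the integrals across $D_n$. This is exactly the architecture of Sections~\ref{coord}--\ref{anglecoord} and the subsequent computations.

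However, the claim that $\{I_m,I_n\}_J=0$ ``follows from the isospectrality of the Lax flow'' is not correct as stated, and it conceals the main technical work. Isospectrality of $\dot L=[B,L]$ yields only $\{H,I_n\}_J=0$ for the one Hamiltonian $H$ generating the Lax flow; it does not imply that the conserved quantities are pairwise in involution, and in general they need not be (absent extra structure such as an $r$-matrix, which the paper does not invoke). What the paper actually proves is the much stronger identity $\{\Delta_\l,\Delta_\mu\}_J=0$ for all $\l,\mu$ (Corollary~\ref{dadb}), from which $\{I_m,I_n\}_J=0$ follows by integrating over $\Gamma_m\times\Gamma_n$ via the gradient formula (\ref{gradcontour}), and which also underpins $\{\theta_n,\Delta_\l\}_J=\psi_n(\l)$ (Proposition~\ref{thndeltal}) and, through a Jacobi-identity argument on the isospectral torus combined with Proposition~\ref{bknblm}, the vanishing of $\{\theta_m,\theta_n\}_J$. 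The proof of $\{\Delta_\l,\Delta_\mu\}_J=0$ rests on the explicit expression of $\nabla_{b,a}\Delta_\l$ in products of fundamental solutions (Proposition~\ref{discrprod}) and the Wronskian-telescoping orthogonality relation of Lemma~\ref{orthfund}, whose boundary terms vanish because those gradients are $N$-periodic. You correctly anticipate that the bracket relations must be verified directly, but the content of that verification is this orthogonality machinery, not isospectrality; a complete proof needs to supply it.
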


Let $\Mba := \{ (b,a) \in \R^{2N} : (C_1, C_2) = (\beta, \alpha) \}$ denote the level set of $(C_1, C_2)$ for $(\beta, \alpha) \in \R \times \R_{> 0}$. Note that $(-\b 1_N, \a 1_N) \in \Mba$ where $1_N = (1, \ldots, 1) \in \R^N$. As the gradients $\nba C_1$ and $\nba C_2$ are linearly independent everywhere on $\M$, the sets $\Mba$ are (real analytic) submanifolds of $\M$ of codimension two. Furthermore the Poisson structure $J$, restricted to $\Mba$, becomes nondegenerate everywhere on $\Mba$ and therefore induces a symplectic structure $\nu_{\beta, \alpha}$ on $\Mba$. In this way, we obtain a symplectic foliation of $\M$ with $\Mba$ being the symplectic leaves. 

\begin{cor} \label{leafcor}
On each symplectic leaf $\Mba$, the action variables $(I_n)_{1 \leq n \leq N-1}$ are a maximal set of functionally independent integrals in involution of the periodic Toda lattice.
\end{cor}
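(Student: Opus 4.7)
The plan is to deduce the corollary directly from Theorem \ref{fundthm} together with the structure of the symplectic foliation. Since the Casimirs $C_1, C_2$ have linearly independent gradients everywhere on $\M$, each leaf $\Mba$ has dimension $2N - 2$, and hence a maximally commuting set of functionally independent functions on the symplectic manifold $(\Mba, \nu_{\beta,\alpha})$ has at most $N-1$ elements. It therefore suffices to show that the restrictions $I_n|_{\Mba}$ ($1 \leq n \leq N-1$) form such a set and are integrals of $H|_{\Mba}$.

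The involution on the leaf follows from the general fact that for any two functions $F, G$ on $\M$ which Poisson-commute with both Casimirs, the Poisson bracket induced by $\nu_{\beta,\alpha}$ coincides with the restriction of $\{\cdot,\cdot\}_J$ to $\Mba$. Since Theorem \ref{fundthm}(i) provides $\{C_i, I_n\}_J \equiv 0$, and since $C_1, C_2$ are Casimirs and hence automatically commute with $H$, the identities $\{I_m, I_n\}_J \equiv 0$ and $\{H, I_n\}_J \equiv 0$ on $\M$ descend to the corresponding identities on $\Mba$. Thus the restrictions $I_n|_{\Mba}$ pairwise commute with respect to $\nu_{\beta,\alpha}$, and each is a first integral of the restricted Toda flow.

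The critical step is functional independence of $(I_n|_{\Mba})_{1 \leq n \leq N-1}$, and for this I would invoke the conjugate angles of Theorem \ref{fundthm}(ii). On the open subset $U := \Mba \setminus \bigcup_{n=1}^{N-1} D_n$, the relations $\{I_m, \theta_n\}_J = \delta_{mn}$ translate to $d\theta_n(X_{I_m}) = \delta_{mn}$ at every point, so the Hamiltonian vector fields $X_{I_1}, \ldots, X_{I_{N-1}}$ on $\Mba$ are linearly independent throughout $U$. Nondegeneracy of $\nu_{\beta,\alpha}$ turns the map $\xi \mapsto X_\xi$ into an isomorphism $T^*\Mba \to T\Mba$, so the differentials $dI_1|_{\Mba}, \ldots, dI_{N-1}|_{\Mba}$ are themselves linearly independent on $U$. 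Because each $D_n$ is a codimension-$2$ real analytic submanifold of $\M$, the intersection $D_n \cap \Mba$ is a proper analytic subset of $\Mba$, so $U$ is open and dense in $\Mba$; combined with the dimension count this yields both functional independence and maximality.

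The main obstacle I anticipate is precisely this last density claim, i.e. verifying that each $D_n$ meets $\Mba$ in a proper subset. This is a transversality statement which ultimately uses the explicit description of $D_n$ from the construction of $\theta_n$ together with the identity $\{C_i, \theta_n\}_J = 0$ of Theorem \ref{fundthm}(ii); it is not contained in the formal statement of the theorem, so a short separate check is needed.
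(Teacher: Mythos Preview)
Your proposal is correct and follows essentially the same route as the paper: dimension count, involution via the Casimir property, and functional independence via the canonical relations $\{I_m,\theta_n\}_J=\delta_{mn}$ on the dense open set $\Mba\setminus\bigcup_n D_n$. The paper packages the independence step as a separate lemma (Lemma~\ref{lemmaindep}) showing that $\big((\nba I_n)_{n\in K},\nba C_1,\nba C_2\big)$ are linearly independent at \emph{every} point of $\M$, and then projects onto $T\Mba$; your version with Hamiltonian vector fields on the leaf is equivalent. The density of $\Mba\setminus\bigcup_n D_n$ in $\Mba$ is indeed used but not argued in detail in the paper either; it follows, for instance, from the fact that $\Mba$ contains points with all gaps open (so $D_n\cap\Mba$ is a proper analytic subset), and is not a serious obstacle.
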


In subsequent work \cite{ahtk2}, we will use Theorem \ref{fundthm} to construct \emph{global} Birkhoff coordinates for the periodic Toda lattice. More precisely, we introduce the model space $\P := \R^{2(N-1)} \times \R \times \R_{>0}$ endowed with the degenerate Poisson structure $J_0$ whose symplectic leaves are $\R^{2(N-1)} \times \{ \beta \} \times \{ \alpha \}$ endowed with the standard Poisson structure, and prove the following theorem:
\begin{theorem} \label{sumthm}
There exists a real analytic, canonical diffeomorphism
\begin{displaymath}
\begin{array}{ccll}
 \Omega: & (\M, J) & \to & (\P, J_0) \\
 & (b,a) & \mapsto & ((x_n, y_n)_{1 \leq n \leq N-1}, C_1, C_2)
\end{array}
\end{displaymath}
such that the coordinates $(x_n, y_n)_{1 \leq n \leq N-1}, C_1, C_2$ are global Birkhoff coordinates for the periodic Toda lattice, i.e. $(x_n, y_n)_{1 \leq n \leq N-1}$ are canonical coordinates, $C_1, C_2$ are the Casimirs and the transformed Toda Hamiltonian $\hat{H} = H \circ \Omega^{-1}$ is a function of the actions $(I_n)_{1 \leq n \leq N-1}$ and $C_1, C_2$ alone.
\end{theorem}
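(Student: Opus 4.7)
The plan is to define the Birkhoff map via the standard polar-to-Cartesian formula
\[
\Omega(b,a) := \bigl((x_n(b,a), y_n(b,a))_{1 \leq n \leq N-1},\, C_1(b,a),\, C_2(b,a)\bigr),
\]
where on $\M \setminus D_n$ we set
\[
x_n + i y_n := \sqrt{2 I_n}\; e^{i \theta_n}.
\]
By Theorem \ref{fundthm}, $I_n$ is real analytic on $\M$ and $\theta_n$ is real analytic mod $2\pi$ on $\M \setminus D_n$, so this formula gives a well-defined real analytic map wherever $I_n > 0$. The crux of the theorem is the real analytic extension of $(x_n, y_n)$ across $D_n$.

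For this extension I would exploit the extra regularity in Theorem \ref{fundthm}(ii) that $\theta_n$ is real analytic mod $\pi$. This implies that
\[
(x_n + i y_n)^2 = 2 I_n\, e^{2 i \theta_n}
\]
already extends to a globally real analytic function on $\M$. To extract a real analytic square root near $D_n$, I would first verify that $D_n$ coincides with the zero set $\{I_n = 0\}$ and that $I_n$ vanishes quadratically there (consistent with the codimension two of $D_n$). Next I would set up a local normal form near each point of $D_n$, modeled on the construction of Birkhoff coordinates for KdV: writing $D_n$ locally as the closing of the $n$-th spectral gap of the periodic Lax matrix, one shows that $2 I_n e^{2 i \theta_n}$ admits a local analytic factorization as the square of a real analytic function, which then glues to a global $(x_n, y_n)$.

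Once $\Omega$ is established as a real analytic map $\M \to \P$, the remaining assertions follow rather formally from Theorem \ref{fundthm}. The canonical bracket relations for $(x_n, y_n)$ and their vanishing brackets with $C_1, C_2$ are computed directly on $\M \setminus \bigcup_n D_n$ from $\{I_m, \theta_n\}_J = \delta_{mn}$, $\{\theta_m, \theta_n\}_J = 0$, and $\{C_i, \theta_n\}_J = \{C_i, I_n\}_J = 0$, and extend by real analyticity. The diffeomorphism property is checked leaf by leaf: on each symplectic leaf $\Mba$, Corollary \ref{leafcor} guarantees that the actions form a maximal independent set in involution, so that $(x_n, y_n)_{1 \leq n \leq N-1}$ is a global symplectic chart onto $\R^{2(N-1)}$. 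Finally, since $\{H, I_n\}_J = 0$ for all $n$, the Toda Hamiltonian is constant along each invariant torus on every leaf, so $\hat H = H \circ \Omega^{-1}$ depends only on $(I_1, \ldots, I_{N-1}, C_1, C_2)$.

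The principal obstacle is the real analytic extension across the singular loci $D_n$. Outside $\bigcup D_n$ the Birkhoff coordinates are essentially forced on us by Theorem \ref{fundthm}, and the verification of the bracket relations and of the diffeomorphism property is mostly bookkeeping. However, making sense of $(x_n, y_n)$ at points where the $n$-th action degenerates requires precise control of the joint behavior of $I_n$ and $\theta_n$ as a pair of periodic/antiperiodic eigenvalues of the Lax matrix coalesce. This is the one point where the soft symplectic machinery must be supplemented by a detailed spectral analysis of the periodic Lax operator.
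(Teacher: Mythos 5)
The paper you are reading does \emph{not} contain a proof of Theorem~\ref{sumthm}. It is stated as a forward reference: the surrounding text explicitly says that the Birkhoff coordinates and the proof are constructed in the companion paper~\cite{ahtk2}. There is therefore no proof here to compare against directly. What this paper does supply are the ingredients you would need: Proposition~\ref{actanalytprop} (analyticity and nonvanishing of $\xi_n = \sqrt[+]{2 I_n/\gamma_n^2}$ on a complex neighborhood of $\M$), Corollary~\ref{in0iffgn0} (so $D_n$ is indeed the zero set of $I_n$), and the canonical relations of Theorems~\ref{orthrel} and~\ref{orthrel2}. Your high-level outline --- polar-to-Cartesian plus analytic extension across the $D_n$ --- is the standard approach in the KdV/NLS literature the authors cite, and you correctly flag the extension across $D_n$ as the technical crux.

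Two points in the proposal are nonetheless undersold. First, the step from ``$(x_n+iy_n)^2 = 2 I_n e^{2i\theta_n}$ is globally real analytic'' to ``$x_n + iy_n$ is globally real analytic'' is not a formality: one cannot take a square root of an analytic function vanishing on a codimension-two set by soft arguments. The actual construction writes $x_n + i y_n$ as $\xi_n \cdot \gamma_n \cdot e^{i\theta_n}$ (up to normalizations) and then shows that the singular parts of $\gamma_n$ and of $e^{i\theta_n}$ cancel by expressing $e^{i\theta_n}$ explicitly in terms of Dirichlet divisors and gap endpoints near $D_n$. That analytic cancellation requires the detailed spectral formulas, not only the abstract fact that $\theta_n$ is analytic mod $\pi$. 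Second, and more substantively, the diffeomorphism claim is \emph{not} ``mostly bookkeeping.'' A maximal independent set of integrals in involution (Corollary~\ref{leafcor}) gives the local Liouville--Arnold foliation by tori, not a global chart onto $\R^{2(N-1)}$. Surjectivity and injectivity of $\Omega$ restricted to each leaf $\Mba$ are global assertions and in~\cite{ahtk2} (as in the KdV case) they require a separate argument, e.g.~properness of the restricted map or an explicit inversion recovering $(b,a)$ from the Birkhoff coordinates. Treating this as residual formality is where the outline falls short of a proof.
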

In \cite{ahtk3} we used Theorem \ref{sumthm} to obtain a KAM theorem for Hamiltonian perturbations of the periodic Toda lattice.

\emph{Related work:} Theorem \ref{fundthm} and Theorem \ref{sumthm} improve on earlier work on the normal form of the periodic Toda lattice in \cite{bbgk, bggk}. In particular, we construct global Birkhoff coordinates on all of $\M$ instead of a single symplectic leaf and show that techniques recently developed for treating the KdV equation (cf. \cite{kama, kapo}) and the defocusing NLS equation (cf. \cite{gkp, mcva1}) can also be applied for the Toda lattice.

\emph{Outline of the paper:} In section \ref{tools} we review the Lax pair of the periodic Toda lattice and collect some auxiliary results on the spectrum of the Jacobi matrix $L(b,a)$ associated to an element $(b,a) \in \M$. In section \ref{coord} we study the action variables $(I_n)_{1 \leq n \leq N-1}$, and in section \ref{anglecoord} we define the angle variables $(\theta_n)_{1 \leq n \leq N-1}$ on $\M \setminus \cup_{n=1}^n D_n$ using holomorphic differentials defined on the hyperelliptic Riemann surface associated to the spectrum of $L(b,a)$. In sections \ref{gradsection} and \ref{orthrelsection} we establish formulas of the gradients of the actions and angles in terms of products of fundamental solutions and prove orthogonality relations between such products which are then used in section \ref{poissonchapter} to show that $(I_n)_{1 \leq n \leq N-1}$ and $(\theta_n)_{1 \leq n \leq N-1}$ are canonical variables and to prove Theorem \ref{fundthm} and Corollary \ref{leafcor}.

\section{Preliminaries} \label{tools}

It is well known (cf. e.g. \cite{toda}) that the system (\ref{flaeqn}) admits a Lax pair formulation $\dot{L}= \frac{\partial L}{\partial t} = [B, L]$, where $L \equiv L^+(b,a)$ is the periodic Jacobi matrix defined by
\begin{equation} \label{jacobi}
L^\pm(b,a) := \left( \begin{array}{ccccc}
b_1 & a_1 & 0 & \ldots & \pm a_N \\
a_1 & b_2 & a_2 & \ddots & \vdots \\
0 & a_2 & b_3 & \ddots & 0 \\
\vdots & \ddots & \ddots & \ddots & a_{N-1} \\
\pm a_N & \ldots & 0 & a_{N-1} & b_N \\
\end{array} \right),
\end{equation}
and $B$ the skew-symmetric matrix
\begin{displaymath}
B = \left( \begin{array}{ccccc}
0 & a_1 & 0 & \ldots & -a_N \\
-a_1 & 0 & a_2 & \ddots & \vdots \\
0 & -a_2 & \ddots & \ddots & 0 \\
\vdots & \ddots & \ddots & \ddots & a_{N-1} \\
a_N & \ldots & 0 & -a_{N-1} & 0 \\
\end{array} \right).
\end{displaymath}
Hence the flow of $\dot{L} = [B, L]$ is isospectral.
\begin{prop} \label{isosp}
For a solution $\big( b(t), a(t) \big)$ of the periodic Toda lattice (\ref{flaeqn}), the eigenvalues $(\l_j^+)_{1 \leq j \leq N}$ of $L\big( b(t), a(t) \big)$ are conserved quantities.
\end{prop}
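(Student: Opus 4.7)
The plan is to exploit the Lax equation $\dot L = [B,L]$ directly, producing an explicit time-dependent orthogonal conjugation between $L(t)$ and $L(0)$. This is the classical isospectrality argument for Lax pairs, and the skew-symmetry of $B$ is exactly what makes it work for real symmetric (and hence real eigenvalues) Jacobi matrices.

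First, I would consider the linear matrix ODE
\begin{equation*}
\dot U(t) = B(t)\,U(t), \qquad U(0)=\mathrm{Id},
\end{equation*}
where $B(t) := B\bigl(b(t),a(t)\bigr)$ is viewed as a continuous (in fact real analytic) $N\times N$ matrix-valued function of $t$ on the interval of existence of the Toda solution. Standard linear ODE theory gives a unique solution $U(t)$ on the same interval. Because $B(t)$ is skew-symmetric, $\frac{d}{dt}(U^T U) = U^T (B^T + B) U = 0$, so $U(t)^T U(t) = \mathrm{Id}$ for all $t$; that is, $U(t) \in O(N)$.

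Next, I would set $M(t) := U(t)\,L(0)\,U(t)^{-1}$ and check that $M$ satisfies the same Cauchy problem as $L$. Using $\dot U = BU$ and $\frac{d}{dt}U^{-1} = -U^{-1}\dot U\,U^{-1} = -U^{-1}B$, a short computation gives $\dot M = B M - M B = [B,M]$, with $M(0)=L(0)$. By uniqueness for this (non-autonomous linear) matrix ODE, $L(t) = U(t)\,L(0)\,U(t)^{-1}$ for all $t$. Since conjugation preserves the spectrum, the eigenvalues $\lambda_j^+\bigl(L(b(t),a(t))\bigr)$ are independent of $t$.

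The argument is essentially a one-liner modulo verifying the two claims (orthogonality of $U$ and the ODE satisfied by the conjugate), neither of which is a genuine obstacle. The only minor point to be careful about is global existence: one needs $B(t)$ to be defined on the whole time interval under consideration, which follows because $(b(t),a(t))$ is a Toda solution in $\M$ and the entries of $B$ are polynomial in $a$, so there is no blow-up issue for $U(t)$ beyond that of the Toda flow itself.
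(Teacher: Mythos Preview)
Your argument is correct and is exactly the standard isospectrality proof underlying the paper's one-line justification ``Hence the flow of $\dot L=[B,L]$ is isospectral''; the paper states the proposition without further proof, treating it as an immediate consequence of the Lax pair, and you have simply supplied the details.
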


Let us now collect a few results from \cite{moer} and \cite{toda} of the spectral theory of Jacobi matrices needed in the sequel. Denote by $\M^\C$ the complexification of the phase space $\M$,
\begin{displaymath}
\M^{\C} = \{ (b,a) \in \C^{2N} : \textrm{Re }a_j > 0 \quad \forall \, 1 \leq j \leq N \}.
\end{displaymath}
For $(b,a) \in \M^\C$ we consider for any complex number $\l$ the difference equation
\begin{equation} \label{diff}
(R_{b,a} y)(k) = \l y(k) \quad (k \in \Z)
\end{equation}
where $y(\cdot) = y(k)_{k \in \Z} \in \C^{\Z}$ and $R_{b,a}$ is the difference operator
\begin{equation} \label{rdef}
R_{b,a} = a_{k-1} S^{-1} + b_k S^0 + a_k S^1
\end{equation}
with $S^m$ denoting the shift operator of order $m \in \Z$, i.e.
\begin{displaymath}
(S^m y)(k) = y(k+m) \textrm{ for } k \in \Z.
\end{displaymath}

\emph{Fundamental solutions:} The two fundamental solutions $y_1(\cdot, \l)$ and $y_2(\cdot, \l)$ of (\ref{diff}) are defined by the standard initial conditions $y_1(0, \l) = 1$, $y_1(1, \l) = 0$ and $y_2(0, \l) = 0$, $y_2(1, \l) = 1$. They satisfy the \emph{Wronskian identity}
\begin{equation} \label{wronski}
W(n) := y_1(n, \l) y_2(n+1, \l) - y_1(n+1, \l) y_2(n, \l) = \frac{a_N}{a_n}.
\end{equation}
Note that for $n = N$ one gets
\begin{equation} \label{wronskispecial}
W(N) = 1.
\end{equation}
For each $k \in \N$, $y_i(k, \l, b, a)$, $i = 1,2$, is a polynomial in $\l$ of degree at most $k-1$ and depends real analytically on $(b,a)$ (see \cite{moer}). In particular, one easily verifies that $y_2(N+1, \l, b, a)$ is a polynomial in $\l$ of degree $N$ with leading coefficient $\a^{-N}$.

\emph{Wronskian:} More generally, one defines for any two sequences $(v(n))_{n \in \Z}$ and $(w(n))_{n \in \Z}$ the Wronskian sequence $\left( W(n) \right)_{n \in \Z} = \left( W(v,w)(n) \right)_{n \in \Z}$ by
\begin{displaymath}
W(n) := v(n) w(n+1) - v(n+1) w(n).
\end{displaymath}
Let us recall the following properties of the Wronskian, which can be easily verified.
\begin{lemma}
\begin{itemize}
\item[(i)] If $y$ and $z$ are solutions of (\ref{diff}) for $\l = \l_1$ and $\l = \l_2$, respectively, then $W = W(y,z)$ satisfies for any $k \in \Z$
\begin{equation} \label{wronski1}
a_k W(k) = a_{k-1}W(k-1) + (\l_2-\l_1)y(k)z(k).
\end{equation}

\item[(ii)] If $y(\cdot, \l)$ is a $1$-parameter-family of solutions of (\ref{diff}) which is continuously differentiable with respect to the parameter $\l$ and $\dot{y}(k,\l) := \frac{\partial}{\partial\l}y(k,\l)$, then $W = W(y,\dot{y})$ satisfies for any $k \in \Z$
\begin{equation} \label{wronski2}
a_k W(k) = a_{k-1}W(k-1) + y(k,\l)^2.
\end{equation}
\end{itemize}
\end{lemma}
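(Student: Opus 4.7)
The plan is to derive both identities directly from the three-term recurrence
$$a_{k-1}y(k-1) + b_k y(k) + a_k y(k+1) = \l y(k),$$
by the standard discrete Lagrange/Abel computation. The key observation is that, after crosswise multiplication and subtraction, the shift structure of $R_{b,a}$ produces exactly the telescoping combination $a_k W(k) - a_{k-1}W(k-1)$, while the diagonal $b_k$-terms cancel.

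For part (i), I would write out the two eigenvalue equations at $\l_1$ and $\l_2$, multiply the equation for $y$ by $z(k)$ and the one for $z$ by $y(k)$, and subtract. The $b_k$-contributions drop out and what remains regroups as
$$a_k\bigl[y(k+1)z(k) - y(k)z(k+1)\bigr] + a_{k-1}\bigl[y(k-1)z(k) - y(k)z(k-1)\bigr] = (\l_1 - \l_2)y(k)z(k).$$
The first bracket equals $-W(k)$ and the second equals $+W(k-1)$ with $W = W(y,z)$, and rearranging then yields (\ref{wronski1}). The only thing to be careful about is sign bookkeeping against the convention $W(n) = v(n)w(n+1) - v(n+1)w(n)$.

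For part (ii), the same scheme applies, but $\dot{y}$ is not itself a solution of (\ref{diff}). Differentiating $R_{b,a} y(\cdot,\l) = \l y(\cdot,\l)$ in $\l$ yields the inhomogeneous relation
$$R_{b,a}\dot{y}(\cdot,\l) = \l \dot{y}(\cdot,\l) + y(\cdot,\l).$$
Repeating the crosswise multiplication (equation for $y$ times $\dot{y}(k)$, minus equation for $\dot{y}$ times $y(k)$), the $\l y(k)\dot{y}(k)$ contributions cancel and the forcing produces $-y(k,\l)^2$ on the right, while the left telescopes to $a_k W(k) - a_{k-1}W(k-1)$ with $W = W(y,\dot{y})$. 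Rearranging gives (\ref{wronski2}). The argument is purely algebraic and holds for every $k \in \Z$ with no boundary issues, so no substantive obstacle is expected beyond the sign tracking already flagged in part (i).
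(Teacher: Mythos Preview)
Your argument is correct and is exactly the standard discrete Lagrange identity the paper has in mind; the paper itself omits the proof, merely noting that the identities ``can be easily verified.'' One cosmetic remark: in part~(ii), with the subtraction order you specify the left side is actually $a_{k-1}W(k-1)-a_kW(k)$ (the negative of what you wrote), but since the right side is $-y(k)^2$ this is harmless and the rearrangement still yields (\ref{wronski2}).
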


\emph{Discriminant:} We denote by $\Delta(\l) \equiv \Delta(\l, b, a)$ the \emph{discriminant} of (\ref{diff}), defined by
\begin{equation} \label{discrdef}
\Delta(\l) := y_1(N, \l) + y_2(N+1, \l).
\end{equation}
In the sequel, we will often write $\Delta_\l$ for $\Delta(\l)$. Note that $y_2(N+1, \l)$ is a polynomial in $\l$ of degree $N$ with leading term $\a^{-N} \l^N$, whereas $y_1(N, \l)$ is a polynomial in $\l$ of degree less than $N$, hence $\Delta(\l, b, a)$ is a polynomial in $\l$ of degree $N$ with leading term $\a^{-N} \l^N$, and it depends real analytically on $(b, a)$ (see e.g. \cite{toda}). According to Floquet's Theorem (see e.g. \cite{teschl2}), for $\l \in \C$ given, (\ref{diff}) admits a periodic or antiperiodic solution of period $N$ if the discriminant $\Delta(\l)$ satisfies $\Delta(\l) = +2$ or $\Delta(\l) = -2$, respectively. (These solutions correspond to eigenvectors of $L^+$ or $L^-$, respectively, with $L^\pm$ defined by (\ref{jacobi}).) It turns out to be more convenient to combine these two cases by considering the periodic Jacobi matrix $Q \equiv Q(b,a)$ of size $2N$ defined by
\begin{displaymath}
Q = \left( \begin{array}{cccc|cccc}
b_1 & a_1 & \ldots & 0 & 0 & \ldots & 0 & a_N \\
a_1 & b_2 & \ddots & \vdots & 0 & \ldots & & 0 \\
\vdots & \ddots & \ddots & a_{N-1} & \vdots & & & \vdots \\
0 & \ddots & \;\; a_{N-1} & b_N & a_N & \ldots & 0 & 0 \\
\hline
0 & \ldots & 0 & a_N & b_1 & a_1 & \ldots & 0 \\
0 & \ldots & & 0 & a_1 & b_2 & \ddots & \vdots \\
\vdots & & & \vdots & \vdots & \ddots & \ddots & a_{N-1} \\
a_N & \ldots & 0 & 0 & 0 & \ddots & \;\; a_{N-1} & b_N \\
\end{array} \right).
\end{displaymath}
Then the spectrum of the matrix $Q$ is the union of the spectra of the matrices $L^+$ and $L^-$ and therefore the zero set of the polynomial $\Delta^2_\l - 4$.  The function $\Delta^2_\l - 4$ is a polynomial in $\l$ of degree $2N$ and admits a product representation
\begin{equation} \label{delta2lrepr}
  \Delta^2_\l - 4 = \alpha^{-2N} \prod_{j=1}^{2N} (\l - \l_j).
\end{equation}
The factor $\a^{-2N}$ in (\ref{delta2lrepr}) comes from the above mentioned fact that the leading term of $\Delta(\l)$ is $\a^{-N} \l^N$.


\begin{figure}
  \subfigure[$N=4$]{
  \label{fig:discr4}
  \begin{minipage}[c]{0.5\linewidth}
    \centering \includegraphics[width=1.5in]{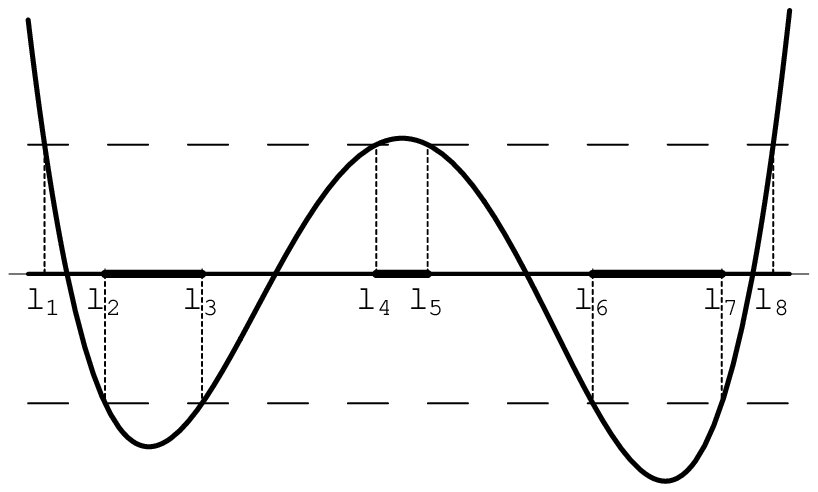}
  \end{minipage}}
  \subfigure[$N=5$]{
  \label{fig:discr5}
  \begin{minipage}[c]{0.5\linewidth}
    \centering \includegraphics[width=2in]{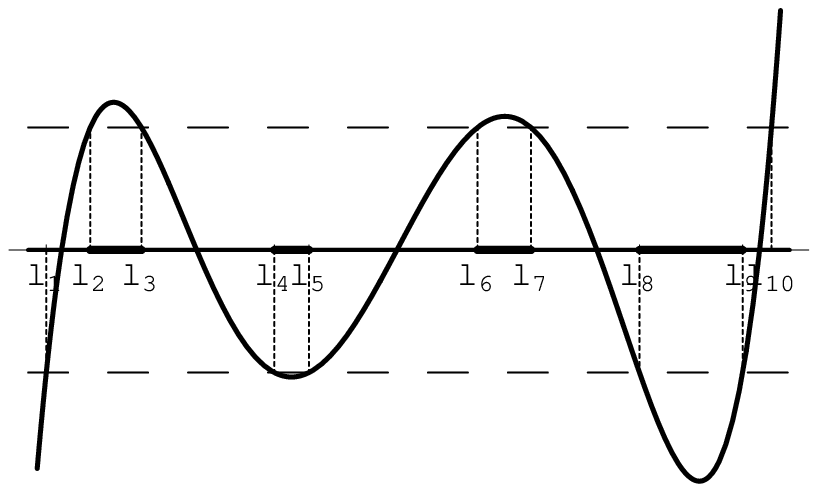}
  \end{minipage}}
  \caption{Examples of the discriminant $\Delta(\l)$}
\end{figure}

For any $(b,a) \in \M$, the matrix $Q$ is symmetric and hence the eigenvalues $(\l_j)_{1 \leq j \leq 2N}$ of $Q$ are real. When listed in increasing order and with their algebraic multiplicities, they satisfy the following relations (cf. \cite{moer})
\begin{displaymath}
\l_1 < \l_2 \leq \l_3 < \l_4 \leq \l_5 < \ldots \l_{2N-2} \leq \l_{2N-1} < \l_{2N}.
\end{displaymath}
As explained above, the $\l_j$ are periodic or antiperiodic eigenvalues of $L$ and thus eigenvalues of $L^+$ or $L^-$ according to whether $\Delta(\l) = 2$ or $\Delta(\l) = -2$. One has (cf. \cite{moer})
\begin{equation} \label{deltalambdapm2}
  \Delta(\l_1) = (-1)^N \cdot 2, \quad \Delta(\l_{2n}) = \Delta(\l_{2n+1}) = (-1)^{n+N} \cdot 2, \quad \Delta(\l_{2N}) = 2.
\end{equation}

Since $\Delta_\l$ is a polynomial of degree $N$, $\dot{\Delta}_\l \equiv \dot{\Delta}(\l) = \frac{d}{d\l} \Delta(\l)$ is a polynomial of degree $N-1$, and it admits a product representation of the form
\begin{equation} \label{dotlrepr}
  \dot{\Delta}_\l = N \a^{-N} \prod_{k=1}^{N-1} (\l - \dot{\l}_k).
\end{equation}
The zeroes $(\dot{\l}_n)_{1 \leq n \leq N-1}$ of $\dot{\Delta}_\l$ satisfy $\l_{2n} \leq \dot{\l}_n \leq \l_{2n+1}$ for any $1 \leq n \leq N-1$. The intervals $(\l_{2n}, \l_{2n+1})$ are referred to as the \emph{$n$-th spectral gap} and $\gamma_n := \l_{2n+1} - \l_{2n}$ as the \emph{$n$-th gap length}. Note that $|\Delta(\l)| > 2$ on the spectral gaps. We say that the $n$-th gap is \emph{open} if $\gamma_n > 0$ and \emph{collapsed} otherwise. The set of elements $(b,a) \in \M$ for which the $n$-th gap is collapsed is denoted by $D_n$,
\begin{equation} \label{dndef}
D_n := \{ (b,a) \in \M : \gamma_n = 0 \}.
\end{equation}
By writing the condition $\gamma_n = 0$ as $\gamma_n^2 = 0$ and exploiting the fact that $\gamma_n^2$ (unlike $\gamma_n$) is a real analytic function on $\M$, it can be shown as in \cite{kapo} that $D_n$ is a real analytic submanifold of $\M$ of codimension $2$.

\emph{Isolating neighborhoods:}
Let $(b,a) \in \M$ be given. The strict inequality $\l_{2n-1} < \l_{2n}$ guarantees the existence of a family of mutually disjoint open subsets $(U_n)_{1 \leq n \leq N-1}$ of $\C$ so that for any $1 \leq n \leq N-1$, $U_n$ is a neighborhood of the closed interval $[\l_{2n}, \l_{2n+1}]$. Such a family of neighborhoods is referred to as a family of \emph{isolating neighborhoods} (for $(b,a)$).

In the case where $(b,a) \in \M^\C$, we list the eigenvalues $(\l_j)_{1 \leq j \leq 2N}$ in lexicographic ordering\footnote{The lexicographic ordering $a \prec b$ for complex numbers $a$ and $b$ is defined by
\begin{equation} \label{lexorder}
a \prec b \quad :\Longleftrightarrow \quad \left\{
\begin{array}{l} \textrm{Re
}a < \textrm{Re }b \\
\textrm{or} \\
\textrm{Re }a = \textrm{Re }b \textrm{ and } \textrm{Im }a \leq
\textrm{Im }b. \end{array} \right.
\end{equation}
}
\begin{displaymath}
  \l_1 \prec \l_2 \prec \l_3 \prec \ldots \prec \l_{2N}.
\end{displaymath}
We then extend the gap lenghts $\gamma_n$ to all of $\M^\C$ by
\begin{displaymath}
  \gamma_n := \l_{2n+1} - \l_{2n} \quad (1 \leq n \leq N-1)
\end{displaymath}
and define
\begin{equation} \label{dncdef}
  D_n^\C := \{ (b,a) \in \W : \gamma_n = 0 \}.
\end{equation}
In the sequel, we will omit the superscript and always write $D_n$ for $D_n^\C$.

Similarly, we do this for the zeroes $(\dot{\l}_n)_{1 \leq n \leq N-1}$ of $\dot{\Delta}_\l$. The $\l_i$'s and $\dot{\l}_i$'s no longer depend continuously on $(b,a) \in \M^\C$. However, if we choose a small enough complex neighborhood $\W$ of $\M$ in $\M^\C$, then for any $(b,a) \in \W$ the closed intervals $G_n \subseteq \C$ ($1 \leq n \leq N-1$) defined by
\begin{equation} \label{gndef}
G_n := \{ (1-t) \l_{2n} + t \l_{2n+1}: 0 \leq t \leq 1 \}
\end{equation}
 are pairwise disjoint, and hence, as in the real case, there exists a family of isolating neighborhoods $(U_n)_{1 \leq n \leq N-1}$.

\begin{lemma} \label{wmungnlemma}
There exists a neighborhood $\W$ of $\M$ in $\M^\C$ such that for any $(b,a) \in \W$, there are neighborhoods $U_n$ of $G_n$ in $\C$ ($1 \leq n \leq N-1$) which are pairwise disjoint.
\end{lemma}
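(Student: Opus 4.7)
The plan is to assemble $\W$ from local pieces by exploiting the continuity of the roots of $\Delta^2_\l - 4$. Since the coefficients of this polynomial depend holomorphically on $(b,a) \in \M^\C$ and its degree $2N$ is locally constant (the leading coefficient $\a^{-2N}$ is nonvanishing on $\M^\C$ because $\mathrm{Re}\, a_j > 0$ there), the unordered multiset of roots varies continuously with $(b,a)$ in the Hausdorff metric on finite subsets of $\C$. This reduces the problem to a local construction around each real base point.

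Fix $(b_0, a_0) \in \M$. The strict interlacing relations $\l_{2n-1}(b_0, a_0) < \l_{2n}(b_0, a_0)$ and $\l_{2n+1}(b_0, a_0) < \l_{2n+2}(b_0, a_0)$ allow me to choose, for $1 \leq n \leq N-1$, pairwise disjoint open convex sets $V_n \subset \C$ (for instance, small open rectangles centered on the real axis) such that each $V_n$ contains the closed real interval $[\l_{2n}(b_0, a_0), \l_{2n+1}(b_0, a_0)]$ in its interior, the real projections $\mathrm{Re}\, V_n \subset \R$ are disjoint and appear in increasing order in $n$, and such that $\l_1(b_0, a_0)$ lies strictly to the left of all $\mathrm{Re}\, V_n$ while $\l_{2N}(b_0, a_0)$ lies strictly to the right. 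By the continuity of roots, there exists an open neighborhood $\W_{(b_0, a_0)}$ of $(b_0, a_0)$ in $\M^\C$ with the property that for every $(b,a) \in \W_{(b_0, a_0)}$ each $V_n$ contains exactly two roots of $\Delta^2_\l - 4$ counted with multiplicity, precisely one remaining root lies to the left of every $V_n$, and precisely one remaining root lies to the right. Since the real projections of the $V_n$ are pairwise disjoint and strictly ordered, the lexicographic ordering (\ref{lexorder}) places the two roots contained in $V_n$ into consecutive positions in the $2N$-tuple $\l_1 \prec \ldots \prec \l_{2N}$, which forces them to be $\l_{2n}(b,a)$ and $\l_{2n+1}(b,a)$. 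Convexity of $V_n$ then gives $G_n \subset V_n$, so the sets $U_n := V_n$ are a family of pairwise disjoint isolating neighborhoods on $\W_{(b_0, a_0)}$. Finally, $\W := \bigcup_{(b_0, a_0) \in \M} \W_{(b_0, a_0)}$ is the required open neighborhood of $\M$ in $\M^\C$.

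The main obstacle is that the lexicographic ordering is \emph{not} in general continuous in $(b,a) \in \M^\C$: the labels of two eigenvalues can swap whenever their real parts cross. The remedy is built into the choice of the $V_n$. By separating their real projections from the outset, any sufficiently small complex perturbation moves roots only within their originally assigned $V_n$, so the order of the real projections of the $V_n$ forces the lexicographic labels to be compatible with the partition of the root set induced by the $V_n$. Once this labeling is under control, the rest of the argument is a routine application of continuity of roots together with convexity of each $V_n$.
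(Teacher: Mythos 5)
Your proof is correct. The paper does not actually present a detailed argument for this lemma---it merely motivates it by appealing to the continuity of the roots of $\Delta^2_\l - 4$ and the persistence of the real gap structure under small complex perturbation---so your proposal supplies exactly the missing details, and the key technical point you isolate (choosing the boxes $V_n$ at the real base point with pairwise disjoint and increasing real projections, so that the lexicographic labels cannot migrate between boxes under small complex perturbations) is the right way to handle the discontinuity of the lexicographic ordering that the paper acknowledges but leaves implicit.
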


\begin{remark}
In the sequel, we will have to shrink the complex neighborhood $\W$ several times, but continue to denote it by the same letter.
\end{remark}

\emph{Contours $\Gamma_n$:} For any $(b,a) \in \W$ and any $1 \leq n \leq N-1$, we denote by $\Gamma_n$ a circuit in $U_n$ around $G_n$ with counterclockwise orientation.

\emph{Isospectral set:} For $(b,a) \in \M$, the set Iso$(b,a)$ of all elements $(b',a') \in \M$ so that $Q(b',a')$ has the same spectrum as $Q(b,a)$ is described with the help of the Dirichlet eigenvalues $\mu_1 < \mu_2 < \ldots < \mu_{N-1}$ of (\ref{diff}) defined by
\begin{equation} \label{mundef}
y_1(N+1, \mu_n) = 0.
\end{equation}
They coincide with the eigenvalues of the $(N-1) \times (N-1)$-matrix $L_2 = L_2(b,a)$ given by
$$ \left( \begin{array}{ccccc}
b_2 & a_2 & 0 & \ldots & 0  \\
a_2 & \ddots & \ddots & \ddots & \vdots  \\
0 & \ddots & \ddots & \ddots & 0 \\
\vdots & \ddots & \ddots & \ddots & a_{N-1} \\
0 & \ldots & 0 & a_{N-1} & b_N \\
\end{array} \right). $$
In the sequel, we will also refer to $\mu_1, \ldots, \mu_{N-1}$ as the Dirichlet eigenvalues of $L(b,a)$. Evaluating the Wronskian identity (\ref{wronski}) at $\l = \mu_n$ one sees that $\mu_n$ lies in the closure of the $n$-th spectral gap. More precisely, substituting $y_1(N+1, \mu_n) = 0$ in the identity (\ref{wronski}) with $\l = \mu_n$ yields
\begin{equation} \label{wrmu}
y_1(N,\mu_n) y_2(N+1,\mu_n) = 1.
\end{equation}
Hence the value of the discriminant at $\mu_n$ is given by
\begin{equation} \label{discdir}
\Delta(\mu_n) = y_2(N+1,\mu_n) + \frac{1}{y_2(N+1,\mu_n)}
\end{equation}
and $|\Delta(\mu_n)| \geq 2$. By Lemma \ref{speclemma} below, given the point $(b,a)$ with $b_1 = \ldots = b_N = \b$ and $a_1 = \ldots = a_N = \a$, one has $\l_{2n} = \l_{2n+1}$ and hence $\mu_n = \l_{2n}$ for any $1 \leq n \leq N-1$. It then follows from a straightforward deformation argument that $\l_{2n} \leq \mu_n \leq \l_{2n+1}$ everywhere in the real space $\M$.

Conversely, according to van Moerbeke \cite{moer}, given any (real) Jacobi matrix $Q$ with spectrum $\l_1 < \l_2 \leq \l_3 < \l_4 \leq \l_5 < \ldots \l_{2N-2} \leq \l_{2N-1} < \l_{2N}$ and any sequence $(\mu_n)_{1 \leq n \leq N-1}$ with $\l_{2n} \leq \mu_n \leq \l_{2n+1}$ for $n=1, \ldots, N-1$, there are exactly $2^r$ $N$-periodic Jacobi matrices $Q$ with spectrum $(\l_n)_{1 \leq n \leq 2N}$ and Dirichlet spectrum $(\mu_n)_{1 \leq n \leq N-1}$, where $r$ is the number of $n$'s with $\l_{2n} < \mu_n < \l_{2n+1}$.

In the case where $(b,a) \in \M^\C$, we continue to define the Dirichlet eigenvalues $(\mu_n)_{1 \leq n \leq N-1}$ by (\ref{mundef}), and we list them in lexicographic ordering $\mu_1 \prec \mu_2 \prec  \ldots \prec \mu_{N-1}$.
Then the $\mu_i$'s no longer depend continuously on $(b,a) \in \M^\C$. However, if we choose the complex neighborhood $\W$ of $\M$ in $\M^\C$ of Lemma \ref{wmungnlemma} small enough, then for any $(b,a) \in \W$ and $1 \leq n \leq N-1$, $\mu_n$ is contained in the neighborhood $U_n$ of $G_n$ (but not necessarily in $G_n$ itself).

\emph{Riemann surface $\Sigma_{b,a}$:} Denote by $\Sigma_{b,a}$ the Riemann surface obtained as the compactification of the affine curve $\mathcal{C}_{b,a}$ defined by
\begin{equation} \label{algcurve}
\{ (\l,z) \in \C^2 : z^2 = \Delta^2(\l, b, a) - 4 \}.
\end{equation}
Note that $\mathcal{C}_{b,a}$ and $\Sigma_{b,a}$ are spectral invariants. (Strictly speaking, $\Sigma_{b,a}$ is a Riemann surface only if the spectrum of $Q(b,a)$ is simple - see e.g. Appendix A in \cite{teschl2} for details in this case. If the spectrum of $Q(b,a)$ is \emph{not} simple, $\Sigma(b,a)$ becomes a Riemann surface after doubling the multiple eigenvalues - see e.g. section $2$ of \cite{kato}.)

\emph{Dirichlet divisors:} To the Dirichlet eigenvalue $\mu_n$ ($1 \leq n \leq N-1$) we associate the point $\mu_n^{*}$ on the surface $\Sigma_{b,a}$,
\begin{equation} \label{munstarred}
\mu_n^* := \left( \mu_n, \sqrt[*]{\Delta^2_{\mu_n} - 4} \right) \; \textrm{with} \;\; \sqrt[*]{\Delta^2_{\mu_n} - 4} := y_1(N, \mu_n) - y_2(N+1, \mu_n),
\end{equation}
where we used that, in view of (\ref{wrmu}) and the Wronskian identiy (\ref{wronskispecial}),
\begin{displaymath}
\Delta^2_{\mu_n} - 4 = \big( y_1(N, \mu_n) - y_2(N+1, \mu_n) \big)^2.
\end{displaymath}

\emph{Standard root:} The standard root or $s$-root for short, $\sqrt[s]{1 - \l^2}$, is defined for $\l \in \C \setminus [-1,1]$ by
\begin{equation} \label{sroot}
  \sqrt[s]{1 - \l^2} := i \l \sqrt[+]{1 - \l^{-2}}.
\end{equation}
More generally, we define for $\l \in \C \setminus \{ ta+(1-t)b \, | \, 0 \leq t \leq 1 \}$ the $s$-root of a radicand of the form $(b - \l)(\l - a)$ with $a \prec b, a \neq b$ by
\begin{equation} \label{sroot2}
\sqrt[s]{(b - \l)(\l - a)} := \frac{\gamma}{2} \sqrt[s]{1 - w^2},
\end{equation}
where $\gamma := b-a$, $\tau := \frac{b+a}{2}$ and $w := \frac{\l - \tau}{\gamma/2}$.

\emph{Canonical sheet and canonical root:} For $(b,a) \in \M$ the canonical sheet of $\Sigma_{b,a}$ is given by the set of points $(\l, \sqrt[c]{\Delta_\l^2 - 4})$ in $\mathcal{C}_{b,a}$, where the $c$-root $\sqrt[c]{\Delta_\l^2 - 4}$ is defined on $\C \setminus \bigcup_{n=0}^{N} (\l_{2n}, \l_{2n+1})$ (where $\l_0 := -\infty$ and $\l_{2N+1} := \infty$) and determined by the sign condition
\begin{equation} \label{croot}
-i \sqrt[c]{\Delta_\l^2 - 4} > 0 \quad \textrm{for} \quad \l_{2N-1} < \l < \l_{2N}.
\end{equation}
As a consequence one has for any $1 \leq n \leq N$
\begin{equation} \label{croot2}
\textrm{sign} \; \sqrt[c]{\Delta_{\l - i0}^2 - 4} = (-1)^{N+n-1} \quad \textrm{for} \quad \l_{2n} < \l < \l_{2n+1}.
\end{equation}
The definition of the canonical sheet and the $c$-root can be extended to the neighborhood $\W$ of $\M$ in $\M^\C$ of Lemma \ref{wmungnlemma}.

The $s$-root and the $c$-root will be used together in the following way: By the product representations (\ref{dotlrepr}) and (\ref{delta2lrepr}) of $\dot{\Delta}_\l$ and $\Delta^2_\l-4$, respectively, one sees that for any $(b,a)$ in $\W \setminus D_n$ with $1 \leq n \leq N-1$,
\begin{equation} \label{ddeltadelta2exp}
  \frac{\dot{\Delta}_\l}{\sqrt[c]{\Delta^2_\l-4}} = \frac{N (\l - \dot{\l}_n)}{\sqrt[s]{(\l_{2n+1} - \l)(\l - \l_{2n})}} \chi_n(\l) \quad \forall \l \in \Gamma_n
\end{equation}
where
\begin{equation} \label{chindef}
  \chi_n(\l) = \frac{(-1)^{N+n-1}}{\sqrt[+]{(\l-\l_1)(\l_{2N}-\l)}} \prod_{m \neq n} \frac{\l-\dot{\l}_m} {\sqrt[+]{(\l-\l_{2m+1})(\l-\l_{2m})}}.
\end{equation}
Note that the principal branches of the square roots in (\ref{chindef}) are well defined for $\l$ near $G_n$ and that the function $\chi_n$ is analytic and nonvanishing on $U_n$. In addition, for $(b,a)$ real, $\chi_n$ is nonnegative on the interval $(\l_{2n}, \l_{2n+1})$.

\emph{Abelian differentials:} Let $(b,a) \in \M$ and $1 \leq n \leq N-1$. Then there exists a unique polynomial $\psi_n(\l)$ of degree at most $N-2$ such that for any $1 \leq k \leq N-1$
\begin{equation} \label{psi}
\frac{1}{2\pi} \int_{c_k} \frac{\psi_n(\l)}{\sqrt{\Delta^2_\l-4}} \, d\l = \delta_{kn}.
\end{equation}
Here, for any $1 \leq k \leq N-1$, $c_k$ denotes the lift of the contour $\Gamma_k$ to the canonical sheet of $\Sigma_{b,a}$. For any $k \neq n$ with $\l_{2k} \neq \l_{2k+1}$, it follows from (\ref{psi}) that
\begin{equation} \label{psiproperty}
\frac{1}{\pi} \int_{\l_{2k}}^{\l_{2k+1}} \frac{\psi_n(\l)}{\sqrt[+]{\Delta^2_\l - 4}} \, d\l = 0.
\end{equation}
Hence in every gap $(\l_{2k}, \l_{2k+1})$ with $k \neq n$ the polynomial $\psi_n$ has a zero which we denote by $\sigma_k^n$. 
If $\l_{2k} = \l_{2k+1}$ then it follows from (\ref{psi}) and Cauchy's theorem that $\sigma_k^n = \l_{2k} = \l_{2k+1}$. As $\psi_n(\l)$ is a polynomial of degree at most $N-2$, one has
\begin{equation} \label{psiprodrepr}
  \psi_n(\l) = M_n \prod_{1 \leq k \leq N-1 \atop k \neq n} (\l - \sigma_k^n),
\end{equation}
where $M_n \equiv M_n(b,a) \neq 0$.

In a straightforward way one can prove that there exists a neighborhood $\W$ of $\M$ in $\M_{\C}$, so that for any $(b,a) \in \W$ and any $1 \leq n \leq N-1$, there is a unique polynomial $\psi_n(\l)$ of degree at most $N-2$ satisfying (\ref{psi}) for any $1 \leq k \leq N-1$ as well as the product representation (\ref{psiprodrepr}), and so that the zeroes are analytic functions on $\W$.

\begin{lemma} \label{sigmaproplemma}
Let $1 \leq n \leq N-1$ be fixed. Then the zeroes $(\dot{\l}_k)_{1 \leq k \leq N-1}$ of $\dot{\Delta}(\l)$ and $(\sigma_k^n)_{1 \leq k \leq N-1, k \neq n}$ of $\psi_n(\l)$ satisfy the estimates
\begin{eqnarray}
  \dot{\l}_k - \tau_k & = & O(\gamma_k^2), \label{dotlambdatau} \\
  \sigma_k^n - \tau_k & = & O(\gamma_k^2). \label{sigmaproperty}
\end{eqnarray}
near any given point $(b,a) \in \mathcal{W}$, where $\tau_k = \frac{1}{2}(\l_{2k+1} + \l_{2k})$.
\end{lemma}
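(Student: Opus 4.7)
The plan is to derive both estimates from a vanishing-integral identity on the segment $G_k$, combined with a Taylor expansion of the remaining analytic factor about the midpoint $\tau_k$. First I would dispose of the degenerate case $\gamma_k = 0$: then $\l_{2k} = \l_{2k+1} = \tau_k$ forces $\dot{\l}_k = \tau_k$, and (\ref{psi}) together with Cauchy's theorem applied to a collapsed contour yields $\sigma_k^n = \tau_k$, so both estimates hold trivially. In what follows I assume $\gamma_k \neq 0$ and shrink $\W$ (using Lemma \ref{wmungnlemma}) so that $\dot{\l}_m \in U_m$ for every $m$ and $\sigma_j^n \in U_j$ for every $j \neq n$; this guarantees that the auxiliary factors appearing below are analytic and nonvanishing on a uniform neighborhood of $G_k$.

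For the estimate on $\dot{\l}_k$, I would invoke $\Delta(\l_{2k}) = \Delta(\l_{2k+1})$ (from (\ref{deltalambdapm2}), extended to $\W$ by analyticity and continuity) to obtain $\int_{G_k}\dot{\Delta}(\l)\,d\l = 0$. By the product representation (\ref{dotlrepr}), factor $\dot{\Delta}(\l) = N\a^{-N}(\l - \dot{\l}_k)\,p(\l)$ with $p(\l) := \prod_{m \neq k}(\l - \dot{\l}_m)$. Parameterizing $\l = \tau_k + \tfrac{\gamma_k}{2}t$ for $t \in [-1,1]$ and Taylor-expanding $p$ around $\tau_k$, the elementary identities $\int_{-1}^1 t\,dt = 0$ and $\int_{-1}^1 t^2\,dt = 2/3$ reduce the vanishing identity to
\begin{equation*}
(\tau_k - \dot{\l}_k)\gamma_k p(\tau_k) + \tfrac{\gamma_k^3}{12}p'(\tau_k) + O\bigl(\gamma_k^3(\tau_k - \dot{\l}_k)\bigr) + O(\gamma_k^4) = 0,
\end{equation*}
from which $\dot{\l}_k - \tau_k = O(\gamma_k^2)$ follows upon dividing by $\gamma_k p(\tau_k)$.

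For the estimate on $\sigma_k^n$ with $k \neq n$, the defining relation (\ref{psi}) reads $\oint_{\Gamma_k}\psi_n(\l)/\sqrt[c]{\Delta^2_\l - 4}\,d\l = 0$. Using (\ref{delta2lrepr}), I factor $\sqrt[c]{\Delta^2_\l - 4} = \sqrt[s]{(\l_{2k+1}-\l)(\l - \l_{2k})}\,\eta_k(\l)$ with $\eta_k$ analytic and nonvanishing near $G_k$, and from (\ref{psiprodrepr}) write $\psi_n(\l) = M_n(\l - \sigma_k^n)\,q(\l)$ with $q(\l) = \prod_{j \neq n,k}(\l - \sigma_j^n)$. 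Setting $h := q/\eta_k$, the identity becomes
\begin{equation*}
0 = \oint_{\Gamma_k}\frac{(\l - \sigma_k^n)\,h(\l)}{\sqrt[s]{(\l_{2k+1}-\l)(\l-\l_{2k})}}\,d\l.
\end{equation*}
Substituting $\l = \tau_k + \tfrac{\gamma_k}{2}w$ (so that by (\ref{sroot2}) the $s$-root becomes $\tfrac{\gamma_k}{2}\sqrt[s]{1-w^2}$) and Taylor-expanding $h$ around $\tau_k$, I evaluate the standard contour integrals $\oint dw/\sqrt[s]{1-w^2} = 2\pi$, $\oint w\,dw/\sqrt[s]{1-w^2} = 0$, $\oint w^2\,dw/\sqrt[s]{1-w^2} = \pi$ via the residue at $w = \infty$ using $\sqrt[s]{1-w^2} = iw\bigl(1 - \tfrac{1}{2}w^{-2} + O(w^{-4})\bigr)$. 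This produces the balance
\begin{equation*}
2\pi h(\tau_k)(\tau_k - \sigma_k^n) + \tfrac{\pi \gamma_k^2}{4}h'(\tau_k) = O\bigl(\gamma_k^2(\tau_k - \sigma_k^n)\bigr) + O(\gamma_k^3),
\end{equation*}
from which $\sigma_k^n - \tau_k = O(\gamma_k^2)$ follows since $h(\tau_k) \neq 0$.

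The main delicacy will be the bookkeeping of signs and branches when factoring $\sqrt[c]{\Delta^2_\l - 4}$ through the $s$-root near $G_k$ (in particular checking that $\eta_k$ is single-valued and nonzero on a uniform neighborhood, using (\ref{croot}) and that the other $\l_j$'s lie outside $U_k$), together with the uniform analyticity of $p$, $q$, $\eta_k$ over the complex parameter $(b,a) \in \W$. Once these holomorphy properties are secured, both Taylor computations proceed routinely.
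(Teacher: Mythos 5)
Your proof is correct, and the overall strategy -- factor out the elementary root over $G_k$ and analyze the remaining analytic factor near $\tau_k$ -- matches the paper's, but each half takes a genuinely different path worth noting. For (\ref{dotlambdatau}) the paper simply differentiates the factorization $\Delta_\l^2 - 4 = (\l-\l_{2k})(\l_{2k+1}-\l)\,p_k(\l)$ in $\l$ and evaluates at $\l = \dot{\l}_k$, where the left side vanishes; rearranging gives
\begin{equation*}
\dot{\l}_k - \tau_k = \frac{(\dot{\l}_k - \l_{2k})(\l_{2k+1} - \dot{\l}_k)}{2}\,\frac{p_k'(\dot{\l}_k)}{p_k(\dot{\l}_k)} = O(\gamma_k^2)
\end{equation*}
in a single line. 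You instead integrate $\dot{\Delta}$ over $G_k$ using $\Delta(\l_{2k}) = \Delta(\l_{2k+1})$, factor via (\ref{dotlrepr}), and Taylor-expand; this is longer but does not presuppose that $\dot{\l}_k$ lies inside the interval, which makes it marginally more robust on the complexified domain $\W$. For (\ref{sigmaproperty}) both you and the paper start from the identical factorization (\ref{psideltaexp}) and the vanishing contour integral (\ref{psiexp}). The paper then runs a two-step bootstrap: the first-order splitting $\zeta_k^n(\l) = \zeta_k^n(\tau_k) + \bigl(\zeta_k^n(\l)-\zeta_k^n(\tau_k)\bigr)$ together with the Lipschitz bound $|\zeta_k^n(\l)-\zeta_k^n(\tau_k)| = O(\gamma_k)$ on $G_k$ and the a priori bound $\sup_{G_k}|\l-\sigma_k^n| = O(1)$ give $\sigma_k^n - \tau_k = O(\gamma_k)$, which is then fed back in to upgrade $\sup_{G_k}|\l-\sigma_k^n|$ to $O(\gamma_k)$ and hence the estimate to $O(\gamma_k^2)$. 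You instead do a single second-order Taylor expansion and evaluate the moments $\oint w^j/\sqrt[s]{1-w^2}\,dw$ for $j=0,1,2$ directly, which immediately produces the $O(\gamma_k^2)$ balance in one pass. Both are sound: the paper's bootstrap avoids computing the $w^2$ moment, while your one-step expansion is more explicit but needs the uniform Taylor-remainder bound on the rescaled contour, which you correctly secure by shrinking $\W$ so that the auxiliary factor is analytic and nonvanishing on a fixed neighborhood of $G_k$.
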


\begin{proof}
To verify (\ref{dotlambdatau}), write $\Delta^2_\l - 4$ in the form
\begin{equation} \label{deltafactor}
\Delta^2_\l - 4 = (\l - \l_{2n}) (\l_{2n+1} - \l) p_n(\l)
\end{equation}
where $p_n$ is a polynomial which does not vanish for $\l \in U_n$. Then (\ref{dotlambdatau}) follows by differentiating (\ref{deltafactor}) with respect to $\l$ at $\dot{\l}_n$.

Fix $1 \leq k,n \leq N-1$ with $k \neq n$. In a first step we prove that $\sigma_k^n - \tau_k = O(\gamma_k)$ near any given point $(b,a) \in \W$. If $\gamma_k = 0$, then $\sigma_k^n = \tau_k$, and (\ref{sigmaproperty}) is clearly fulfilled. Hence we assume in the sequel that $\gamma_k \neq 0$. By the product formulas (\ref{psiprodrepr}) and (\ref{delta2lrepr}) for $\psi_n(\l)$ and $\Delta_\l^2 - 4$, respectively, we obtain, for $\l$ near $G_k$,
\begin{equation} \label{psideltaexp}
  \frac{\psi_n(\l)}{\sqrt[c]{\Delta^2_\l-4}} = \frac{\l - \sigma_k^n}{\sqrt[s]{(\l_{2k+1} - \l)(\l - \l_{2k})}} \zeta_k^n(\l)
\end{equation}
where
\begin{equation} \label{zetakmdef}
\zeta_k^n(\l) = \frac{M_n'(b,a)}{(\l - \sigma_n^n) \sqrt[+]{(\l - \l_1)(\l_{2N} - \l)}} \prod_{m \neq k} \frac{\l - \sigma_m^n}{\sqrt[+]{(\l_{2m+1} - \l)(\l_{2m} - \l)}},
\end{equation}
with $\sigma_n^n := \tau_n$ and $M_n'(b,a) \neq 0$. The function $\zeta_k^n$ is analytic and nonvanishing in $U_k$. Substituting (\ref{psideltaexp}) into (\ref{psi}) one gets
\begin{equation} \label{psiexp}
  \frac{1}{2\pi} \int_{\Gamma_k} \frac{\l - \sigma_k^n}{\sqrt[s]{(\l_{2k+1} - \l)(\l - \l_{2k})}} \zeta_k^n(\l) d\l = 0.
\end{equation}
We now drop the superscript $n$ for the remainder of this proof and write $\zeta_k$ as $\zeta_k(\l) = \xi_k + (\zeta_k(\l) - \xi_k)$ with $\xi_k := \zeta_k(\tau_k) \neq 0$. Note that
\begin{displaymath}
  \frac{1}{2\pi} \int_{\Gamma_k} \frac{\l - \sigma_k}{\sqrt[s]{(\l_{2k+1} - \l)(\l - \l_{2k})}} \, d\l = \tau_k - \sigma_k
\end{displaymath}
and hence (\ref{psiexp}) becomes
\begin{equation} \label{stx}
  (\sigma_k - \tau_k) \xi_k = \frac{1}{2\pi} \int_{\Gamma_k} \frac{(\l - \sigma_k) (\zeta_k(\l) - \xi_k)}{\sqrt[s]{(\l_{2k+1} - \l)(\l - \l_{2k})}} \, d\l.
\end{equation}

To estimate the integral on the right hand side of (\ref{stx}), note that
\begin{equation} \label{genestimate}
  \left| \frac{1}{2\pi} \int_{\Gamma_k} \frac{f(\l)}{\sqrt[s]{(\l_{2k+1} - \l)(\l - \l_{2k})}} \, d\l \right| \leq \max_{\l \in G_k} |f(\l)|
\end{equation}
for an arbitrary function $f$ analytic on $U_k$. We want to apply (\ref{genestimate}) for $f(\l) = (\l - \sigma_k) (\zeta_k(\l) - \xi_k)$. Note that for $\l \in G_k$,
\begin{displaymath}
  |\zeta_k(\l) - \xi_k| = |\zeta_k(\l) - \zeta_k(\tau_k)| \leq M |\gamma_k|,
\end{displaymath}
where $M = \sup \bigcup_{1 \leq k \leq N-1} \{ |\zeta_k(\l)|: \l \in G_k \}$. Hence (\ref{genestimate}) leads to
\begin{displaymath}
  |\sigma_k - \tau_k| |\xi_k| = \sup_{\l \in G_k} |\l - \sigma_k| \, O(\gamma_k).
\end{displaymath}
Dividing by $|\xi_k| \neq 0$, we get
\begin{equation} \label{sigmatau1}
  |\sigma_k - \tau_k| = \sup_{\l \in G_k} |\l - \sigma_k| \, O(\gamma_k)
\end{equation}
and in particular $|\sigma_k - \tau_k| = O(\gamma_k)$.

In a second step, we now improve the estimate (\ref{sigmatau1}). Note that
\begin{equation} \label{sigmatau2}
  \sup_{\l \in G_k} |\l - \sigma_k| \leq |\sigma_k - \tau_k| + \sup_{\l \in G_k} |\l - \tau_k| = O(\gamma_k).
\end{equation}
Combining (\ref{sigmatau1}) and (\ref{sigmatau2}), we obtain the clained estimate (\ref{sigmaproperty}).
\end{proof}

For later use, we compute the spectra of $Q(b,a)$ and $L_2(b,a)$ in the special case $(b,a) = (\b 1_N, \a 1_N)$ with $\b \in \R$ and $\a > 0$. Here $1_N$ denotes the vector $(1, \ldots, 1) \in \R^N$. These points are the equilibrium points (of the restrictions) of the Toda Hamiltonian vector field (to the symplectic leaves $\Mmba$). We compute the spectrum $(\l_j)_{1 \leq j \leq 2N}$ of the matrix $Q(\b 1_N, \a 1_N)$ and the Dirichlet eigenvalues $(\mu_k)_{1 \leq k \leq N-1}$ of $L = L(\b 1_N, \a 1_N)$ together with a normalized eigenvector $g_l = \big( g_l(j) \big)_{1 \leq j \leq N}$ of $\mu_l$, i.e. $L g_l = \mu_l g_l$, $g_l(1)=0$, and a vector $h_l = \big( h_l(j) \big)_{1 \leq j \leq N}$ which is the normalized solution of $L y = \mu_l y$ orthogonal to $g_l$ satisfying $W(h_l, g_l)(N) > 0$.
\begin{lemma} \label{speclemma}
The spectrum $(\l_j)_{1 \leq j \leq 2N}$ of $Q(\b 1_N, \a 1_N \! )$ and the Dirichlet eigenvalues $(\mu_l)_{1 \leq l \leq N-1}$ of $L(\b 1_N, \a 1_N)$ are given by
\begin{eqnarray*}
  \l_1 & = & \b - 2 \a, \\
  \l_{2l} = \l_{2l+1} = \mu_l & = & \b - 2 \a \cos \frac{l \pi}{N} \quad (1 \leq l \leq N-1), \\
  \l_{2N} & = & \b +2 \a.
\end{eqnarray*}
In particular, all spectral gaps of $Q(\b 1_N, \a 1_N)$ are collapsed. For any $1 \leq l \leq N-1$, the vectors $g_l$ and $h_l$ defined by
\begin{eqnarray}
  g_l(j) & = & (-1)^{j+1} \sqrt\frac{2}{N} \sin\frac{(j-1)l\pi}{N} \quad (1 \leq j \leq N), \label{gkformula} \\
  h_l(j) & = & (-1)^j \sqrt\frac{2}{N} \cos\frac{(j-1)l\pi}{N} \quad (1 \leq j \leq N) \label{hkformula}
\end{eqnarray}
satisfy $L y = \mu_l y$ and the normalization conditions
\begin{displaymath}
  \sum_{j=1}^N g_l(j)^2 = \sum_{j=1}^N h_l(j)^2 = 1, \quad g_l(0) > 0, \quad g_l(1) = 0;
\end{displaymath}
\begin{displaymath}
  W(h_l, g_l)(N) > 0, \quad \langle h_l, g_l \rangle_{\R^N} = 0.
\end{displaymath}
\end{lemma}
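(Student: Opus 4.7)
The plan is to exploit the fact that for $(b,a) = (\b 1_N, \a 1_N)$ the difference equation $(R_{b,a} y)(k) = \l y(k)$ reduces to the constant-coefficient three-term recurrence
\[
\a y(k-1) + \b y(k) + \a y(k+1) = \l y(k),
\]
whose solution space is spanned by $\sin((k-1)\theta)$ and $\cos((k-1)\theta)$ with $\theta$ determined by the dispersion relation $\l = \b + 2\a\cos\theta$. Each of the three spectra in the statement will be obtained by imposing the appropriate boundary condition on this two-parameter family and then bookkeeping the ordering.

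First I would compute the spectrum of $Q$. By Floquet's theorem it equals $\mathrm{spec}(L^+)\cup\mathrm{spec}(L^-)$, as already used in deriving (\ref{deltalambdapm2}). Periodic solutions of period $N$ correspond to $e^{iN\theta}=1$, i.e.\ $\theta \in \{2l\pi/N : l=0,\dots,N-1\}$, and antiperiodic solutions to $\theta \in \{(2l+1)\pi/N : l=0,\dots,N-1\}$. Their union produces $\{\b + 2\a\cos(l\pi/N) : l=0,\dots,2N-1\}$. The symmetry $\cos(l\pi/N) = \cos((2N-l)\pi/N)$ then shows that $\b\pm 2\a$ are simple while each intermediate value has multiplicity two, which in particular forces every spectral gap to be collapsed. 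Sorting in increasing order and using $\cos((N-l)\pi/N) = -\cos(l\pi/N)$, the claimed formulas for $\l_1$, $\l_{2l}=\l_{2l+1}$ and $\l_{2N}$ drop out. The Dirichlet eigenvalues are handled in the same spirit: the initial condition $y_1(1,\mu)=0$ selects the sine family, and $y_1(N+1,\mu) = \sin(N\theta) = 0$ forces $\theta \in \{l\pi/N : l=1,\dots,N-1\}$; reindexing $l \mapsto N-l$ to put the spectrum in increasing order produces $\mu_l = \b - 2\a\cos(l\pi/N)$ and makes the coincidence $\mu_l = \l_{2l} = \l_{2l+1}$ visible.

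For the eigenvectors I would verify the stated formulas by direct substitution, using the product-to-sum identity $\sin((j-2)\theta)+\sin(j\theta)=2\cos\theta\sin((j-1)\theta)$ together with its cosine analogue. The alternating signs $(-1)^{j+1}$ and $(-1)^j$ built into $g_l$ and $h_l$ are precisely what is needed to convert the $+2\a\cos(l\pi/N)$ coming from the dispersion relation into $-2\a\cos(l\pi/N) = \mu_l - \b$, so that both $g_l$ and $h_l$ solve $L y = \mu_l y$. The Dirichlet endpoint conditions $g_l(1)=0$ and $g_l(N+1)=0$ are immediate from $\sin 0 = 0 = \sin(l\pi)$. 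The remaining claims, namely the normalizations $\sum g_l(j)^2 = \sum h_l(j)^2 = 1$, the orthogonality $\langle h_l, g_l\rangle_{\R^N} = 0$, and the Wronskian identity $W(h_l,g_l)(N) = \frac{2}{N}\sin(l\pi/N) > 0$, all reduce to a single discrete Fourier identity $\sum_{j=0}^{N-1} e^{2ijl\pi/N} = 0$ for $1 \leq l \leq N-1$, applied to the relevant sine and cosine expansions obtained from the double-angle formulas.

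The argument is entirely computational and I anticipate no real obstacle. The one step requiring care is the index bookkeeping: the dispersion relation produces $\b + 2\a\cos\theta$ naturally, whereas the lemma states formulas involving $-2\a\cos(l\pi/N)$, and this discrepancy is absorbed entirely by the reindexing $l \mapsto N-l$ forced by writing the spectra in increasing order, coupled with the sign conventions $(-1)^{j+1}$ and $(-1)^j$ in the definitions of $g_l$ and $h_l$.
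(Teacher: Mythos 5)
Your proposal is correct and follows essentially the same route as the paper: solve the constant-coefficient recurrence via the dispersion relation $\l=\b+2\a\cos\theta$, impose the periodic/antiperiodic and Dirichlet boundary conditions to read off the spectra and multiplicities, and verify the eigenvector formulas and normalizations by direct trigonometric computation. The only differences are cosmetic bookkeeping: the paper parameterizes by $\rho=(1+l/N)\pi\in(\pi,2\pi)$ so the increasing ordering and the alternating sign $(-1)^{j+1}$ both emerge automatically from $\sin(\rho_l(j-1))=(-1)^{j+1}\sin((j-1)l\pi/N)$, whereas you keep $\theta\in[0,\pi]$ and absorb these via the reindexing $l\mapsto N-l$ and the observation that the factor $(-1)^j$ flips the sign of the eigenvalue.
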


\begin{remark}
For $(b,a) = (\b 1_N, \a 1_N)$ the fundamental solutions $y_1$ and $y_2$ are given by
\begin{eqnarray}
  y_1(j,\l) & = & -\frac{\sin (\rho (j-1))}{\sin \rho} \quad (j \in \Z) \label{y1betaalpha} \\
  y_2(j,\l) & = & \frac{\sin (\rho j)}{\sin \rho} \quad (j \in \Z) \label{y2betaalpha}
\end{eqnarray}
where $\pi < \rho < 2\pi$ is determined by $\cos \rho = \frac{\l - \b}{2\a}$.
\end{remark}

\begin{proof}
For any $\l \in \R$, the difference equation (\ref{diff}) for $(\b 1_N, \a 1_N)$ reads
\begin{equation}
(R_{\b, \a} y)(k) := \b y(k) + \a y(k-1) + \a y(k+1) = \l y(k)
\end{equation}
and can be written as
\begin{equation} \label{diffeqnqba}
  y(k-1) + y(k+1) = \frac{\l - \b}{\a} y(k).
\end{equation}
Since we are looking for periodic solutions of (\ref{diffeqnqba}), we make the ansatz $y(k) = e^{\pm i \rho k}$. This leads to the characteristic equation
\begin{displaymath}
  2 \cos \rho \equiv e^{i\rho} + e^{-i\rho} = \frac{\l-\b}{\a}.
\end{displaymath}
For the solution to be $2N$-periodic, it is required that $\rho \in \frac{\pi}{N}\Z$. To put the eigenvalues in ascending order, introduce $\rho_l = (1 + \frac{l}{N}) \pi$ with $0 \leq l \leq N$. Then for any $1 \leq j \leq 2N$, there exists $0 \leq l \leq N$ such that
\begin{displaymath}
\l_j = \b + 2\a \cos \rho_l = \b - 2\a \cos \frac{l \pi}{N}.
\end{displaymath}
Note that for $l=0$, $\l_1 = \b - 2\a$ is an eigenvalue of $Q(\b 1_N, \a 1_N)$ with eigenvector $y(k) = e^{i \pi k} = (-1)^k$. Similarly, for $l=N$, $\l_{2N} = \b + 2\a$ is an eigenvalue with eigenvector $y(k) \equiv 1$. For the eigenvalue $\l_{2l} = \b - 2\a \cos \frac{l \pi}{N}$ ($1 \leq l \leq N-1$),
\begin{displaymath}
  y_\pm (k) = e^{\pm i \rho_l k}
\end{displaymath}
are two linearly independent eigenvectors. As there are $2N$ eigenvalues alltogether, $\l_{2l}$ is double for any $1 \leq l \leq N-1$, and $\l_1$ and $\l_{2N}$ are both simple. It follows that all $N-1$ gaps are collapsed and hence $\mu_l = \l_{2l}$ for all $1 \leq l \leq N-1$.

Turning to the computation of $g_k$ and $h_k$, one easily verifies that for any real number $\l \neq \pm 2\a + \b$, the fundamental solution $y_1(\cdot, \l)$ of (\ref{diffeqnqba}) with $y_1(0, \l) = 1$ and $y_1(1, \l) = 0$ is given by
\begin{displaymath}
  y_1(j, \l) = -\frac{\sin \big( \rho(j-1) \big)}{\sin \rho} \qquad (j \in \Z)
\end{displaymath}
where $\pi < \rho < 2\pi$ is determined by $\cos \rho = \frac{\l - \b}{2 \a}$, thus proving (\ref{y1betaalpha}). In the same way, one verifies (\ref{y2betaalpha}). For $\l = \mu_l = \b - 2 \a \cos \frac{l \pi}{N}$ we then get
\begin{displaymath}
\sin \big( \rho_l (j-1) \big) = \sin \big( (1 + \frac{l}{N}) \pi (j-1) \big) = (-1)^{j+1} \sin\frac{(j-1) l \pi}{N}.
\end{displaymath}
In particular, $\sin \big( \rho_l (j-1) \big) = 0$ for $j=1$ and $j=N+1$. As
\begin{displaymath}
  \sum_{j=1}^N \sin^2 \frac{(j-1) l \pi}{N} = \sum_{j=1}^N \cos^2 \frac{(j-1) l \pi}{N}
\end{displaymath}
 and these two sums add up to $N$, one sees that
\begin{equation} \label{sinsum}
\sum_{j=1}^N \sin^2 \frac{(j-1) l \pi}{N} = \frac{N}{2},
\end{equation}
yielding the claimed formula (\ref{gkformula}) for $g_l$.

By the same argument one shows that $\tilde{h}_l$ given by $(-1)^j \sqrt{\frac{2}{N}} \cos\frac{(j-1)l\pi}{N}$ (i.e. the right side of (\ref{hkformula})) satisfies $R_{\b, \a} \tilde{h}_l = \mu_l \tilde{h}_l$ and the normalization condition $\sum_{j=1}^N \tilde{h}_l(j)^2 = 1$. Using standard trigonometric identities one verifies that
\begin{displaymath}
\langle g_l, \tilde{h}_l \rangle = \sum_{j=1}^N g_l(j) \tilde{h}_l(j) = 0
\end{displaymath}
and $W(\tilde{h}_l, g_l)(N)$ can be computed to be
\begin{displaymath}
  \tilde{h}_l(N) g_l(N+1) - \tilde{h}_l(N+1) g_l(N) = -\tilde{h}_l(N+1) g_l(N) = -\tilde{h}_l(1) g_l(0) > 0.
\end{displaymath}
Hence $\tilde{h}_l$ is indeed the eigenvector with the required normalization, i.e. $h_l = \tilde{h}_l$, thus proving (\ref{hkformula}).
\end{proof}

\section{Action Variables} \label{coord}

In the next two sections, we define the candidates for action-angle variables on the phase space $\M$ and investigate some of their properties. In this section we introduce globally defined action variables $(I_n)_{1 \leq n \leq N-1}$ as proposed by Flaschka-McLaughlin \cite{flmc}.

\begin{definition} \label{actionsdef}
Let $(b,a) \in \M$. For $1 \leq n \leq N-1$,
\begin{equation} \label{actsba}
I_n := \frac{1}{2\pi} \int_{\Gamma_n} \l \frac{\dot{\Delta}_\l}{\sqrt[c]{\Delta^2_\l-4}} \; d\l
\end{equation}
where $\dot{\Delta}_\l = \frac{d}{d\l} \Delta_\l$ is the $\l$-derivative of the discriminant $\Delta_\l = \Delta(\l, b, a)$ and the contour $\Gamma_n$ and the canonical root $\sqrt[c]{\cdot}$ are given as in section \ref{tools}.
\end{definition}

\begin{remark}
The contours $\Gamma_n$ can be chosen locally independently of $(b,a)$. In view of the fact that $\Delta_\l$ is a spectral invariant of $L(b,a)$ the actions $I_n$ are entirely determined by the spectrum of $L(b,a)$. In particular, $(I_n)_{1 \leq n \leq N-1}$ are constants of motion, since by Proposition \ref{isosp}, the Toda flow is isospectral.
\end{remark}

\begin{remark}
The variables $(I_n)_{1 \leq n \leq N-1}$ can also be represented as integrals on the Riemann surface $\Sigma_{b,a}$. If $c_n$ denotes the lift of $\Gamma_n$ to the canonical sheet of $\Sigma_{b,a}$, (\ref{actsba}) becomes
\begin{equation} \label{reprcomplex}
I_n = \frac{1}{2\pi} \int_{c_n} \l \frac{\dot{\Delta}_\l}{\sqrt{\Delta^2_\l-4}} d\l \quad (1 \leq n \leq N-1).
\end{equation}
\end{remark}

From the definition (\ref{actsba}), the following result can be deduced:

\begin{prop} \label{in0ifgn0}
On the real space $\M$, each function $I_n$ is real, nonnegative, and it vanishes if $\gamma_n = 0$.
\end{prop}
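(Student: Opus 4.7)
The strategy is to contract the contour $\Gamma_n$ onto the spectral gap $G_n = [\l_{2n}, \l_{2n+1}]$ and express $I_n$ as a real integral that can be shown nonnegative by an integration by parts.

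If $\gamma_n = 0$, then $\dot{\l}_n = \l_{2n} = \l_{2n+1}$, so near $\l_{2n}$ the factorizations (\ref{dotlrepr}) and (\ref{delta2lrepr}) show that $\dot{\Delta}_\l$ has a simple zero and $\Delta^2_\l - 4$ a double zero, both at $\l_{2n}$. Since the interval $(\l_{2n}, \l_{2n+1})$ is empty, the $c$-root $\sqrt[c]{\Delta^2_\l - 4}$ has no cut there and is analytic in a neighborhood of $\l_{2n}$, with a simple zero that cancels the simple zero of $\dot{\Delta}_\l$. Hence $\l \, \dot{\Delta}_\l / \sqrt[c]{\Delta^2_\l - 4}$ is analytic on a neighborhood of $G_n$ and Cauchy's theorem yields $I_n = 0$.

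Now assume $\gamma_n > 0$. Contracting $\Gamma_n$ to the two sides of the cut $[\l_{2n}, \l_{2n+1}]$ and using (\ref{croot2}), which gives $\sqrt[c]{\Delta^2_{\l \pm i0} - 4} = \mp (-1)^{N+n-1} \sqrt[+]{\Delta^2_\l - 4}$ on the gap, produces the manifestly real expression
\[ I_n = \frac{(-1)^{N+n-1}}{\pi} \int_{\l_{2n}}^{\l_{2n+1}} \frac{\l \, \dot{\Delta}_\l}{\sqrt[+]{\Delta^2_\l - 4}} \, d\l. \]
By (\ref{deltalambdapm2}) and continuity, $(-1)^{n+N} \Delta(\l) \geq 2$ on $G_n$ with equality only at the endpoints, so
\[ \Phi(\l) := \mathrm{arccosh}\bigl( (-1)^{n+N} \Delta(\l)/2 \bigr) \]
is a nonnegative function on $G_n$ that vanishes at $\l_{2n}$ and $\l_{2n+1}$ and satisfies $\Phi'(\l) = (-1)^{n+N} \dot{\Delta}_\l / \sqrt[+]{\Delta^2_\l - 4}$. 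Substituting and integrating by parts (with vanishing boundary terms) yields
\[ I_n = -\frac{1}{\pi} \int_{\l_{2n}}^{\l_{2n+1}} \l \, \Phi'(\l) \, d\l = \frac{1}{\pi} \int_{\l_{2n}}^{\l_{2n+1}} \Phi(\l) \, d\l \geq 0, \]
as desired.

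The only delicate aspect is the sign bookkeeping: combining the factor $(-1)^{N+n-1}$ from the $c$-root convention, the sign $(-1)^{n+N}$ of $\Delta$ on the $n$-th gap, and the $-1$ produced by integration by parts into the manifestly nonnegative integral of $\Phi$. Everything else reduces to a routine contour deformation.
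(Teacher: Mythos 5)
Your proof is correct and follows the same overall contour--deformation route as the paper; the only genuine difference is in how nonnegativity is established once the integral is on the real gap. The paper first subtracts $\dot{\l}_n$ (using $\int_{\Gamma_n} \dot{\Delta}_\l/\sqrt[c]{\Delta^2_\l-4}\,d\l = 0$) and then notes that the integrand $(-1)^{N+n-1}(\l-\dot{\l}_n)\dot{\Delta}_\l/\sqrt[+]{\Delta^2_\l-4}$ is pointwise nonnegative on $[\l_{2n},\l_{2n+1}]$, since $\mathrm{sign}(\l-\dot{\l}_n)\dot{\Delta}_\l = (-1)^{N+n-1}$ there. You instead integrate by parts against $\Phi(\l) = \mathrm{arcosh}\,|\Delta(\l)/2|$, producing the manifestly nonnegative representation $I_n = \frac{1}{\pi}\int_{\l_{2n}}^{\l_{2n+1}} \Phi(\l)\,d\l$. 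Both are valid and essentially equally short; notably, the arcosh representation you derive is exactly the one the paper itself invokes later (in the proof of Lemma~\ref{est1}), so your argument has the small advantage of establishing a formula that is reused. For the vanishing case $\gamma_n=0$, your observation that $\dot{\Delta}_\l$ and $\sqrt[c]{\Delta^2_\l-4}$ both have simple zeros at the collapsed gap (the cut disappears, so the $c$-root is single-valued and analytic there) is the same analyticity-plus-Cauchy argument the paper makes via the $s$-root factorization; yours is perhaps a bit more direct. The sign bookkeeping in your contour collapse is consistent with (\ref{croot2}), and the integration by parts is legitimate since $\Phi$ is continuous on the closed gap with $\Phi(\l_{2n})=\Phi(\l_{2n+1})=0$ while $\Phi'$ has only integrable square-root singularities at the endpoints.
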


\begin{proof}
Since
$$ \int_{\Gamma_n} \frac{\dot{\Delta}_\l}{\sqrt[c]{\Delta^2_\l-4}} d\l = 0, $$
it follows that
\begin{equation} \label{inrepr}
I_n = \frac{1}{2\pi} \int_{\Gamma_n} (\l-\dot{\l}_n) \frac{\dot{\Delta}_\l}{\sqrt[c]{\Delta^2_\l-4}} \, d\l.
\end{equation}
By shrinking the contour of integration to the real interval, we get
\begin{displaymath}
  I_n = \frac{1}{\pi} \int_{\l_{2n}}^{\l_{2n+1}} (-1)^{N+n-1} (\l-\dot{\l}_n) \frac{\dot{\Delta}_\l}{\sqrt[+]{\Delta^2_\l-4}} \, d\l
\end{displaymath}
by taking into account the definition (\ref{croot}) of the $c$-root. Since sign$(\l-\dot{\l}_n) \dot{\Delta}_\l = (-1)^{N+n-1}$ on $[\l_{2n}, \l_{2n+1}] \setminus \{ \dot{\l}_n \}$, the integrand is real and nonnegative, hence $I_n$ is real and nonnegative on $\M$, as claimed.

If $\gamma_n = 0$, then $\l_{2n} = \l_{2n+1}$. Hence $\dot{\l}_n = \l_{2n} = \l_{2n+1} = \tau_n$ and
\begin{displaymath}
\l - \dot{\l}_n = i \sqrt[s]{(\l_{2n+1} - \l) (\l - \l_{2n})}.
\end{displaymath}
Therefore the integrand in (\ref{actsba}) is holomorphic in the interior of the contour $\Gamma_n$, and by Cauchy's theorem the integral in (\ref{actsba}) vanishes.
\end{proof}

The action variables $(I_n)_{1 \leq n \leq N-1}$ can be extended in a straightforward way to a complex neighborhood $\W$ of $\M$ in $\M^\C$.

\begin{theorem} \label{analytic}
There exists a complex neighborhood $\W$ of $\M$ in $\M^\C$ such that for all $1 \leq n \leq N-1$, the functions $I_n$ defined by (\ref{actsba}) extend analytically to $\W$, $I_n: \W \to \C$.
\end{theorem}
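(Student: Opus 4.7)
My plan is to verify that the integrand of (\ref{actsba}) is jointly holomorphic in $(\l, b, a)$ along the contour $\Gamma_n$, and then to deduce analyticity of $I_n$ via differentiation under the integral sign. The main technical point is to exhibit a holomorphic branch of $\sqrt[c]{\Delta^2_\l - 4}$ on a complex neighborhood; the rest is routine. I first fix a reference point $(b_0, a_0) \in \M$ and, using Lemma \ref{wmungnlemma}, choose mutually disjoint isolating neighborhoods $(U_m)_{1 \leq m \leq N-1}$ for $(b_0, a_0)$. Since the coefficients of $\Delta^2_\l - 4$ are polynomials in $(b, a) \in \M^\C$ (and hence depend holomorphically on $(b,a)$), a standard continuity-of-roots argument produces an open neighborhood $V \subset \M^\C$ of $(b_0, a_0)$ such that for every $(b, a) \in V$ and every $m$, $U_m$ contains exactly two zeros of $\Delta^2_\l(b, a) - 4$ (counted with multiplicity) and no other zeros lie in $\overline{U_m}$. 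A contour $\Gamma_m \subset U_m$ fixed at $(b_0, a_0)$ then remains a valid circuit for all $(b, a) \in V$, and on $\Gamma_m$ the polynomial $\Delta^2_\l - 4$ is nonvanishing and jointly holomorphic in $(\l, b, a)$.

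To construct the $c$-root on a neighborhood of $\Gamma_n$, I use the factorization
\[ \Delta^2_\l - 4 = \a^{-2N}\, P_n(\l, b, a)\, Q_n(\l, b, a), \]
where $P_n$ is the monic quadratic in $\l$ whose zeros are the two roots of $\Delta^2_\l - 4$ inside $U_n$ and $Q_n$ collects the remaining $2N - 2$ roots. The elementary symmetric functions of the two roots inside $U_n$ are holomorphic in $(b, a) \in V$ by residue formulas such as
\[ \frac{1}{2\pi i} \int_{\Gamma_n} \l^k\, \frac{2\Delta_\l \dot{\Delta}_\l}{\Delta^2_\l - 4}\, d\l \qquad (k = 1, 2), \]
so the coefficients of $P_n$ (and hence of $Q_n$) are holomorphic. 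On an annular neighborhood of $\Gamma_n$ contained in $U_n$, $Q_n$ is nonvanishing, and $P_n$, which has both zeros inside the annulus, has winding number $2$ about the origin; consequently a single-valued holomorphic square root of the product exists on the annulus. Fixing its branch to agree with the $c$-root on the real slice $\M \cap V$ determines it uniquely on $V$ by analytic continuation, giving the desired holomorphic extension of $\sqrt[c]{\Delta^2_\l - 4}$.

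It follows that the integrand $\l\dot{\Delta}_\l / \sqrt[c]{\Delta^2_\l - 4}$ is jointly holomorphic in $(\l, b, a)$ for $\l$ in the annulus and $(b, a) \in V$. Differentiation under the integral sign, justified by uniform continuity of the integrand and its $(b,a)$-derivatives on the compact contour $\Gamma_n$, yields holomorphy of $I_n$ on $V$. Since $(b_0, a_0) \in \M$ was arbitrary and the value of the integral is independent of the particular choice of $\Gamma_n$ by Cauchy's theorem (as long as $\Gamma_n$ encloses the same pair of roots), the local extensions patch into a single holomorphic function on a complex neighborhood $\W$ of $\M$ in $\M^\C$. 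The only nontrivial step is the construction of the $c$-root as an analytic function of $(b, a)$ on $V$; once this is in hand, the analyticity of $I_n$ is an immediate consequence of standard theorems on integrals of holomorphic families.
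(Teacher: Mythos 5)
Your proof is correct and follows essentially the same approach as the paper: fix contours $\Gamma_n$ locally using Lemma \ref{wmungnlemma}, show that $\l\dot{\Delta}_\l/\sqrt[c]{\Delta_\l^2-4}$ is jointly analytic in $(\l,b,a)$ near $\Gamma_n$, and conclude by differentiating under the integral sign. The paper's proof is terser and simply asserts the joint analyticity of the integrand; your explicit construction of the holomorphic branch of the $c$-root via the factorization $P_n Q_n$, the residue formulas for its coefficients, and the winding-number argument is the detail the paper leaves implicit.
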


\begin{proof}
Let $\W$ denote a neighborhood of $\M$ in $\M^\C$ of Lemma \ref{wmungnlemma} and define for any $1 \leq n \leq N-1$ the functions $I_n$ on $\W$ by the formula (\ref{actsba}). Let $(b,a) \in \W$ be given. Then there exists a neighborhood $\Wba$ of $(b,a)$ in $\W$ so that the integration contours $\Gamma_n$ in (\ref{actsba}) can be chosen to be the same for any element in $\Wba$ and $\dot{\Delta}_\l / \sqrt[c]{\Delta^2_\l-4}$ is analytic on $B_{\e}(\Gamma_n) \times \Wba$, where $B_{\e}(\Gamma_n) := \{ \l \in \C | \, \textrm{dist} \, (\l,\Gamma_n) < \e \}$ is the $\e$-neighborhood of $\Gamma_n$ with $\e$ sufficiently small. This shows that $I_n$ is analytic on $\W$.
\end{proof}

\begin{prop} \label{actanalytprop}
There exists a complex neighborhood $\W$ of $\M$ in $\M^\C$ such that for any $1 \leq n \leq N-1$, the quotient $I_n / \gamma_n^2$ extends analytically from $\M \setminus D_n$ to all of $\W$ and has strictly positive real part on $\W$. As a consequence, $\xi_n = \sqrt[+]{2 I_n / \gamma_n^2}$ is an analytic and nonvanishing function on $\W$, where $\sqrt[+]\cdot$ is the principal branch of the square root on $\C \setminus (-\infty, 0]$.
\end{prop}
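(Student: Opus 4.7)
My plan is to work from the representation
$$I_n = \frac{N}{2\pi} \int_{\Gamma_n} \frac{(\l - \dot\l_n)^2 \chi_n(\l)}{\sqrt[s]{(\l_{2n+1} - \l)(\l - \l_{2n})}} \, d\l,$$
which follows from combining (\ref{inrepr}) with (\ref{ddeltadelta2exp}), and to split the argument into (a) analytic extension via boundedness, and (b) strict positivity via a limit computation on $D_n$.

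For the analytic extension I would show that $|I_n| \leq C |\gamma_n|^2$ locally near $D_n$ inside $\W$. Take $\Gamma_n$ to be a circle of radius of order $|\gamma_n|$ centered at $\tau_n = (\l_{2n} + \l_{2n+1})/2$. Then $|\l - \dot\l_n| = O(|\gamma_n|)$ by Lemma \ref{sigmaproplemma} (since $\dot\l_n - \tau_n = O(\gamma_n^2)$), $|\sqrt[s]{(\l_{2n+1} - \l)(\l - \l_{2n})}|$ is of order $|\gamma_n|$ on $\Gamma_n$, the length of $\Gamma_n$ is of order $|\gamma_n|$, and $\chi_n$ is analytic and uniformly bounded on a neighborhood. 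Multiplying yields $|I_n| = O(|\gamma_n|^2)$, so $I_n/\gamma_n^2$ is locally bounded on $\W \setminus D_n$. Since $D_n$ has complex codimension two in $\W$, Riemann's removable singularity theorem extends $I_n/\gamma_n^2$ analytically to all of $\W$.

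For positive real part I distinguish two cases. On $\M \setminus D_n$, the proof of Proposition \ref{in0ifgn0} shows the real-axis integrand is pointwise strictly positive except at the single point $\l = \dot\l_n$, hence $I_n > 0$; since $\gamma_n^2 > 0$ there, we get $I_n/\gamma_n^2 > 0$. On $\M \cap D_n$, I evaluate the extended value as a limit via the substitution $\l = \tau_n + (\gamma_n/2) w$: for $\gamma_n > 0$ this turns the integral into
$$I_n/\gamma_n^2 = \frac{N}{2\pi} \int_{\tilde{\Gamma}} \frac{(w/2 - \gamma_n \tilde{e}_n)^2 \chi_n(\tau_n + (\gamma_n/2)w)}{\sqrt[s]{1 - w^2}} \, dw,$$
where $\tilde{e}_n = (\dot\l_n - \tau_n)/\gamma_n^2$ is bounded by Lemma \ref{sigmaproplemma}. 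Letting $\gamma_n \to 0$ and using $\int_{\tilde{\Gamma}} w^2/\sqrt[s]{1-w^2} \, dw = \pi$ (from the Laurent expansion $w^2/\sqrt[s]{1-w^2} = -iw - i/(2w) + O(w^{-3})$ at infinity) gives the limit $N\chi_n(\tau_n)/8$. A short sign analysis of (\ref{chindef}) at a real point $\l = \tau_n$ shows the prefactor $(-1)^{N+n-1}$ is cancelled by the $N-1-n$ negative factors $(\l - \dot\l_m)/\sqrt[+]{(\l - \l_{2m})(\l - \l_{2m+1})}$ coming from indices $m > n$, while the remaining positive square roots are strictly positive; hence $\chi_n(\tau_n) > 0$ on all of $\M$ and $I_n/\gamma_n^2 > 0$ on $\M \cap D_n$ as well. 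Combining both cases, $I_n/\gamma_n^2 > 0$ on all of $\M$, and continuity of the analytic extension gives $\textrm{Re}(I_n/\gamma_n^2) > 0$ on a (possibly smaller) complex neighborhood $\W$. Consequently $2 I_n / \gamma_n^2 \in \C \setminus (-\infty, 0]$, so $\xi_n = \sqrt[+]{2 I_n / \gamma_n^2}$ is analytic and nonvanishing on $\W$.

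The main obstacle I expect is handling the factor $\gamma_n$ during the substitution, since $\gamma_n$ itself is not single-valued on $\W$ while only $\gamma_n^2$ is analytic. This is harmless in practice: the boundedness argument only uses $|\gamma_n|$ and is insensitive to any branch, and the limit computation is only required at $\gamma_n = 0$ where sign ambiguities disappear; away from $D_n$ one works on $\M$ with the natural choice $\gamma_n > 0$.
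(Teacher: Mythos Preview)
Your argument is essentially correct and reaches the same conclusion, but the route differs from the paper's in two places. For the analytic extension, the paper performs the same substitution $\l = \tau_n + \frac{\gamma_n}{2}\zeta$, obtains an explicit integral formula for $2I_n/\gamma_n^2$ valid on $\W\setminus D_n$, computes the limit $\frac{N}{4}\chi_n(\tau_n)$ on $D_n$, and then invokes a continuation result (Theorem~A.6 in \cite{kapo}) that uses continuity on $\W$ together with analyticity on $\W\setminus D_n$ and on $\W\cap D_n$. You instead prove local boundedness of $I_n/\gamma_n^2$ by a direct size estimate on a shrinking contour and appeal to the Riemann removable singularity theorem; this is a clean alternative. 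One correction: $D_n^\C = \{\gamma_n^2 = 0\}$ is the zero set of a single nontrivial holomorphic function and hence has \emph{complex codimension one}, not two (the real codimension two of $D_n\subset\M$ does not translate to complex codimension two in $\W$). This does not damage your argument, since you have already established boundedness and the first Riemann extension theorem (analytic off a codimension-one analytic set and locally bounded) suffices.

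For positivity on $\M$, the paper does not argue case by case but instead invokes the quantitative bound of Lemma~\ref{est1}, namely $I_n/\gamma_n^2 \geq \frac{1}{3\pi(\l_{2N}-\l_1)}$, and then shrinks $\W$. Your direct argument---strict positivity of the real integrand on $\M\setminus D_n$ from Proposition~\ref{in0ifgn0}, combined with the limit value $\frac{N}{8}\chi_n(\tau_n) > 0$ on $\M\cap D_n$ via your sign count in (\ref{chindef})---is perfectly valid and slightly more self-contained, while the paper's route has the advantage of giving an explicit uniform lower bound that is used again later (Theorem~\ref{ingammantotest}).
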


\begin{proof}
Let $\W$ be the complex neighborhood of Theorem \ref{analytic}.
Substituting (\ref{ddeltadelta2exp}) into (\ref{inrepr}) leads to the following identity on $\W \setminus D_n$
$$ I_n = \frac{N}{2\pi} \int_{\Gamma_n} \frac{(\l-\dot{\l}_n)^2} {\sqrt[s]{(\l_{2n+1}-\l)(\l-\l_{2n})}} \chi_n(\l) d\l, $$ where $\chi_n$ is given by (\ref{chindef}).
Upon the substitution $\l(\zeta) = \tau_n + \frac{\gamma_n}{2} \zeta$, with $\tau_n = \frac{1}{2} (\l_{2n} + \l_{2n+1})$ and $\delta_n = \frac{2(\dot{\l}_n - \tau_n)}{\gamma_n}$, one then obtains
\begin{equation} \label{ingamman2}
\frac{2 I_n}{\gamma_n^2} = \frac{N}{4\pi} \int_{\Gamma_n'} \frac{(\zeta - \delta_n)^2} {\sqrt[s]{1-\zeta^2}} \chi_n(\tau_n + \frac{\gamma_n}{2} \zeta) d\zeta,
\end{equation}
where $\Gamma_n'$ is the pullback of $\Gamma_n$ under the substitution $\l = \l(\zeta)$, i.e. a circuit in $\C$ around $[-1, 1]$.
By (\ref{dotlambdatau}), $\dot{\l}_n - \tau_n = O(\gamma_n^2)$, and hence $\delta_n \to 0$ as $\gamma_n \to 0$. We conclude that
\begin{displaymath}
\lim_{\gamma_n \to 0} \frac{2 I_n}{\gamma_n^2} = \frac{N}{4\pi} \int_{\Gamma_n'} \chi_n(\tau_n) \frac{\zeta^2 \, d\zeta}{\sqrt[s]{1-\zeta^2}} = \chi_n(\tau_n) \frac{N}{2\pi} \int_{-1}^1 \frac{t^2 \, dt}{\sqrt[+]{1 - t^2}} = \frac{N}{4} \chi_n(\tau_n).
\end{displaymath}

By defining $\frac{2 I_n}{\gamma_n^2}$ by $\frac{N}{4} \chi_n(\tau_n)$ on $\W \cap D_n$, it follows that $\frac{2 I_n}{\gamma_n^2}$ is a continuous function on all of $\W$. This extended function is analytic on $\W \setminus D_n$ as is its restriction to $\W \cap D_n$. By Theorem A.6 in \cite{kapo} it then follows that $\frac{2 I_n}{\gamma_n^2}$ is analytic on all of $\W$.

By Lemma \ref{est1} below, the quotient $I_n / \gamma_n^2$ can be bounded away from zero on $\M$, $\frac{I_n}{\gamma_n^2} \geq \frac{1}{3\pi (\l_{2N}-\l_1)}$. By shrinking $\W$, if necessary, it then follows that  for any $1 \leq n \leq N-1$, the real part of $I_n / \gamma_n^2$ is positive and never vanishes on $\W$. Hence the principal branch of the square root of $2 I_n / \gamma_n^2$ is well defined on $\W$ and $\xi_n$ has the claimed properties.
\end{proof}

To show that $\sqrt[+]{\frac{2 I_n}{\gamma_n^2}}$ is well defined on $\W$, we used in the proof of Proposition \ref{actanalytprop} the following auxiliary result, which we prove in Appendix \ref{proofest}:
\begin{lemma} \label{est1}
For any $(b,a) \in \M$ and any $1 \leq n \leq N-1$,
\begin{equation} \label{estingnorig}
\gamma_n^2 \leq 3\pi (\l_{2N} - \l_1)I_n.
\end{equation}
\end{lemma}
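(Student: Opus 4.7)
The plan is to rewrite $I_n$ as the integral of a nonnegative ``imaginary quasi-momentum'' over the $n$-th gap and then estimate it by an elementary pointwise bound.

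Starting from the real-axis representation derived in the proof of Proposition~\ref{in0ifgn0},
\[
I_n = \frac{1}{\pi}\int_{\l_{2n}}^{\l_{2n+1}}(-1)^{N+n-1}(\l-\dot\l_n)\frac{\dot\Delta_\l}{\sqrt[+]{\Delta^2_\l-4}}\,d\l,
\]
I would set $\sigma_n := (-1)^{n+N}$, which by (\ref{deltalambdapm2}) equals $\mathrm{sign}(\Delta(\l))$ on the open $n$-th gap, and introduce the function $\phi(\l) := \cosh^{-1}(|\Delta(\l)|/2)$ on $[\l_{2n},\l_{2n+1}]$. Then $\phi$ is nonnegative, vanishes at both endpoints (where $|\Delta|=2$), and satisfies $\phi'(\l) = \sigma_n\dot\Delta_\l/\sqrt[+]{\Delta^2_\l-4}$. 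Since $(-1)^{N+n-1}=-\sigma_n$, an integration by parts against the factor $\l-\dot\l_n$ (whose boundary contributions vanish because $\phi$ does) produces the manifestly nonnegative identity
\[
I_n = \frac{1}{\pi}\int_{\l_{2n}}^{\l_{2n+1}}\phi(\l)\,d\l.
\]

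Next I would apply the elementary inequality $\cosh^{-1}(1+y)\geq 2\sqrt{y/(y+2)}$ (obtained by bounding $1/\sqrt{t(t+2)}\geq 1/\sqrt{t(y+2)}$ in $\cosh^{-1}(1+y)=\int_0^y dt/\sqrt{t(t+2)}$ and integrating in $t$) to get
\[
\phi(\l)\;\geq\; 2\sqrt{\frac{|\Delta(\l)|-2}{|\Delta(\l)|+2}}\;=\;\frac{2\sqrt[+]{\Delta^2_\l-4}}{|\Delta(\l)|+2}.
\]
The product representation (\ref{delta2lrepr}) factors $\sqrt[+]{\Delta^2_\l-4}=\alpha^{-N}\sqrt[+]{(\l-\l_{2n})(\l_{2n+1}-\l)}\sqrt[+]{Q_n(\l)}$ with $Q_n(\l):=-\prod_{j\neq 2n,2n+1}(\l-\l_j)>0$ on the gap, so that $\phi(\l)$ is bounded below by $2\alpha^{-N}\sqrt{Q_n(\l)}(|\Delta(\l)|+2)^{-1}\sqrt{(\l-\l_{2n})(\l_{2n+1}-\l)}$. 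Combining this with the elementary integral $\int_{\l_{2n}}^{\l_{2n+1}}\sqrt[+]{(\l-\l_{2n})(\l_{2n+1}-\l)}\,d\l=\pi\gamma_n^2/8$, the assertion (\ref{estingnorig}) follows once one has a uniform pointwise bound $\alpha^{-N}\sqrt{Q_n(\l)}/(|\Delta(\l)|+2)\geq 4/(3\pi(\l_{2N}-\l_1))$ on the open $n$-th gap.

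The main obstacle is establishing this uniform bound. The subtle point is that $Q_n(\l)$ can be small (when other eigenvalues cluster near the $n$-th gap), but $(|\Delta(\l)|+2)$ is then small as well through the identity $(|\Delta|+2)(|\Delta|-2)=\alpha^{-2N}(\l-\l_{2n})(\l_{2n+1}-\l)Q_n(\l)$, and the required bound expresses the quantitative cancellation. To carry it out, I would solve the quadratic $u(u-4)=\alpha^{-2N}(\l-\l_{2n})(\l_{2n+1}-\l)Q_n(\l)$ for $u=|\Delta(\l)|+2$, obtaining $|\Delta(\l)|+2=2+\sqrt{4+\alpha^{-2N}(\l-\l_{2n})(\l_{2n+1}-\l)Q_n(\l)}$, and reduce the inequality to an algebraic one in $\sqrt{Q_n}$ and the gap length, which is verified using the crude spectral bounds $|\l-\l_j|\leq \l_{2N}-\l_1$ and $\l_{2N}-\l_1\geq 2\alpha$ (the latter a Gerschgorin-type estimate on the matrix $Q(b,a)$).
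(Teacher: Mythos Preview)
Your reduction to the integral of $\phi(\l)=\cosh^{-1}(|\Delta(\l)|/2)$ over the gap and the subsequent lower bound $\phi(\l)\ge 2\sqrt{(|\Delta|-2)/(|\Delta|+2)}$ are fine, and in fact coincide (up to a harmless factor of $2$) with the paper's first two steps. The difficulty, as you correctly identify, lies entirely in the pointwise estimate
\[
\frac{\alpha^{-N}\sqrt{Q_n(\l)}}{|\Delta(\l)|+2}\;\ge\;\frac{4}{3\pi(\l_{2N}-\l_1)}\qquad(\l_{2n}<\l<\l_{2n+1}),
\]
and here your proposal does not close. Solving the quadratic gives $|\Delta|+2=2+\sqrt{4+\alpha^{-2N}PQ_n}$, but to bound the ratio from below you still need a \emph{lower} bound on $\alpha^{-N}\sqrt{Q_n(\l)}$: for small $\alpha^{-N}\sqrt{Q_n}$ the denominator stays near $4$ and the ratio is simply $\approx \alpha^{-N}\sqrt{Q_n}/4$. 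The two ``crude spectral bounds'' you invoke, $|\l-\l_j|\le \l_{2N}-\l_1$ and $\l_{2N}-\l_1\ge 2\alpha$, are both \emph{upper} bounds on the factors of $Q_n$ and on $L$ respectively; neither yields the required lower bound on the product $Q_n(\l)$. (Also, $\l_{2N}-\l_1\ge 2\alpha$ is true---in fact $\ge 4\alpha$, by a Chebyshev-type argument---but it is not a Gershgorin estimate.) So the ``algebraic verification'' is the whole content of the lemma, and it is not carried out.

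The paper handles exactly this point by factoring $\Delta-2$ and $\Delta+2$ separately over the periodic and antiperiodic eigenvalues and writing
\[
\frac{|\Delta(\l)|-2}{|\Delta(\l)|+2}=\frac{(\l-\l_{2n})(\l_{2n+1}-\l)}{(\l-\l_2)(\l_{2N-1}-\l)}\cdot R(\l)^2,
\]
where $R$ is a product of ratios of the form $(\l-\l_{j})/(\l-\l_{j'})$ with $\l_j,\l_{j'}$ adjacent periodic/antiperiodic eigenvalues on the same side of the gap. Each such ratio is $\ge 1$ on the $n$-th gap by the interlacing $\l_1<\l_2\le\l_3<\cdots$, hence $R\ge 1$ and the first fraction is bounded below by $(\l-\l_{2n})(\l_{2n+1}-\l)/(\l_{2N}-\l_1)^2$. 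This pairing is the missing idea in your argument; once you have it, your route and the paper's converge immediately.
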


From the definition (\ref{actsba}), Proposition \ref{in0ifgn0}, and the estimate (\ref{estingnorig}) one obtains

\begin{cor} \label{in0iffgn0}
For any $(b,a) \in \M$ and any $1 \leq n \leq N-1$,
\begin{displaymath}
  I_n = 0 \quad \textrm{if and only if} \quad \gamma_n = 0.
\end{displaymath}
\end{cor}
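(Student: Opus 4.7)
The proof is essentially an immediate deduction from the two results just cited, so my plan is simply to combine them cleanly.

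For the direction $\gamma_n = 0 \Rightarrow I_n = 0$, I would invoke the last assertion of Proposition \ref{in0ifgn0}, where this implication is already established by a Cauchy-theorem argument (when $\gamma_n = 0$, the factor $\l - \dot\l_n$ equals $i\sqrt[s]{(\l_{2n+1}-\l)(\l-\l_{2n})}$ and the integrand in the representation \eqref{inrepr} becomes holomorphic inside $\Gamma_n$). Nothing more is needed here.

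For the converse $I_n = 0 \Rightarrow \gamma_n = 0$, I would appeal to Lemma \ref{est1}, which provides the quantitative bound $\gamma_n^2 \leq 3\pi(\l_{2N}-\l_1) I_n$ on $\M$. Since $\l_1 < \l_{2N}$ always holds on $\M$ (the outermost periodic eigenvalues are simple and distinct by the strict inequalities listed after \eqref{deltalambdapm2}), the factor $3\pi(\l_{2N}-\l_1)$ is strictly positive. Thus $I_n = 0$ forces $\gamma_n^2 \leq 0$, and since $\gamma_n = \l_{2n+1} - \l_{2n} \geq 0$ on the real space $\M$, one concludes $\gamma_n = 0$.

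There is no real obstacle here: both implications have been reduced to already-proved statements, and the corollary amounts to no more than noting that the constant in Lemma \ref{est1} is finite and positive. The only small point worth stating explicitly in the write-up is that $\l_{2N} - \l_1 > 0$ so that the estimate is nontrivial; this is guaranteed by the strict inequalities for the periodic spectrum of $Q(b,a)$ on $\M$.
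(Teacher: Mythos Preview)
Your proposal is correct and matches the paper's own argument exactly: the paper simply cites Proposition~\ref{in0ifgn0} and the estimate~(\ref{estingnorig}) of Lemma~\ref{est1} to obtain the corollary, which is precisely what you do. The only trivial remark is that the strict inequalities $\l_1 < \l_2 \leq \cdots < \l_{2N}$ are stated just \emph{before} (\ref{deltalambdapm2}), not after it.
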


Actually, Lemma \ref{est1} can be improved. We finish this section with an a priori estimate of the gap lengths $\g_n$ in terms of the action variables and the value of the Casimir $C_2$ alone, which will be shown in Appendix B.

\begin{theorem} \label{ingammantotest}
For any $(b,a) \in \Mmba$ with $\b \in \R$, $\a > 0$ arbitrary,
\begin{equation} \label{estingnsumorig}
  \sum_{n=1}^{N-1} \gamma_n^2 \leq 12 \pi^2 \alpha \left( \sum_{n=1}^{N-1} I_n \right) + 9 \pi^2 (N-1) \left( \sum_{n=1}^{N-1} I_n \right)^2.
\end{equation}
\end{theorem}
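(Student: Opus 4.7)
\emph{Proof plan.} The strategy is to refine Lemma \ref{est1} by combining its individual bound with a further estimate showing that the spectral diameter $\l_{2N}-\l_1$ is itself controlled by $\alpha$ and the gap lengths. Summing the inequality of Lemma \ref{est1} over $n = 1,\ldots,N-1$ already gives
\[\sum_{n=1}^{N-1}\gamma_n^2 \leq 3\pi(\l_{2N}-\l_1)\sum_{n=1}^{N-1}I_n,\]
so the problem reduces to bounding $\l_{2N}-\l_1$ in an intrinsic way on the leaf $\Mmba$.

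The main technical step is to establish an a priori inequality of the form
\[\l_{2N}-\l_1 \leq C_1\,\alpha \;+\; C_2\sum_{n=1}^{N-1}\gamma_n.\]
The natural starting point is the band--gap decomposition
\[\l_{2N}-\l_1 \;=\; \sum_{k=1}^{N}(\l_{2k}-\l_{2k-1}) \;+\; \sum_{n=1}^{N-1}\gamma_n,\]
so what is really needed is a bound on the total band width $\sum_k(\l_{2k}-\l_{2k-1})$ in terms of $\alpha$ alone. This is where the Casimir constraint $\prod_{n=1}^{N}a_n=\alpha^N$ must be used: one can rewrite the band-width sum as a contour integral of $\sqrt[c]{4-\Delta_\l^2}$ over the spectral bands (where the integrand is real), and then use the product representation \eqref{delta2lrepr} together with the leading coefficient $\alpha^{-N}$ of $\Delta_\l$ to reduce the estimate to a uniform bound reminiscent of the proof of Lemma \ref{est1} in Appendix A. Obtaining sharp enough constants here is what I expect to be the main obstacle; the equilibrium case $(b,a)=(\beta 1_N,\alpha 1_N)$, where every band collapses and $\l_{2N}-\l_1=4\alpha$ (Lemma \ref{speclemma}), pins down the correct scaling in $\alpha$ and serves as a sanity check.

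Once such an a priori bound is available, the remainder of the argument is routine. Applying Cauchy--Schwarz in the index $n$ yields
\[\sum_{n=1}^{N-1}\gamma_n \;\leq\; \sqrt{N-1}\,\Bigl(\sum_{n=1}^{N-1}\gamma_n^2\Bigr)^{1/2},\]
which combined with the bound on $\l_{2N}-\l_1$ and the summed Lemma \ref{est1} gives, setting $X=\sum_n\gamma_n^2$ and $S=\sum_n I_n$, an inequality of the shape
\[X \;\leq\; 3\pi C_1\,\alpha\, S \;+\; 3\pi C_2\sqrt{N-1}\,S\,\sqrt{X}.\]
Young's inequality $3\pi C_2\sqrt{N-1}\,S\sqrt{X}\leq \tfrac{1}{2}X+\tfrac{1}{2}\cdot 9\pi^2 C_2^{\,2}(N-1)S^2$ absorbs the $\sqrt{X}$ term into the left-hand side. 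Solving for $X$ and tracking the constants carefully through the chain yields
\[\sum_{n=1}^{N-1}\gamma_n^2 \;\leq\; 12\pi^2\alpha\,\sum_{n=1}^{N-1}I_n \;+\; 9\pi^2(N-1)\Bigl(\sum_{n=1}^{N-1}I_n\Bigr)^{2},\]
as claimed; matching the explicit constants $12\pi^2$ and $9\pi^2$ in the final bound forces the precise values $C_1=2\pi$, $C_2=1$ in the a priori estimate of Step 2.
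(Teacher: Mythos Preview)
Your overall strategy is exactly the paper's: sum Lemma \ref{est1}, write $\l_{2N}-\l_1 = \sum_k(\l_{2k}-\l_{2k-1}) + \sum_n\gamma_n$, bound the total band width by $2\pi\alpha$, and then close up with Cauchy--Schwarz and Young's inequality. You have also correctly reverse-engineered the constants $C_1=2\pi$, $C_2=1$ from the target inequality.

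The genuine gap is the band-width bound itself. You propose to ``rewrite the band-width sum as a contour integral of $\sqrt[c]{4-\Delta_\l^2}$ over the spectral bands'' and appeal to the product representation and the Casimir constraint, but this is only a gesture: there is no identity expressing $\sum_k(\l_{2k}-\l_{2k-1})$ as such an integral, and you give no argument that would produce the specific constant $2\pi$. The paper's route here is quite different and considerably deeper than the rest of the proof. It establishes the stronger \emph{individual} estimate $\l_{2n}-\l_{2n-1}\le \frac{2\pi\alpha}{N}$ for each $n$ (Proposition \ref{harmcomplprop}) by introducing the conformal map $\delta(\l)=\arccos\bigl((-1)^N\Delta(\l)/2\bigr)$ from the upper half-plane onto a spike domain $\Omega(b,a)$, expressing each band length as the total mass of a harmonic measure on a boundary segment of $\Omega(b,a)$, and then invoking monotonicity of harmonic measure under the domain inclusion $\Omega(b,a)\subseteq\Omega(-\beta 1_N,\alpha 1_N)$ to reduce to the equilibrium case, where the bound is checked directly from Lemma \ref{speclemma}. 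This harmonic-measure comparison is the real content of the theorem; the elementary bootstrapping you outline afterwards is correct but is the easy part.

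A small slip: at the equilibrium point $(\beta 1_N,\alpha 1_N)$ it is the \emph{gaps} that collapse, not the bands; the relation $\l_{2N}-\l_1=4\alpha$ there is precisely the total band width.
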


\section{Angle Variables} \label{anglecoord}

In this section, we define and study the angle coordinates $(\theta_n)_{1 \leq n \leq N-1}$. Each $\theta_n$ is defined mod $2 \pi$ on $\W \setminus D_n$, where $\W$ is a complex neighborhood of $\M$ in $\M^\C$ as in Lemma \ref{wmungnlemma} and $D_n$ is given by (\ref{dncdef}).

\begin{definition} \label{angledefinition}
For any $1 \leq n \leq N-1$, the function $\theta_n$ is defined for $(b,a) \in \M \setminus D_n$ by
\begin{equation} \label{angle1}
\theta_n := \eta_n + \sum_{n \neq k=1}^{N-1} \beta_k^n \quad \textrm{mod} \; 2 \pi,
\end{equation}
where for $k \neq n$,
\begin{equation} \label{angle2}
\beta_k^n = \int_{\l_{2k}}^{\mu_k^*} \! \frac{\psi_n(\l)}{\sqrt{\Delta^2_\l-4}} \, d\l, \quad \eta_n = \int_{\l_{2n}}^{\mu_n^*} \! \frac{\psi_n(\l)}{\sqrt{\Delta^2_\l-4}} \, d\l \; (\textrm{mod} \; 2\pi),
\end{equation}
and where for $1 \leq k \leq N-1$, $\mu_k^*$ is the Dirichlet divisor defined in (\ref{munstarred}), and $\l_{2k}$ is identified with the ramification point $(\l_{2k}, 0)$ on the Riemann surface $\Sigma_{b,a}$. The integration paths on $\Sigma_{b,a}$ in (\ref{angle2}) are required to be admissible in the sense that their image under the projection $\pi: \Sigma_{b,a} \to \C$ on the first component stays inside the isolating neighborhoods $U_k$.
\end{definition}

Note that, in view of the normalization conditions (\ref{psi}) of $\psi_n$, the above restriction of the paths of integration in (\ref{angle2}) implies that $\eta_n$ and hence $\theta_n$ are well-defined mod $2 \pi$.

\begin{theorem} \label{angleanalytic}
Let $\W$ be the complex neighborhood of $\M$ in $\M^\C$ introduced in Lemma \ref{wmungnlemma}. Then for any $1 \leq n \leq N-1$, the function $\theta_n: \W \setminus D_n \to \C \, (\textrm{mod} \, \pi)$ is analytic.
\end{theorem}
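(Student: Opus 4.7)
The plan is to prove analyticity of each Abelian integral $\eta_n$ and $\beta_k^n$ ($k\ne n$) in Definition \ref{angledefinition} separately on a neighborhood of an arbitrary point of $\W\setminus D_n$, modulo $\pi\Z$. Fix $(b_0,a_0)\in\W\setminus D_n$ and work on a sufficiently small $\Wba\subseteq\W\setminus D_n$ so that, by Lemma \ref{wmungnlemma}, the isolating neighborhoods $U_k$ can be chosen independently of $(b,a)\in\Wba$ and every $\mu_k$ remains in $U_k$. Two analyticity facts underpin the argument. First, each $\mu_k$ is the unique root of the analytic polynomial $y_1(N+1,\l,b,a)$ in $U_k$, and roots in distinct $U_j$'s cannot collide (the $U_j$'s are disjoint), so $\mu_k$ is analytic in $(b,a)\in\Wba$. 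Second, $\Delta^2_\l-4$ factors analytically as $\alpha^{-2N}P_k(\l)Q_k(\l)$ with $P_k(\l)=\l^2-2\tau_k\l+\pi_k$ the monic quadratic factor whose two roots lie in $U_k$; its coefficients $\tau_k=(\l_{2k}+\l_{2k+1})/2$ and $\pi_k=\l_{2k}\l_{2k+1}$, as well as $\gamma_k^2=4(\tau_k^2-\pi_k)$, are analytic on $\Wba$, whereas the individual roots $\l_{2k},\l_{2k+1}$ and $\gamma_k$ are not.

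Fix $k$ and insert the identity from the proof of Lemma \ref{sigmaproplemma},
\begin{displaymath}
\frac{\psi_n(\l)}{\sqrt[c]{\Delta^2_\l-4}}=\frac{\l-\sigma_k^n}{\sqrt[s]{(\l_{2k+1}-\l)(\l-\l_{2k})}}\,\zeta_k^n(\l),
\end{displaymath}
into the integral along the lifted path in $U_k$ from $\l_{2k}$ to $\mu_k^*$. Carry out the substitutions $\l=\tau_k+s$ and then $s=(\gamma_k/2)\sin\varphi$: the first turns the radicand into $(\gamma_k/2)^2-s^2$ with analytic coefficients, the second uniformises the two branch points of $P_k$ by cancelling the remaining square root against the Jacobian. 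The integrand in the new variable is jointly holomorphic in $\varphi$ and in $(b,a)\in\Wba$, with no residual dependence on the sign of $\gamma_k$ since only $\gamma_k\sin\varphi$ and $\gamma_k^2$ enter; the new variable $\varphi$ serves as a local uniformiser of $\Sigma_{b,a}$ over $U_k$. Analyticity of the integral then reduces to the analytic dependence of the upper endpoint $\varphi_k^*$ corresponding to $\mu_k^*$. One has $\gamma_k\sin\varphi_k^*=2(\mu_k-\tau_k)$, which is analytic; and the identity
\begin{displaymath}
\gamma_k^2\cos^2\varphi_k^*=\gamma_k^2-4(\mu_k-\tau_k)^2=-4P_k(\mu_k)=-4\alpha^{2N}\,\frac{(y_1(N,\mu_k)-y_2(N+1,\mu_k))^2}{Q_k(\mu_k)}
\end{displaymath}
exhibits $\gamma_k^2\cos^2\varphi_k^*$ as an analytic perfect square on $\Wba$, so it admits an analytic square root $\gamma_k\cos\varphi_k^*$ whose sign is pinned down, up to a global $\pm$, by the $\sqrt[*]{\,\cdot\,}$-convention in the definition of $\mu_k^*$.

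The residual $\pm$ sign in $\gamma_k\cos\varphi_k^*$ corresponds to the transformation $\varphi_k^*\mapsto\pi-\varphi_k^*$, which together with the ambiguous labelling of the branch points $\l_{2k},\l_{2k+1}$ shifts the integral $\beta_k^n$ by an integer multiple of $\pi$; this is precisely the mod-$\pi$ ambiguity permitted by the theorem. Analyticity on all of $\Wba$, including across $D_k\cap\Wba$ for $k\ne n$, follows by the Riemann extension device used in the proof of Proposition \ref{actanalytprop} (Theorem~A.6 of \cite{kapo}): the integral is analytic off $D_k$ and continuous across it, and the $(\mu_k-\tau_k),(\gamma_k^2)$-expansion near $D_k$ together with Lemma \ref{sigmaproplemma} ($\sigma_k^n-\tau_k=O(\gamma_k^2)$) exhibits the right vanishing order. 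The main obstacle is reconciling the two sign ambiguities — the one in $\gamma_k=\pm\sqrt{\gamma_k^2}$ and the one in the sheet choice at $\mu_k^*$ — so that their combined effect is a shift by an integer multiple of $\pi$, leaving a well-defined analytic function modulo $\pi\Z$. The same argument applied to $\eta_n$ (where $\gamma_n\ne 0$ throughout $\Wba$ so no extension across a collapse locus is needed) gives analyticity mod $2\pi\Z$, completing the proof.
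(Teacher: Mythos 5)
Your uniformization idea — substituting $\l=\tau_k+s$, then $s=(\gamma_k/2)\sin\varphi$, and pushing the endpoint to an explicitly analytic $\varphi_k^*$ — is a genuinely different route from the paper's. The paper instead uses the affine substitution $\l=\l_{2k}+z$ with $D(z)=(\Delta^2(\l_{2k}+z)-4)/z$, and it works separately with a second bad set $E_k:=\{\mu_k\in\{\l_{2k},\l_{2k+1}\}\}$, showing analyticity on $\W\setminus(D_k\cup E_k)$, weak analyticity on both $D_k\cap\W$ and $E_k\cap\W$, continuity, and then invoking Theorem~A.6 of \cite{kapo}. One benefit of your approach is that the set $E_k$ disappears: with the trigonometric parametrization, $\mu_k$ landing on a ramification point just means $\varphi_k^*=\pm\pi/2$, which causes no trouble, so there is one fewer bad set to handle.

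That said, the proof as written has a genuine gap at the $D_k$ extension step. You correctly observe that the substitution $s=(\gamma_k/2)\sin\varphi$ degenerates when $\gamma_k=0$, so analyticity across $D_k$ cannot come out of the uniformization alone, and you fall back on the Riemann-type extension theorem. But Theorem~A.6 of \cite{kapo} requires three ingredients: analyticity off the subvariety, \emph{weak analyticity on the subvariety}, and continuity. You establish (or at least sketch) the first and third, but never produce the restriction of $\beta_k^n$ to $D_k\cap\W$ and verify that it is analytic there; asserting that the $(\mu_k-\tau_k),\gamma_k^2$-expansion ``exhibits the right vanishing order'' is not a substitute. The paper computes this restriction explicitly: on $D_k\cap\W$ the factor $(\l-\sigma_k^n)/\sqrt[s]{(\l_{2k+1}-\l)(\l-\l_{2k})}$ collapses to $\pm i$, so $\beta_k^n|_{D_k\cap\W}=\pm i\int_{\tau_k}^{\mu_k}\zeta_k^n(\l)\,d\l$, and its analyticity follows from that of $\mu_k$ and $\tau_k$. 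Without this computation your argument does not meet the hypotheses of the extension theorem.

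A secondary issue is the sign-ambiguity bookkeeping. You treat every $\beta_k^n$ as defined only modulo $\pi$, attributing the residual ambiguity to the flip $\varphi_k^*\mapsto\pi-\varphi_k^*$ and to relabelling $\l_{2k}\leftrightarrow\l_{2k+1}$. But for $k\ne n$ these ambiguities actually cancel: the normalization (\ref{psi}) forces $\int_{c_k}\psi_n/\sqrt{\Delta^2_\l-4}\,d\l=0$, which is exactly the statement that the half-period contribution from swapping sheets or relabelling the two branch points of $G_k$ vanishes. Hence $\beta_k^n$ is a single-valued analytic function on $\W$, and only $\eta_n=\beta_n^n$ carries a nontrivial mod-$\pi$ ambiguity. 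This does not invalidate the conclusion (a genuine function is trivially well defined modulo $\pi$), but it shows the mechanism is different from what you describe, and the loose treatment here is part of why the final ``reconcile the two signs'' step reads as unresolved rather than as a verified cancellation.
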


\begin{remark}
As the lexicographic ordering of the eigenvalues of $Q(b,a)$ is not continuous on $\W$, it follows that $\eta_n$ and hence $\theta_n$ are only continuous mod $\pi$ on $\W$.
\end{remark}

\begin{proof}[Proof of Theorem \ref{angleanalytic}]
To see that $\theta_n: \W \setminus D_n \to \C$ (mod $\pi$) is analytic, define for any $1 \leq k \leq N-1$ the set
\begin{displaymath}
E_k := \{ (b,a) \in \M^\C: \mu_k(b,a) \in \{ \l_{2k}(b,a), \l_{2k+1}(b,a) \} \}.
\end{displaymath}
Below, we show that for any $1 \leq k \leq N-1$ with $k \neq n$, $\beta_k^n$ is analytic on $\W \setminus (D_k \cup E_k)$, that its restrictions to $D_k \cap \W$ and $E_k \cap \W$ are weakly analytic\footnote{Let $E$ and $F$ be complex Banach spaces, and let $U \subset E$ be open. The map $f: U \to F$ is \emph{weakly analytic on $U$}, if for each $u \in U$, $h \in E$ and $L \in F^*$, the function $z \mapsto Lf(u + zh)$ is analytic in some neighborhood of the origin in $\C$.}, and that it is continuous on $\W$. Together with the fact that $E_k \cap \W$ and $D_k \cap \W$ are analytic subvarieties of $\W$ it then follows that $\beta_k^n$ is analytic on $\W$ - see Theorem A.6 in \cite{kapo}. Similar results can be shown for $\beta_n^n = \eta_n$ (mod $\pi$) on $\W \setminus D_n$, and one concludes that $\theta_n$ (mod $\pi$) is analytic on $\W \setminus D_n$.

To prove that $\beta_k^n$, $k \neq n$, is analytic on $\W \setminus (D_k \cup E_k)$, note that since $\l_{2k}$ is a simple eigenvalue on $\W \setminus D_k$, it is analytic there. Furthermore, $\mu_k^*$ is an analytic function on the (sufficiently small) neighborhood $\W$ of $\M$ in $\M^\C$. On $\W \setminus (D_k \cup E_k)$ we can use the substitution $\l = \l_{2k} + z$ to get
$$ \beta_k^n = \int_{\l_{2k}}^{\mu_k^*} \frac{\psi_n(\l)}{\sqrt{\Delta^2_\l - 4}} d\l = \int_0^{\mu_k^* - \l_{2k}} \frac{\psi_n(\l_{2k} + z)}{\sqrt{z} \sqrt{D(z)}} dz, $$
where $D(z) = \frac{\Delta^2(\l_{2k} + z) - 4}{z}$ is analytic near $z = 0$ and $D(0) \neq 0$. Note that $D(z)$ does not vanish for $z$ on an admissible integration path not going through $\l_{2k+1}$. Such a path exists since $(b,a)$ is in the complement of $E_k$. Furthermore $\psi_n(\l_{2k} + z)$ and $D(z)$ are analytic in $z$ near such a path and depend analytically on $(b,a) \in \W \setminus (D_k \cup E_k)$. Combining these arguments shows that $\b_k^n$ is analytic on $\W \setminus (D_k \cup E_k)$.

For $k \neq n$ with $\l_{2k} \neq \l_{2k+1}$ one has
\begin{equation} \label{analemma}
\int_{\l_{2k}}^{\l_{2k+1}} \frac{\psi_n(\l)}{\sqrt{\Delta^2_\l - 4}} d\l = 0.
\end{equation}
As $\sigma_k^n = \l_{2k}$ if $\l_{2k} = \l_{2k+1}$ one sees that (\ref{analemma}) continues to hold for $(b,a) \in E_k \cap W$ with $\l_{2k} = \l_{2k+1}$ and we have $\beta_k^n |_{E_k \cap \W} \equiv 0$. To prove the analyticity of $\beta_k^n |_{D_k \cap \W}$ consider the representation (\ref{psideltaexp}) of $\frac{\psi_n(\l)}{\sqrt{\Delta^2_\l - 4}}$.
For $(b,a) \in D_k \cap \W$, one has
\begin{displaymath}
  \l_{2k} = \l_{2k+1} = \tau_k = \sigma_k^n,
\end{displaymath}
which implies that the factor $\frac{\l - \sigma_k^n}{\sqrt[s]{(\l_{2k+1} - \l)(\l - \l_{2k})}}$ in (\ref{psideltaexp}) equals $\pm i$. Hence we can write
\begin{displaymath}
\beta_k^n = \int_{\l_{2k}}^{\mu_k^*} \frac{\psi_n(\l)}{\sqrt{\Delta^2_\l-4}} \; d\l = \pm i \int_{\tau_k}^{\mu_k} \zeta_k^n(\l) d\l.
\end{displaymath}
As $\mu_k$ is analytic on $\W$, it then follows that $\b_k^n |_{D_k \cap \W}$ is analytic. To see that $\beta_k^n$ is continuous on $\W$, one separately shows that $\beta_k^n$ is continuous at points in $\W \setminus (D_k \cup E_k)$, $E_k \cap \W \setminus D_k$, $D_k \cap \W \setminus E_k$ and $D_k \cap E_k \cap \W$, where for the proof of the continuity of $\b_k^n$ at points in $D_k \cap E_k \cap \W$ we use (\ref{psideltaexp}) and the estimate $\sigma_k^n - \tau_k = O(\gamma_k^2)$ of Lemma \ref{sigmaproplemma}.

By (\ref{analemma}), $\eta_n$ vanishes mod $\pi$ on $E_n \cap \W \setminus D_n$. Arguing in a similar way as for $\beta_k^n$ one then concludes that $\eta_n$ (mod $\pi$) is analytic on $\W \setminus D_n$.
\end{proof}

\section{Gradients} \label{gradsection}

In this section we establish formulas of the gradients of $I_n$, $\theta_n$ ($1 \leq n \leq N-1$) on $\M$ in terms of products of the fundamental solutions $y_1$ and $y_2$.

Consider the discriminant for a fixed value of $\l$ as a function on $\M$,
$$ \Delta_\l(b,a) = y_1(N) + y_2(N+1). $$
Then $\Delta_\l$ is a real analytic function on $\M$. To obtain a formula for the gradients of $y_1(N)$ and $y_2(N+1)$ with respect to $b$, differentiate $R_{b,a} y_i = \l y_i$ with respect to $b$ in the direction $v \in \R^N$ to get
\begin{equation} \label{diffinhomb}
(R_{b,a} - \l)  \langle \n_b y_i, v \rangle (k) = - v_k y_i(k).
\end{equation}
Differentiating $R_{b,a} y_i = \l y_i$ with respect to $a$ in the direction $u \in \R^N$ leads to
\begin{equation} \label{diffinhoma}
(R_{b,a} - \l) \langle \n_a y_i, u \rangle (k) = - u_{k-1} y_i(k-1) - u_k y_i(k+1).
\end{equation}
Taking the sum of (\ref{diffinhomb}) and (\ref{diffinhoma}) yields
\begin{equation} \label{diffinhomprov}
  (R_{b,a} - \l) \left( \langle \n_b y_i, v \rangle + \langle \n_a y_i, u \rangle \right) (k) = -(R_{v,u} y_i)(k)
\end{equation}
which we can rewrite as
\begin{equation} \label{diffinhom}
(R_{b,a} - \l) \langle \n_{b,a} y_i, (v,u) \rangle (k) = - (R_{v,u} y_i)(k),
\end{equation}
where $\langle \cdot, \cdot \rangle$ in (\ref{diffinhom}) now denotes the standard scalar product in $\R^{2N}$, whereas in (\ref{diffinhomb}), (\ref{diffinhoma}), and (\ref{diffinhomprov}) it is the one in $\R^N$. The inhomogeneous Jacobi difference equation (\ref{diffinhom}) for the sequence $\langle \n_{b,a} y_i, (v,u) \rangle (k)$ can be integrated using the discrete analogue of the method of the variation of constants used for inhomogeneous differential equations. As $\langle \n_{b,a} y_i, (v,u) \rangle (0) = \langle\n_{b,a} y_i, (v,u) \rangle (1) = 0$, one obtains in this way for $m \geq 1$

\begin{eqnarray}
\langle \n_{b,a} \, y_i, (v,u) \rangle (m) & = & -\Bigg( \frac{y_{2}(m)}{a_N} \sum_{k=1}^{m} y_1(k)(R_{v,u}y_i)(k) \nonumber\\
&& \quad \quad - \frac{y_1(m)}{a_N} \sum_{k=1}^{m} y_2(k)(R_{v,u}y_i)(k) \Bigg). \label{diffint}
\end{eqnarray}

In the sequel, we will use (\ref{diffint}) to derive various formulas for the gradients. The common feature among these formulas is that they involve products between the fundamental solutions $y_1$ and $y_2$ of (\ref{diff}). Whereas the gradients with respect to $b = (b_1, \ldots, b_N)$ involve products computed by componentwise multiplication, the gradients with respect to $a = (a_1, \ldots, a_N)$ involve products obtained by multiplying shifted components, reflecting the fact that the $b_j$ are the diagonal elements of the symmetric matrix $L(b,a)$, whereas the $a_j$ are the off-diagonal elements of $L(b,a)$.

To simplify notation for the formulas in this section, we define for sequences $\big( v(j)_{j \in \Z} \big), \big( w(j)_{j \in \Z} \big) \subseteq \C$ the $N$-vectors
\begin{eqnarray}
v \cdot w & := & \left( v(j) w(j) \right)_{1 \leq j \leq N}, \label{nprod1}\\
v \cdot S  w & := & \left( v(j) w(j+1) \right)_{1 \leq j \leq N}, \label{nprod2}
\end{eqnarray}
where $S$ denotes the shift operator of order $1$. 
Combining (\ref{nprod1}) and (\ref{nprod2}), we define the $2N$-vector
\begin{equation} \label{vector2n}
v \cds w := (v \cdot w, v \cdot Sw + w \cdot Sv).
\end{equation}
In case $v = w$ we also use the shorter notation
\begin{equation} \label{vector2nvv}
v^\mbf2 := v \cds v.
\end{equation}
Written componentwise, $v \cds w$ is the $2N$-vector
\setlength\arraycolsep{0.1pt} {
\begin{displaymath} \!\!\!\!\!\ (v \cds w)(j) \! = \!\!
\left\{ \begin{array}{cc}
v(j) w(j) & \quad (1 \leq j \leq N) \\
v(j \!-\! N) w(j \!-\! N \!+\! 1) + v(j \!-\! N \!+\! 1) w(j \!-\! N) & \quad (N \! < \! j \leq \! 2N) \\
\end{array} \right.
\end{displaymath}}

\begin{prop} \label{discrprod}
For any $(b,a) \in \M$, the gradient $\n_{b,a} \Delta_\l = (\n_b \Delta_\l, \n_a \Delta_\l)$ is given by
\setlength\arraycolsep{2pt} {
\begin{eqnarray}
-a_N \n_b \Delta_\l & = & y_2(N) \, y_1 \! \cdot \! y_1 - y_1(N\!\!+\!\!1) \, y_2 \! \cdot \! y_2 + \big( y_2(N\!\!+\!\!1) - y_1(N) \big) \, y_1 \! \cdot \! y_2 \label{nbdelta1} \\
-a_N \n_a \Delta_\l & = & 2 y_2(N) y_1 \cdot S y_1 - 2 y_1(N+1) y_2 \cdot S y_2 \nonumber\\
&& \qquad + \big( y_2(N+1) - y_1(N) \big) \big (y_1 \cdot S y_2 + y_2 \cdot S y_1 \big) \label{nadelta1}
\end{eqnarray}}
or in the notation introduced above
\begin{equation} \label{dpr}
\n_{b,a} \Delta_\l = -\frac{1}{a_N} \left( y_2(N) y_1^\mbf2 - y_1(N+1) y_2^\mbf2  + \left( y_2(N+1) - y_1(N) \right) y_1 \cds y_2 \right).
\end{equation}
The gradients $\n_b \Delta_\l$ and $\n_a \Delta_\l$ admit the representations $(1 \leq m \leq N)$
\setlength\arraycolsep{0.2pt} {
\begin{eqnarray}
\frac{\partial \Delta_\l}{\partial b_m} & = & -\frac{1}{a_m} y_2(N, \l, S^m b, S^m a), \label{nbdeltashift} \\
\frac{\partial \Delta_\l}{\partial a_m} & = & - \! \left( \frac{1}{a_m} y_2(N\!\!+\!\!1, \l, S^m b, S^m a) \!+\! \frac{1}{a_{m\!+\!1}} y_2(N\!\!-\!\!1, \l, S^{m+1} b, S^{m+1} a) \! \right) \! . \label{nadeltashift}
\end{eqnarray}}
\end{prop}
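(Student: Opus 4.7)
The plan is to obtain both sets of formulas from the integrated variation-of-constants identity (\ref{diffint}) applied to the two summands of $\Delta_\l = y_1(N) + y_2(N+1)$. Specializing (\ref{diffint}) at $i=1, m=N$ and at $i=2, m=N+1$ in an arbitrary direction $(v,u) \in \R^{2N}$ and adding the results, I obtain
\begin{displaymath}
\langle \nabla_{b,a} \Delta_\l,(v,u)\rangle = -\frac{1}{a_N}\sum_{k=1}^{N}\big(y_2(N)y_1(k)-y_1(N)y_2(k)\big)(R_{v,u}y_1)(k)
\end{displaymath}
\begin{displaymath}
\qquad -\frac{1}{a_N}\sum_{k=1}^{N+1}\big(y_2(N+1)y_1(k)-y_1(N+1)y_2(k)\big)(R_{v,u}y_2)(k).
\end{displaymath}
Expanding $(R_{v,u}y_i)(k) = u_{k-1}y_i(k-1) + v_k y_i(k) + u_k y_i(k+1)$ inside these sums and collecting the coefficients of $v_m$ and $u_m$ gives (\ref{nbdelta1}) and (\ref{nadelta1}) directly; in the $u_m$-case, the two neighboring contributions $u_k y_i(k+1)|_{k=m}$ and $u_{k-1}y_i(k-1)|_{k=m+1}$ combine into the shifted-product expressions $y_i \cdot S y_j$ of (\ref{nprod2}). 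The compact form (\ref{dpr}) is an immediate rewriting in the notation (\ref{vector2n})--(\ref{vector2nvv}).

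For the shifted representations (\ref{nbdeltashift}) and (\ref{nadeltashift}), I would exploit cyclic invariance: since $\Delta_\l$ is a spectral invariant and shifting $(b,a)$ by $S$ merely conjugates $L^\pm(b,a)$ by a cyclic permutation, one has $\Delta_\l(S^m b, S^m a) = \Delta_\l(b,a)$, and the chain rule yields $\partial_{b_m}\Delta_\l(b,a) = \partial_{b_N}\Delta_\l(S^m b, S^m a)$, together with the analogous identities for $\partial_{a_m}$ (using $(S^m a)_N = a_m$ and $(S^{m+1}a)_N = a_{m+1}$ by $N$-periodicity). It therefore suffices to verify the formulas in the special case $m=N$. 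For $b$, substituting $j=N$ into (\ref{nbdelta1}) and factoring $y_2(N)$ out of the quadratic bracket identifies the remaining factor as the Wronskian $W(N) = y_1(N)y_2(N+1) - y_1(N+1)y_2(N)$, which equals $1$ by (\ref{wronskispecial}); this gives $\partial_{b_N}\Delta_\l = -y_2(N)/a_N$, hence (\ref{nbdeltashift}) after shifting.

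The $a$-case is the one genuine obstacle. A direct expansion of the $N$-th component of (\ref{nadelta1}), after using $W(N)=1$, reduces $-a_N(\nabla_a\Delta_\l)_N$ to the clean expression $y_2(N+1) - y_1(N)$. Matching this with the claimed decomposition then amounts to proving the identity $y_2(N-1,\l,Sb,Sa) = -\frac{a_1}{a_N}\,y_1(N,\l,b,a)$, which I would establish by the following observation: by linearity of (\ref{diff}), the once-shifted fundamental solution $y_2(\cdot,\l,Sb,Sa)$ coincides with $-\frac{a_1}{a_N}\,y_1(\cdot+1,\l,b,a)$. Both sides vanish at $k=0$ and equal $1$ at $k=1$, as a short check using the original recurrence at the endpoint $k=1$ (where $a_0 = a_N$ by periodicity determines $y_1(2) = -a_N/a_1$) confirms. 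Once this identification is in place, (\ref{nadeltashift}) follows from the case $m=N$ by the same cyclic-shift argument as for $b$. The remainder of the argument is purely index bookkeeping.
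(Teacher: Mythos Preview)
Your derivation of (\ref{dpr}) follows the same line as the paper's --- both start from (\ref{diffint}) at $(i,m)=(1,N)$ and $(2,N+1)$ and read off the components. There is one nontrivial point you pass over as ``index bookkeeping'': at the endpoint $m=N$ the shifted sums do \emph{not} directly produce the $N$-th component of $y_i \cdot S y_j$; instead a correction term appears (from $u_0=u_N$), and this correction only vanishes after invoking the Wronskian identity $W(N)=1$. The paper isolates these terms as $T_1,T_2$ and checks $y_2(N)T_1-y_1(N)T_2=0$. You will need exactly this cancellation, so it is worth flagging rather than absorbing into bookkeeping.

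Your route to the shift formulas (\ref{nbdeltashift})--(\ref{nadeltashift}) is genuinely different from the paper's and, in fact, cleaner. The paper first establishes the general transfer identity
\[
y_2(n,\l,S^m b,S^m a)=\tfrac{a_m}{a_N}\bigl(y_2(n+m)y_1(m)-y_1(n+m)y_2(m)\bigr)
\]
together with addition formulas for $y_i(N+m)$, and then pushes these through (\ref{nbdelta1})--(\ref{nadelta1}) for every $m$. You instead exploit the cyclic invariance $\Delta_\l(S^m b,S^m a)=\Delta_\l(b,a)$ to reduce to the single case $m=N$, where the $b$-component collapses to $y_2(N)W(N)=y_2(N)$ and the $a$-component to $(y_2(N+1)-y_1(N))W(N)$; the remaining identification $y_2(N-1,\l,Sb,Sa)=-\tfrac{a_1}{a_N}y_1(N,\l,b,a)$ is the special case $n=N-1$, $m=1$ of the paper's transfer identity, but you get it with a two-line initial-value check. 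Your approach buys brevity; the paper's buys the general transfer formula (\ref{repry2shift}), which is not used elsewhere, so nothing is lost.
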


\begin{proof}
The claimed formula (\ref{dpr}) follows from the defintion of $\Delta_\l$ and formula (\ref{diffint}). Indeed, evaluate (\ref{diffint}) for $i=1$ and $m=N$ to get
\setlength\arraycolsep{1pt} {
\begin{eqnarray}
\langle \n_{b,a} y_1, (v,\!u) \rangle (N) & \! = \! & -\frac{y_{2}(N)}{a_N} \! \sum_{k=1}^{N} y_1(k) \left( u_{k-1} y_1(k-1) + v_k y_1(k) + u_{k} y_1(k+1) \right) \nonumber\\
&& \!\!\!\!\!\!\!\!\!\!\!\!\!\!\!\!\!\!\!\!\!\!\!\!\!\!\! + \frac{y_1(N)}{a_N} \! \sum_{k=1}^{N} y_2(k) \left( u_{k-1} y_1(k-1) + v_k y_1(k) + u_{k} y_1(k+1) \right). \label{dbay12zeile}
\end{eqnarray}}
In order to identify these two sums with $\langle y_1^\mbf2, (v,u) \rangle$ and $\langle y_1 \cds y_2, (v,u) \rangle$, respectively, note that
$$ \sum_{k=1}^N u_{k-1} y_1(k) y_1(k-1) = \sum_{k=1}^N u_k y_1(k) y_1(k+1) + u_N T_1 $$
where
\begin{displaymath}
  T_1 := y_1(0) y_1(1) - y_1(N) y_1(N+1).
\end{displaymath}
For the second sum in (\ref{dbay12zeile}), we get an expression of the same type with a similar correction term
\begin{displaymath}
  T_2 := y_1(0) y_2(1) - y_1(N) y_2(N+1).
\end{displaymath}
Taking into account the initial conditions of the fundamental solutions and the Wronskian identity (\ref{wronski}), one sees that  $y_2(N) T_1 - y_1(N) T_2$ vanishes. Hence we have the formula
\begin{equation} \label{grad1}
\langle \n_{b,a} y_1, (v,u) \rangle (N) = -\frac{1}{a_{N}} \left( y_2(N) \langle y_1^\mbf2, (v,u) \rangle - y_1(N) \langle y_1 \cds y_2, (v,u) \rangle \right).
\end{equation}
Similarly, evaluating formula (\ref{diffint}) for $i=2$ and $m=N+1$ leads to
\begin{equation} \label{grad2}
\langle \n_{b,a} y_2, (v,u) \rangle (N\!+\!1) \! = \! -\frac{1}{a_{N}} \left( y_2(N\!+\!1) \langle y_1 \cds \, y_2, (v,u) \rangle \! - \! y_1(N\!+\!1) \langle y_2^\mbf2, (v,u) \rangle \right).
\end{equation}
Here we used that the value of the right side of (\ref{diffint}) does not change when we omit the term for $k = m = N+1$ in both sums.

It remains to prove the two formulas (\ref{nbdeltashift}) and (\ref{nadeltashift}). We first note that
\setlength\arraycolsep{2pt} {\begin{eqnarray}
  y_2(n, \l, S^m b, S^m a) & = & \frac{a_m}{a_N} \Big( y_2(n+m,\l,b,a) y_1(m,\l,b,a) \nonumber \\
&& \quad - y_1(n+m,\l,b,a) y_2(m,\l,b,a) \Big), \label{repry2shift}
\end{eqnarray}
since both sides of (\ref{repry2shift}) are solutions of $R_{S^m b, S^m a} y = \l y$ (for fixed $m \in \Z$) with the same initial conditions at $n = 0,1$. For $n=1$ this follows from the Wronskian identity (\ref{wronski}). Similarly, one shows that}
\begin{eqnarray}
  y_1(N+m,\l) & = & y_1(N,\l) y_1(m,\l) + y_1(N+1,\l) y_2(m,\l), \label{y1Nm} \\
  y_2(N+m,\l) & = & y_2(N,\l) y_1(m,\l) + y_2(N+1,\l) y_2(m,\l). \label{y2Nm}
\end{eqnarray}
for any $(b,a) \in \M$. Hence, suppressing the variable $\l$, we get
\setlength\arraycolsep{1pt} {\begin{eqnarray*}
  y_2(N,S^m b,S^m a) 
& = & \frac{a_m}{a_N} \Big( \big( y_2(N) y_1(m) + y_2(N+1) y_2(m) \big) \, y_1(m) \\
&& \quad - \big( y_1(N) y_1(m) + y_1(N+1) y_2(m) \big) \, y_2(m) \Big) \\
& = & \frac{a_m}{a_N} \Big( y_2(N) y_1(m)^2 + \big( y_2(N+1) - y_1(N) \big) y_1(m) y_2(m) \\
&& \qquad - y_1(N+1) y_2(m)^2 \Big).
\end{eqnarray*}
By (\ref{nbdelta1}) this leads to}
\begin{displaymath}
 y_2(N,S^m b,S^m a) = -a_m \frac{\partial \Delta_\l}{\partial b_m}
\end{displaymath}
and formula (\ref{nbdeltashift}) is established. To prove (\ref{nadeltashift}), we first conclude from (\ref{repry2shift}) that
\begin{eqnarray}
    \frac{a_N}{a_{m+1}} y_2(N-1, S^{m+1} b, S^{m+1} a) & = & y_2(N+m,b,a) y_1(m+1,b,a) \nonumber\\
&& \; - y_1(N+m,b,a) y_2(m+1,b,a) \label{y2n-1m+1exp}
\end{eqnarray}
and
\begin{eqnarray}
   \frac{a_N}{a_m} y_2(N+1, S^m b, S^m a) & = & y_2(N+m+1,b,a) y_1(m,b,a) \nonumber\\
&& \; - y_1(N+m+1,b,a) y_2(m,b,a). \label{y2n+1mexp}
\end{eqnarray}
Now expand the right hand sides of (\ref{y2n-1m+1exp}) and (\ref{y2n+1mexp}) according to (\ref{y1Nm}) and (\ref{y2Nm}). By (\ref{nadelta1}), the sum of (\ref{y2n-1m+1exp}) and (\ref{y2n+1mexp}) is $-a_N \frac{\partial \Delta_\l}{\partial a_m}$, thus proving (\ref{nadeltashift}).
\end{proof}

As a next step, we compute the gradients of the Dirichlet and periodic eigenvalues. In the following lemma, we consider the fundamental solution $y_1(\cdot, \mu)$ as an $N$-vector $y_1(j, \mu)_{1 \leq j \leq N}$. Let $\|y_1(\mu)\|^2 = \sum_{j=1}^N y_1(j, \mu)^2$, and denote by $\, \dot{} \,$ the derivative with respect to $\l$.

\begin{lemma} \label{dirsimple}
If $\mu$ is a Dirichlet eigenvalue of $L(b,a)$, then
\begin{equation} \label{dotynorm}
a_N y_1(N,\mu) \dot{y}_1(N+1,\mu) = \|y_1(\mu)\|^2 > 0.
\end{equation}
In particular, $\dot{y}_1(N+1,\mu) \neq 0$, which implies that all Dirichlet eigenvalues are simple.
\end{lemma}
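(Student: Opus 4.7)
The plan is to apply the second Wronskian identity (\ref{wronski2}) to the fundamental solution $y_1(\cdot, \l)$ and telescope over $k = 1, \ldots, N$. That identity reads
\begin{displaymath}
a_k W(y_1,\dot{y}_1)(k) = a_{k-1} W(y_1,\dot{y}_1)(k-1) + y_1(k,\l)^2,
\end{displaymath}
so summing yields
\begin{displaymath}
a_N W(y_1,\dot{y}_1)(N) = a_0 W(y_1,\dot{y}_1)(0) + \sum_{k=1}^{N} y_1(k,\l)^2 = a_0 W(y_1,\dot{y}_1)(0) + \|y_1(\l)\|^2.
\end{displaymath}

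Next I would dispose of the boundary term. The initial conditions $y_1(0,\l) = 1$ and $y_1(1,\l) = 0$ hold for every $\l$, so differentiating in $\l$ gives $\dot{y}_1(0,\l) = 0$ and $\dot{y}_1(1,\l) = 0$; hence $W(y_1,\dot{y}_1)(0) = y_1(0,\l)\dot{y}_1(1,\l) - y_1(1,\l)\dot{y}_1(0,\l) = 0$. Therefore
\begin{displaymath}
a_N W(y_1,\dot{y}_1)(N) = \|y_1(\l)\|^2.
\end{displaymath}
Now specialize to $\l = \mu$, a Dirichlet eigenvalue, so that $y_1(N+1,\mu) = 0$. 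Then the Wronskian at $n = N$ collapses to $W(y_1,\dot{y}_1)(N) = y_1(N,\mu)\,\dot{y}_1(N+1,\mu)$, and the desired identity
\begin{displaymath}
a_N y_1(N,\mu)\,\dot{y}_1(N+1,\mu) = \|y_1(\mu)\|^2
\end{displaymath}
follows.

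To get the strict positivity $\|y_1(\mu)\|^2 > 0$, I would invoke the Wronskian identity (\ref{wronski}) at $n = N$: the relation $y_1(N,\mu)\,y_2(N+1,\mu) - y_1(N+1,\mu)\,y_2(N,\mu) = 1$ combined with $y_1(N+1,\mu) = 0$ forces $y_1(N,\mu)\neq 0$, whence $\|y_1(\mu)\|^2 \geq y_1(N,\mu)^2 > 0$. From the identity above, $\dot{y}_1(N+1,\mu) \neq 0$; in other words, $\mu$ is a simple root of the polynomial $\l \mapsto y_1(N+1,\l)$, and since the Dirichlet eigenvalues are precisely the zeros of this polynomial (equivalently, the eigenvalues of the $(N-1)\times(N-1)$ Jacobi matrix $L_2$ recorded earlier), they are all simple. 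There is no real obstacle here — the proof is essentially the standard argument that bilinear forms arising from telescoped Wronskian identities produce norm-squares; the only mildly delicate point is to notice that $y_1(N,\mu) \neq 0$, which is handed to us for free by (\ref{wronski}).
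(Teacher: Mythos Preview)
Your proof is correct and follows exactly the approach the paper indicates: summing the relations (\ref{wronski2}) for $k=1,\ldots,N$, using that the boundary term at $k=0$ vanishes by the $\l$-independence of the initial data, and then specializing to $\l=\mu$. Your additional justification of strict positivity via $y_1(N,\mu)\neq 0$ from the Wronskian identity is a clean way to fill in the only detail the paper leaves implicit.
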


\begin{proof}
This follows from adding up the relations (\ref{wronski2}).
\end{proof}

As the Dirichlet eigenvalues $(\mu_n)_{1 \leq n \leq N-1}$ of $L(b,a)$ coincide with the roots of $y_1(N+1, \mu)$ and these roots are simple, they are real analytic on $\M$. Similarly, the eigenvalues $\l_1$ and $\l_{2N}$ are real analytic on $\M$, whereas for any $1 \leq n \leq N-1$, $\l_{2n}$ and $\l_{2n+1}$ are real analytic on $\M \setminus D_n$. Note that for $(b,a) \in \M \setminus D_n$ and $i \in \{ 2n, 2n+1 \}$, we have $\dot{\Delta}_{\l_i} \neq 0$ as $\l_i$ is a simple eigenvalue.

\begin{prop} \label{muprod}
For any $1 \leq n \leq N-1$, the gradients of the periodic eigenvalues $\l_i$ ($i = 2n, 2n+1$) on $\M \setminus D_n$ and of the Dirichlet eigenvalues $\mu_n$ on $\M$ are given by
\begin{equation} \label{dlambda}
\n_{b,a} \l_i = -\frac{\n_{b,a} \Delta_\l|_{\l = \l_i}}{\dot{\Delta}_{\l_i}} = f_i^\mbf2 \quad \textrm{and} \quad \n_{b,a} \mu_n = g_n^\mbf2, 
\end{equation}
where we denote by $f_i$ the eigenvector of $L(b, a)$ associated to $\l_i$, normalized by
\begin{displaymath}
  \sum_{j=1}^N f_i(j)^2 = 1 \quad \textrm{and} \quad \big( f_i(1), f_i(2) \big) \in (\R_{>0} \times \R) \cup (\{ 0 \} \times \R_{>0}),
\end{displaymath}
and where $g_n = (g_n(j))_{1 \leq j \leq N}$ is the fundamental solution $y_1(\cdot, \mu_n)$ normalized so that $\sum_{j=1}^N g_n(j)^2 = 1$.
\end{prop}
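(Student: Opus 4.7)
The plan is to establish both gradient formulas via the Hellmann--Feynman formula for simple eigenvalues of real symmetric matrices. The first equality $\n_{b,a}\l_i = -\n_{b,a}\Delta_\l|_{\l=\l_i}/\dot\Delta_{\l_i}$ is immediate: the paper has already noted that for $i \in \{2n, 2n+1\}$ the eigenvalue $\l_i$ is simple on $\M\setminus D_n$ so that $\dot\Delta_{\l_i} \neq 0$, and by (\ref{deltalambdapm2}) the composite $(b,a)\mapsto\Delta(\l_i(b,a),b,a) = (-1)^{n+N}\cdot 2$ is locally constant, so the chain rule yields $\dot\Delta_{\l_i}\,\n_{b,a}\l_i + \n_{b,a}\Delta_\l|_{\l=\l_i} = 0$.

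For the identification $\n_{b,a}\l_i = f_i^{\mbf2}$, I would observe that $\l_i$ is a simple eigenvalue of the real symmetric matrix $L^{\epsilon}(b,a)$ for the sign $\epsilon\in\{+,-\}$ determined by $\Delta(\l_i) = 2\epsilon$, with eigenvector $f_i$. Differentiating $L^\epsilon f_i = \l_i f_i$ in an arbitrary direction $(v,u) \in \R^{2N}$ and taking the inner product with $f_i$ gives, using the symmetry of $L^\epsilon$ and $\|f_i\|^2 = 1$, the Hellmann--Feynman identity
\[
\n_{b,a}\l_i\cdot(v,u) \,=\, \langle f_i,\dot L^\epsilon f_i\rangle \,=\, \sum_{j=1}^N v_j f_i(j)^2 \,+\, 2\sum_{j=1}^{N-1} u_j f_i(j)f_i(j+1) \,+\, 2\epsilon\, u_N f_i(N)f_i(1).
\]
Since $f_i$ extends $\epsilon$-periodically to $\Z$, i.e.\ $f_i(N+1) = \epsilon f_i(1)$, the last term equals $2u_N f_i(N) f_i(N+1)$; comparing with the definition (\ref{vector2nvv}) of $f_i^{\mbf2}$ shows the right-hand side is exactly $\langle f_i^{\mbf2},(v,u)\rangle$.

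For the Dirichlet eigenvalues, the vector $\tilde g_n := (g_n(j))_{2\le j\le N}$ is an eigenvector of the $(N-1)\times(N-1)$ symmetric matrix $L_2(b,a)$ with eigenvalue $\mu_n$: evaluating $R_{b,a} g_n = \mu_n g_n$ at $k = 2,\dots,N$ and using the boundary conditions $g_n(1) = 0$ and $g_n(N+1) = 0$ (the latter being the defining relation $y_1(N+1,\mu_n) = 0$) reduces precisely to $L_2\tilde g_n = \mu_n\tilde g_n$. Because $\|\tilde g_n\|^2 = \|g_n\|^2 = 1$, Hellmann--Feynman (combined with the fact, from Lemma \ref{dirsimple}, that $\mu_n$ is simple) yields
\[
\n_{b,a}\mu_n\cdot(v,u) \,=\, \sum_{j=2}^N v_j g_n(j)^2 \,+\, 2\sum_{j=2}^{N-1}u_j g_n(j)g_n(j+1),
\]
which coincides with $\langle g_n^{\mbf2},(v,u)\rangle$ because the three missing terms $v_1 g_n(1)^2$, $2u_1 g_n(1)g_n(2)$ and $2u_N g_n(N)g_n(N+1)$ all vanish by the boundary conditions on $g_n$.

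The only delicate point in this argument is the bookkeeping of the corner sign in the periodic Jacobi matrix: one must check that the $\epsilon$ appearing in the $(1,N)$ and $(N,1)$ entries of $L^\epsilon$ agrees with the $\epsilon$ appearing in the $\epsilon$-periodic extension $f_i(N+1) = \epsilon f_i(1)$ used to define the $2N$-th component of $f_i^{\mbf2}$. Once this sign matching is tracked, the periodic and antiperiodic cases are handled uniformly by a single Hellmann--Feynman computation, and no further obstacle arises.
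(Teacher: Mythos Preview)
Your proof is correct. For the periodic eigenvalues $\l_i$ your argument coincides with the paper's: both differentiate $R_{b,a}f_i=\l_i f_i$ (equivalently $L^\epsilon f_i=\l_i f_i$) and take the inner product with $f_i$, i.e.\ Hellmann--Feynman; you are simply more explicit about the corner sign $\epsilon$.

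For the Dirichlet eigenvalues the routes diverge. You apply Hellmann--Feynman directly to the $(N-1)\times(N-1)$ matrix $L_2$, using that $L_2$ is independent of $b_1,a_1,a_N$ and that the corresponding components of $g_n^{\mbf2}$ vanish by the boundary conditions $g_n(1)=g_n(N+1)=0$. The paper instead differentiates the defining relation $y_1(N+1,\mu_n)=0$ and computes $\nba y_1(N+1,\l)|_{\l=\mu_n}$ via the discrete variation-of-constants formula (\ref{diffint}), arriving at the intermediate expression (\ref{wrrepr1}),
\[
\nba\mu_n=\frac{y_1^{\mbf2}(\mu_n)}{a_N\,y_1(N,\mu_n)\,\dot y_1(N+1,\mu_n)},
\]
before invoking Lemma~\ref{dirsimple}. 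Your argument is shorter and more elementary for the statement at hand; the paper's detour, however, produces the formulas (\ref{dmun1}) and (\ref{wrrepr1}), which are reused verbatim later in the proofs of Proposition~\ref{dbabkn} and Corollary~\ref{mndl}.
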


\begin{proof}
We first show the second formula in (\ref{dlambda}). Differentiating $y_1(N+1, \mu_n) = 0$ with respect to $(b,a)$, one obtains
\begin{equation} \label{dmun1}
\n_{b,a} \mu_n = -\frac{\n_{b,a} y_1(N+1, \l)|_{\l = \mu_n}}{\dot{y}_1(N+1, \mu_n)}.
\end{equation}
Here we used that $\dot{y}_1(N+1, \mu_n) \neq 0$ by Lemma \ref{dirsimple}. To compute the gradient $\n_{b,a} y_1(N+1, \l)|_{\l = \mu_n}$, we evaluate (\ref{diffint}) for $i=1$ and $m = N+1$. In view of $y_1(N+1, \mu_n) = 0$ and taking into account (\ref{wrmu}), one then gets 
\begin{equation} \label{wrrepr1}
\n_{b,a} \mu_n = \frac{y_1^\mbf2(\mu_n)}{a_N y_1(N,\mu_n) \dot{y}_1(N+1,\mu_n)}.
\end{equation}
The claimed formula $\n_{b,a} \mu_n = g_n^\mbf2$ now follows from Lemma \ref{dirsimple}. By differentiating $\Delta_{\l_i} = \pm 2$ with respect to $(b,a)$, one obtains $\n_{b,a} \l_i = -\n_{b,a} \Delta_\l|_{\l = \l_i} / \dot{\Delta}_{\l_i}$ in a similar fashion. To see that $\n_{b,a} \l_i = f_i^\mbf2$, differentiate $R_{b,a} f_i = \l_i f_i$ with respect to $(b,a)$ in the direction $(v,u) \in \R^{2N}$,
\begin{displaymath}
  R_{b,a} \langle \n_{b,a} f_i, (v,u) \rangle (k) + (R_{v,u} f_i)(k) \!=\! \langle \n_{b,a} \l_i, (v,u) \rangle f_i(k) + \l_i \langle \n_{b,a} f_i, (v,u) \rangle (k),
\end{displaymath}
where $\langle \cdot, \cdot \rangle$ denotes the standard scalar product in $\R^{2N}$. Take the scalar product (in $\R^N$) of the above equation with $f_i$. Now use that
\begin{displaymath}
\langle \n_{b,a} f_i(v,u), R_{b,a} f_i \rangle_{\R^N} = \l_i \langle \n_{b,a} f_i(v,u), f_i \rangle_{\R^N},
\end{displaymath}
$\langle f_i, f_i \rangle_{\R^N} = 1$, and
\begin{displaymath}
  \langle R_{v,u} f_i, f_i \rangle_{\R^N} = \langle f_i^\mbf2, (v,u) \rangle_{\R^{2N}},
\end{displaymath}
to conclude that $\nba \l_i = f_i^\mbf2$ holds.
\end{proof}

To compute the Poisson brackets involving angle variables we need to establish some additional auxiliary results.
Recall from section \ref{coord} that for $1 \leq k,n \leq N-1$ with $k \neq n$ and $(b,a) \in \M$, $\beta_k^n$ is given by
\begin{equation} \label{reviewdef}
\beta_k^n = \int_{\l_{2k}}^{\mu_k^*} \frac{\psi_n(\l)}{\sqrt{\Delta^2_\l-4}} \; d\l, 
\end{equation}
whereas
\begin{equation} \label{reviewdef2}
\beta_n^n := \eta_n = \int_{\l_{2n}}^{\mu_n^*} \frac{\psi_n(\l)}{\sqrt{\Delta^2_\l-4}} \; d\l \;\; (\textrm{mod } 2\pi).
\end{equation}
By Theorem \ref{angleanalytic}, the functions $\b_k^n$ with $k \neq n$ are real analytic on $\M$, whereas $\b_n^n$, when considered mod $\pi$, is real analytic on $\M \setminus D_n$.

\begin{prop} \label{dbabkn}
Let $1 \leq k \leq N-1$ and $(b,a) \in \M$. If $\gamma_k > 0$ and $\l_{2k} = \mu_k$, then for any $1 \leq n \leq N-1$,
\begin{displaymath}
\n_{b,a} \beta_k^n = -\frac{\psi_n(\mu_k)}{a_N \dot{\Delta}_{\mu_k}} \, g_k \cds h_k, 
\end{displaymath}
where $h_k$ denotes the solution of $R_{b,a} y = \mu_k y$ orthogonal to $g_k$, i.e.
\begin{displaymath}
\sum_{j=1}^N g_k(j) h_k(j) = 0,
\end{displaymath}
satisfying the normalization condition $W(h_k, g_k)(N) = 1$.
\end{prop}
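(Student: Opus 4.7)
The plan is to reduce the computation of $\nba \beta_k^n$ to a boundary term in a one-variable integral with analytic integrand. Since $\gamma_k > 0$, $\l_{2k}$ is a simple root of $\Delta^2_\l - 4$, so we may factor $\Delta^2_\l - 4 = (\l - \l_{2k}) p(\l)$ with $p$ analytic near $\l_{2k}$ and $p(\l_{2k}) = 2\Delta_{\l_{2k}} \dot\Delta_{\l_{2k}} \neq 0$. Introducing the local uniformizer $z = \sqrt[c]{\l - \l_{2k}}$ of $\Sigma_{b,a}$ at the ramification point, the differential $\psi_n(\l)/\sqrt{\Delta^2_\l - 4}\, d\l$ pulls back to $F(z,b,a)\, dz$ with $F$ jointly analytic, and
\begin{displaymath}
\beta_k^n = \int_0^{z_*} F(z,b,a)\, dz, \quad F(0,b,a) = \frac{2\psi_n(\l_{2k})}{\sqrt{p(\l_{2k})}}, \quad z_* = \frac{y_1(N,\mu_k) - y_2(N+1,\mu_k)}{\sqrt{p(\mu_k)}}.
\end{displaymath}
Under the hypothesis $\mu_k = \l_{2k}$ one has $\Delta^2_{\mu_k} - 4 = 0$, hence $y_1(N,\mu_k) - y_2(N+1,\mu_k) = 0$ by (\ref{munstarred}), and in particular $z_* = 0$. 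When differentiating $\beta_k^n$ at this point, the interior contribution $\int_0^{z_*} \nba F\, dz$ vanishes, leaving only the boundary term
\begin{displaymath}
\nba \beta_k^n = F(0,b,a)\, \nba z_* = \frac{\psi_n(\mu_k)}{\Delta_{\mu_k} \dot\Delta_{\mu_k}}\, \nba \bigl[y_1(N,\mu_k) - y_2(N+1,\mu_k)\bigr].
\end{displaymath}

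Next I would compute $\nba[y_1(N,\mu_k) - y_2(N+1,\mu_k)]$ by the chain rule: the $\l$-fixed part is given by the explicit formulas (\ref{grad1}) and (\ref{grad2}) of Proposition \ref{discrprod}, while the $\dot y$-part uses $\nba \mu_k = g_k^\mbf2$ from Proposition \ref{muprod}. At $\mu_k = \l_{2k}$ the simplifications $y_1(N+1,\mu_k) = 0$, $y_1(N,\mu_k) = y_2(N+1,\mu_k) = \epsilon \in \{\pm 1\}$, and $\Delta_{\mu_k} = 2\epsilon$ hold; in particular $y_1(\cdot,\mu_k)$ is a (anti)periodic eigenvector of $L(b,a)$, namely $y_1 = \|y_1\|\, g_k$. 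Decomposing $y_2(\cdot,\mu_k) = \alpha\, g_k - \|y_1\|^{-1} h_k$ in the two-dimensional solution space of $R_{b,a} y = \mu_k y$---with the coefficient $-\|y_1\|^{-1}$ forced by the identifications $W(y_2,y_1)(N) = -1$ (from (\ref{wrmu})) and $W(h_k,g_k)(N) = 1$---converts $y_1^\mbf2$ into $\|y_1\|^2 g_k^\mbf2$ and $y_1 \cds y_2$ into $\alpha \|y_1\|\, g_k^\mbf2 - g_k \cds h_k$.

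The step I expect to be the most delicate is verifying that the resulting $g_k^\mbf2$ coefficient cancels, so that only $g_k \cds h_k$ survives. This cancellation is controlled by two scalar identities at $\mu_k = \l_{2k}$: summing (\ref{wronski2}) and the mixed analogue $a_n W(y_1,\dot y_2)(n) - a_{n-1} W(y_1,\dot y_2)(n-1) = y_1(n) y_2(n)$ from $n=1$ to $n=N$, using the $\l$-independent boundary data $y_1(0) = y_2(1) = 1$, $y_1(1) = y_2(0) = 0$, yields $\dot\Delta_{\mu_k} = y_2(N,\mu_k)\|y_1\|^2/a_N$ and $a_N \dot y_2(N+1,\mu_k) = \epsilon \langle y_1, y_2\rangle$. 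These two relations together precisely kill the $g_k^\mbf2$ contribution and leave $\nba[y_1(N,\mu_k) - y_2(N+1,\mu_k)] = -(\Delta_{\mu_k}/a_N)\, g_k \cds h_k$. Substituting back cancels the factor $\Delta_{\mu_k}$ in the formula for $\nba \beta_k^n$, producing the claimed identity $\nba \beta_k^n = -\psi_n(\mu_k)\, (a_N \dot\Delta_{\mu_k})^{-1}\, g_k \cds h_k$.
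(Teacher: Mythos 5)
Your argument is correct and takes a genuinely different route from the paper's. The paper approximates the given $(b,a)$ by isospectral points $(b',a')$ with $\l_{2k} < \mu_k(b',a') < \l_{2k+1}$, computes $\nba \beta_k^n$ there as $\frac{\psi_n(\mu_k)}{\sqrt[*]{\Delta^2_{\mu_k}-4}}(\nba\mu_k - \nba\l_{2k}) + E$, shows $E \to 0$, and then passes to the limit through a fairly delicate split of the resulting quotient into two pieces $I$ and $II$ (the denominator being only $O(\sqrt{\mu_k - \l_{2k}})$ while the numerator of $II$ is $O(\mu_k - \l_{2k})$). You sidestep the whole $0/0$ issue by passing to the uniformizer $z$ at the ramification point: since $z_* = (y_1(N,\mu_k)-y_2(N+1,\mu_k))/\sqrt{p(\mu_k)}$ is a real-analytic function of $(b,a)$ (unlike $\sqrt{\mu_k - \l_{2k}}$), the Leibniz rule gives $\nba\beta_k^n = F(0)\,\nba z_*$ directly, with no limiting argument. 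The subsequent algebra also differs: the paper arrives at $-\frac{1}{a_N\dot\Delta}\,y_1 \cds y_0$ with $y_0 = \frac{\dot y_2(N+1)}{\dot y_1(N+1)}y_1 - y_2$ and identifies $h_k = \|y_1\|\,y_0$ only at the end, proving $\langle y_1, y_0\rangle = 0$ via translation invariance of $\beta_k^n$; you instead decompose $y_2 = \alpha g_k - \|y_1\|^{-1}h_k$ up front (the coefficient $-\|y_1\|^{-1}$ is indeed forced by $W(y_2,y_1)(N) = -1$) and verify that the $g_k^\mbf2$ coefficient cancels. I checked your two scalar identities $a_N\dot\Delta_{\mu_k} = y_2(N)\|y_1\|^2$ and $a_N\dot y_2(N+1,\mu_k) = \epsilon\langle y_1,y_2\rangle$ (the first combines Lemma \ref{dirsimple} with (\ref{doty1}) at $y_1(N+1)=0$; the second follows from summing the mixed Wronskian recursion), and they do yield $\dot y_1(N) - \dot y_2(N+1) = (y_2(N)\|y_1\|^2 - 2\epsilon\langle y_1,y_2\rangle)/a_N$, which exactly cancels the explicit $g_k^\mbf2$ terms from (\ref{grad1})--(\ref{grad2}), leaving $\nba[y_1(N,\mu_k)-y_2(N+1,\mu_k)] = -\frac{\Delta_{\mu_k}}{a_N}g_k\cds h_k$; substituting back kills the $\Delta_{\mu_k}$ and gives the claimed formula. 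The two versions of $h_k$ also agree, since $\dot y_2(N+1)/\dot y_1(N+1) = \langle y_1,y_2\rangle/\|y_1\|^2$ is exactly the content of your two identities. What your approach buys is the avoidance of the limiting step; what it costs is having to set up the uniformizer coordinate carefully (in particular keeping the branch of $\sqrt{p}$ consistent with the starred root $\sqrt[*]{\cdot}$ in (\ref{munstarred})), which you handled correctly by identifying $z_*$ with the analytic expression above.
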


\begin{proof}
We use a limiting procedure first introduced in \cite{mcva1} for the nonlinear Schr\"odinger equation and subsequently used for the KdV equation in \cite{kama}, \cite{kapo}. We approximate $(b,a) \in \M$ with $\l_{2k}(b,a) = \mu_k(b,a) < \l_{2k+1}(b,a)$ by $(b',a') \in \textrm{Iso}(b,a)$ satisfying $\l_{2k}(b,a) < \mu_k(b',a') < \l_{2k+1}(b,a)$. For such $(b',a')$, using the substitution $\l = \l_{2k}+z$ in the integral of (\ref{reviewdef}), we obtain
\begin{equation} \label{anglesubst}
\beta_k^n(b',a') = \int_{\l_{2k}}^{\mu_k^*} \frac{\psi_n(\l)}{\sqrt{\Delta^2_\l-4}} \; d\l = \int_0^{\mu_k-\l_{2k}} \frac{\psi_n(\l_{2k}+z)} {\sqrt{z}\sqrt{D(z)}} dz,
\end{equation}
where $D(z) \equiv D(\l_{2k},z) := (\Delta^2(\l_{2k}+z)-4)/z$. Taking the gradient, undoing the substitution, and recalling the definition (\ref{munstarred}) of the starred square root then leads to
\begin{equation} \label{gradbknappr}
\nba \beta_k^n = \frac{\psi_n(\mu_k)}{\sqrt[*]{\Delta^2_{\mu_k} - 4}} (\nba \mu_k - \nba\l_{2k}) + E(b',a'),
\end{equation}
with the remainder term $E(b',a')$ given by
\begin{displaymath}
E(b',a') = \int_0^{\mu_k-\l_{2k}} \!\! \n_{b',a'} \left( \frac{\psi_n(\l_{2k}+z)}{\sqrt{D(\l_{2k},z)}} \right) \frac{dz}{\sqrt{z}}.
\end{displaymath}
As the gradient in the latter integral is a bounded function in $z$ near $z=0$, locally uniformly in $(b',a')$, it follows by the dominated convergence theorem that $\lim_{(b',a') \to (b,a)} E(b',a') = 0$.

The gradient $\nba \beta_k^n$ depends continuously on $(b,a) \in \M$, hence we can conclude by (\ref{gradbknappr}) that it can be written as
\begin{equation} \label{gradbkn}
\nba \beta_k^n = \lim_{(b',a') \to (b,a)} \frac{\psi_n(\mu_k)}{\sqrt[*]{\Delta^2_{\mu_k} - 4}} \left( \n_{b',a'} \mu_k - \n_{b',a'} \l_{2k} \right).
\end{equation}
The gradient of both sides of the Wronskian identity (\ref{wronski}),
$$ y_1(N, \l) y_2(N+1, \l) - y_1(N+1, \l) y_2(N, \l) = 1, $$
leads to
\begin{eqnarray}
y_1(N+1) \nba y_2(N) \, & + & \, y_2(N) \nba y_1(N+1) \label{grady1} \\
& = & y_2(N\!+\!1)\nba \Delta + (y_1(N)\!-\!y_2(N+1)) \nba y_2(N\!+\!1), \nonumber
\end{eqnarray}
The $\l$-derivative can then be computed to be
\begin{eqnarray}
y_1(N+1) \dot{y}_2(N) \, & + & \, \dot{y}_1(N+1)y_2(N) \nonumber \\
& = & y_2(N+1)\dot{\Delta} + (y_1(N)-y_2(N+1)) \dot{y}_2(N+1). \label{doty1}
\end{eqnarray}
Using (\ref{grady1}), (\ref{doty1}), and $y_1(N+1, \mu_k) = 0$, formula (\ref{dmun1}) for $\nba \mu_k$ leads to
\begin{equation} \label{nbamuk}
\nba \mu_k = -\frac{y_2(N+1) \nba \Delta + (y_1(N)-y_2(N+1))\nba y_2(N+1)} {y_2(N+1) \dot{\Delta} + (y_1(N)-y_2(N+1))\dot{y}_2(N+1)} \Big|_{\mu_k}.
\end{equation}
Further by (\ref{dlambda}),
\begin{equation} \label{nbal2k}
  \nba \l_{2k} = -\frac{\nba \Delta}{\dot{\Delta}}\Big|_{\l = \l_{2k}}.
\end{equation}
Now substitute (\ref{nbamuk}) and (\ref{nbal2k}) into $(\n_{b',a'} \mu_k - \n_{b',a'} \l_{2k})$ and use that by (\ref{munstarred}),
\begin{displaymath}
 \sqrt[*]{\Delta^2_{\mu_k} - 4} = (y_1(N) - y_2(N+1))|_{\mu_k}.
\end{displaymath}
We claim that \setlength\arraycolsep{0.1pt} {
\begin{equation}
\lim_{(b',a') \atop \to (b,a)} \!\! \frac{\nba \mu_k \!\! - \!\! \nba \l_{2k}} {\sqrt[*]{\Delta^2_{\mu_k} - 4}} = \frac{\dot{y}_2(N\!\!+\!\!1) \nba y_1(N) \! - \! \dot{y}_1(N) \nba y_2(N\!\!+\!\!1)}{\dot{\Delta} \, \dot{y}_1(N+1) y_2(N)} \Big|_{\l_{2k}}. \label{limcompl2}
\end{equation}}
Indeed, to obtain (\ref{limcompl2}) after the above mentioned substitutions, we split the fraction $\frac{1}{\sqrt[*]{\Delta^2_{\mu_k} - 4}}(\n_{b',a'} \mu_k - \n_{b',a'} \l_{2k})$ into two parts which are treated separately. In the first part we collect all terms in $\frac{1}{\sqrt[*]{\Delta^2_{\mu_k} - 4}}(\n_{b',a'} \mu_k - \n_{b',a'} \l_{2k})$ which contain $(y_1(N) - y_2(N+1))|_{\mu_k}$ in the nominator,
\begin{displaymath}
  I(a',b') := \frac{-\dot{\Delta}|_{\l_{2k}} \cdot \n_{b',a'} y_2(N+1)|_{\mu_k} + \n_{b',a'} \Delta|_{\l_{2k}} \cdot \dot{y}_2(N+1)|_{\mu_k}}{\dot{\Delta}|_{\l_{2k}} \cdot (y_2(N+1) \dot{\Delta} + (y_1(N) - y_2(N+1)) \dot{y}_2(N+1))|_{\mu_k}}.
\end{displaymath}
Using again (\ref{doty1}) we then get
\begin{displaymath}
  \lim_{(b',a') \to (b,a)} I(b',a') = \frac{\dot{y}_2(N+1) \nba y_1(N) - \dot{y}_1(N) \nba y_2(N+1)}{\dot{\Delta} \dot{y}_1(N+1) y_2(N)} \Big|_{\l_{2k}}.
\end{displaymath}
The second term is then given by
\begin{displaymath}
  II(b',a') = \frac{y_2(N+1)|_{\mu_k} \cdot (\n_{b',a'} \Delta|_{\l_{2k}} \cdot \dot{\Delta}|_{\mu_k} - \dot{\Delta}|_{\l_{2k}} \cdot \n_{b',a'} \Delta|_{\mu_k})}{\dot{\Delta}|_{\l_{2k}} \cdot (y_1(N) - y_2(N+1))|_{\mu_k} \cdot (\dot{y}_1(N+1) y_2(N))|_{\mu_k}}.
\end{displaymath}
Note that the nominator of $II(b',a')$ is of the order $O(\mu_k - \l_{2k})$. In view of (\ref{munstarred}), we have
\begin{displaymath}
  (y_1(N) - y_2(N+1))|_{\mu_k} = O(\sqrt{\mu_k - \l_{2k}})
\end{displaymath}
whereas the other terms in the denominator of $II(b',a')$ are bounded away from zero. Indeed, $\l_{2k}$ being a simple eigenvalue for $(b,a)$ means $\dot{\Delta}|_{\l_{2k}} \neq 0$ for $(b',a')$ near $(b,a)$. Further, use a version of (\ref{doty1}) in the case $\l_{2k} = \mu_k$ to conclude that
\begin{displaymath}
  \dot{y}_1(N+1) y_2(N) = y_2(N+1) \dot{\Delta}_{\l_{2k}}.
\end{displaymath}
Hence $\dot{y}_1(N+1) y_2(N)|_{\mu_k} \neq 0$ for $(b',a')$ near $(b,a)$ and $II(b',a')$ vanishes in the limit of $\mu_k \to \l_{2k}$.

Substituting (\ref{grad1}) and (\ref{grad2}) into (\ref{limcompl2}), we obtain
\begin{displaymath}
\frac{\dot{y}_2(N+1) \nba \, y_1(N) - \dot{y}_1(N) \nba \, y_2(N+1)} {\dot{\Delta} \, \dot{y}_1(N+1) y_2(N)} \Big|_{\mu_k} = -\frac{1}{a_N \dot{\Delta}} \, y_1 \cds y_0 
\end{displaymath}
with $y_0 = \frac{\dot{y}_2(N+1)}{\dot{y}_1(N+1)} y_1 - y_2$. Hence
\begin{displaymath}
\nba \beta_k^n = -\frac{\psi_n(\mu_k)}{a_N \dot{\Delta}(\mu_k)} \, y_1 \cds y_0. 
\end{displaymath}
Since $\beta_k^n$ is invariant under the  translation $b \mapsto b + t (1, \ldots, 1)$, the scalar product $\langle \nba \beta_k^n, (\mathbf{1}, \mathbf{0}) \rangle_{\R^{2N}}$ vanishes. Hence
$$ 0 = \sum_{j=1}^N \frac{\partial \beta_k^n}{\partial b_j} = -\frac{\psi_n(\mu_k)}{a_N \dot{\Delta}(\mu_k)} \sum_{j=1}^N y_1(j) y_0(j). $$
It means that $y_1$ and $y_0$ are orthogonal to each other. Finally we introduce $h_k := \| y_1 \| y_0$ and verify that
\setlength\arraycolsep{2pt} {
\begin{displaymath}
W(h_k, g_k) = W \left( \| y_1 \| y_0, \frac{y_1}{\| y_1 \|} \right) = W(y_0, y_1) = W \left( \frac{\dot{y}_2(N+1)}{\dot{y}_1(N+1)} y_1 - y_2, y_1 \right).
\end{displaymath}}
By (\ref{wronski}), it then follows that
\begin{displaymath}
  W(h_k, g_k) = -W(y_2, y_1) = W(y_1,y_2).
\end{displaymath}
Hence by (\ref{wronskispecial}),
\begin{displaymath}
  W(h_k, g_k)(N) = W(y_1, y_2)(N) = 1.
\end{displaymath}
This completes the proof of Proposition \ref{dbabkn}.
\end{proof}

\section{Orthogonality relations} \label{orthrelsection}

In Propositions \ref{discrprod}, \ref{muprod}, and \ref{dbabkn}, we have
expressed the gradients of $\Delta_\l$, $\mu_n$, and, on a subset of $\M$, of $\beta_k^n$ in terms of products of fundamental solutions of the difference equation (\ref{diff}). In this section we establish orthogonality relations between such products - see \cite{bggk} for similar computations. Recall that in (\ref{vector2n}) we have introduced for arbitrary sequences $(v_j)_{j \in \Z}$, $(w_j)_{j \in \Z}$ the $2N$-vector $v \cds w$.

\begin{lemma} \label{orthfund}
For any $(b,a) \in \M$, let $v_1$, $w_1$ and $v_2$, $w_2$ be pairs of solutions of (\ref{diff}) for arbitrarily given real numbers $\mu$ and $\l$, respectively. Then
\begin{equation} \label{orthfundformula}
\frac{2(\l-\mu)}{a_1 a_N} \langle v_1 \cds w_1, J (v_2 \cds w_2) \rangle = V + B,
\end{equation}
where
\begin{equation} \label{wdef}
V := \big( \Wa \cdot S \, \Wb \big) \big|^N_0 + \big( S \, \Wa \cdot \Wb \big) \big|^N_0
\end{equation}
with $\Wa$ and $\Wb$ denoting the Wronskians $\Wa := W(v_1, w_2)$, $\Wb := W(w_1, v_2)$, and where $B$ is given by
\begin{equation}  \label{bdef}
B \!\! := \!\! \frac{(\l\!-\!\mu)}{a_1} \Big( (v_1 \cdot w_1)|_1^{N+1} (v_2 \cds w_2)(2N) \!-\! (v_2 \cdot w_2)|_1^{N+1} (v_1 \cds w_1)(2N) \Big).
\end{equation}
\end{lemma}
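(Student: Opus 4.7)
The plan is to compute the inner product on the left hand side explicitly, reorganise the resulting sum into Wronskian combinations, and then apply the Lagrange--type identity \eqref{wronski1} to convert the factor $(\l-\mu)$ into telescoping differences, with the non--periodicity of the solutions producing the boundary term $B$.

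First I would unpack the pairing. Using the block form \eqref{jdef} of $J$, setting $F_i(j)=v_i(j)w_i(j)$ and $G_i(j)=v_i(j)w_i(j{+}1)+v_i(j{+}1)w_i(j)$ for $i=1,2$, and exploiting the explicit structure of $A$, a direct computation yields
\begin{displaymath}
\langle v_1\cds w_1,\,J(v_2\cds w_2)\rangle
\;=\;\tfrac12\sum_{j=1}^{N}a_j\bigl\{G_2(j)\bigl[F_1(j)-F_1(j{+}1)\bigr]-G_1(j)\bigl[F_2(j)-F_2(j{+}1)\bigr]\bigr\}.
\end{displaymath}
Next I would rewrite each $a_jF_i(j)$ and $a_jG_i(j\pm1)$ by means of the difference equation \eqref{diff}, which gives $a_k v_i(k{+}1)=(\lambda_i-b_k)v_i(k)-a_{k-1}v_i(k{-}1)$ (with $\lambda_1=\mu$, $\lambda_2=\lambda$) and the analogous relation for $w_i$. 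The purpose of this substitution is purely combinatorial: to bring every summand into a form that factors as a product of two cross--Wronskians $W(v_1,w_2)$ and $W(w_1,v_2)$, possibly evaluated at shifted indices, times a quantity proportional to $(\lambda-\mu)$.

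The heart of the argument is then the Lagrange identity \eqref{wronski1}: since $v_1,w_1$ solve \eqref{diff} at $\mu$ and $v_2,w_2$ at $\l$,
\begin{eqnarray*}
a_kW_1(k)-a_{k-1}W_1(k{-}1)&=&(\l-\mu)\,v_1(k)w_2(k),\\
a_kW_2(k)-a_{k-1}W_2(k{-}1)&=&(\l-\mu)\,w_1(k)v_2(k),
\end{eqnarray*}
with $W_1=W(v_1,w_2)$ and $W_2=W(w_1,v_2)$. Multiplying the expanded sum by $2(\l-\mu)/(a_1a_N)$ and replacing each occurrence of $(\l-\mu)v_1(k)w_2(k)$ and $(\l-\mu)w_1(k)v_2(k)$ by the corresponding $a$-weighted Wronskian difference, every bulk contribution of the form $a_jW_\alpha(j)\cdot\text{(something)}(j)-a_{j-1}W_\alpha(j{-}1)\cdot\text{(something)}(j)$ either telescopes over $j=1,\dots,N$ or, after an Abel--type summation by parts, combines with its partner to produce terms of the shape $W_1(k)W_2(k{+}1)+W_1(k{+}1)W_2(k)$. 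Summing these telescopes from $k=0$ to $k=N$ gives exactly $V$ as defined in \eqref{wdef}.

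The main obstacle, and where the term $B$ comes from, is that the solutions $v_i,w_i$ are not assumed periodic, so the sums do not telescope cleanly: the substitution step above requires shifting indices in $F_i,G_i$ between $j$ and $j{+}1$, and whenever an index crosses from $N$ to $N{+}1$ (or from $0$ to $1$) one picks up a discrepancy proportional to $(v_i\cdot w_i)\big|_1^{N+1}$. I would track these discrepancies systematically, using $a_{k+N}=a_k$ and the periodicity of $b$, and collect them into a single boundary contribution. A careful bookkeeping then identifies this contribution with
\begin{displaymath}
B=\tfrac{\l-\mu}{a_1}\Bigl((v_1\!\cdot\! w_1)\big|_1^{N+1}(v_2\cds w_2)(2N)-(v_2\!\cdot\! w_2)\big|_1^{N+1}(v_1\cds w_1)(2N)\Bigr),
\end{displaymath}
completing the identity \eqref{orthfundformula}. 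The chief difficulty is purely combinatorial: ensuring that only the cross--Wronskians $W(v_1,w_2)$ and $W(w_1,v_2)$ survive (and not the same-pair Wronskians such as $W(v_1,v_2)$), and that the signs and index shifts coming from $A$ and from the use of \eqref{wronski1} line up so as to produce precisely $V+B$.
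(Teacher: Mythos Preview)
Your approach is essentially the same as the paper's: expand $\langle v_1\cds w_1, J(v_2\cds w_2)\rangle$ using the block structure of $J$, regroup the summand so that the cross-Wronskians $W_1=W(v_1,w_2)$ and $W_2=W(w_1,v_2)$ appear, then multiply by $(\l-\mu)$ and use the recurrence \eqref{wronski1} to turn the bulk sum into a telescope yielding $a_1a_N V$, while the index shifts at $k=1$ and $k=N{+}1$ produce exactly $B$. The paper organises the boundary bookkeeping by isolating two explicit correction terms $B_1$ (from the initial expansion of the $J$-pairing) and $B_2$ (from the regrouping into Wronskian form) and then checking that $\frac{\l-\mu}{a_1a_N}(B_1+B_2)=B$; your sketch leaves this tracking implicit but identifies the same mechanism.
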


\begin{proof}
We prove (\ref{orthfundformula}) by a straightforward calculation, using the recurrence property (\ref{wronski1}) of the Wronskian sequences $W_1$ and $W_2$. By the definition (\ref{jdef}) of $J$ we can write
$$ 2 \, \langle v_1 \cds w_1, J (v_2 \cds w_2) \rangle = E_1 + B_1, $$
where
\begin{eqnarray*}
E_1 & := & \sum_{k=1}^N a_k \big[ (v_1 \cds w_1)(k) (v_2 \cds w_2)(N+k) - (v_1 \cds w_1)(k+1) (v_2 \cds w_2)(N+k) \\
&& \qquad - (v_1 \cds w_1)(N+k) (v_2 \cds w_2)(k) + (v_1 \cds w_1)(N+k) (v_2 \cds w_2)(k+1) \big]
\end{eqnarray*}
and
\setlength\arraycolsep{1pt}{ \begin{eqnarray*}
B_1 & := & a_N \, \big( (v_1 \cds w_1)(N+1) - (v_1 \cds w_1)(1) \big) \, (v_2 \cds w_2)(2N) \\
&& + a_N \, \big( (v_2 \cds w_2)(1) - (v_2 \cds w_2)(N+1) \big) \, (v_1 \cds w_1)(2N).
\end{eqnarray*}}
Let us first consider $E_1$. Calculating the products $v_j \cds w_j$ according to (\ref{vector2n}), we obtain, after regrouping,
\begin{eqnarray*}
E_1 & 
 = & \sum_{k=1}^{N} a_k \big[ (v_2(k) w_1(k) + v_2(k+1) w_1(k+1)) \Wa(k) \\
&& \,\, + (v_1(k) w_2(k) + v_1(k+1) w_2(k+1)) \Wb(k) \big] \\
&& \!\! + B_2
\end{eqnarray*}
with
\begin{eqnarray*}
B_2 & := & a_N \big( v_1(N+1) w_1(N+1) - (v_1 \cds w_1)(N+1) \big) \, (v_2 \cds w_2)(2N) \\
&& +a_N \big( (v_2 \cds w_2)(N\!+\!1) - v_2(N\!+\!1) w_2(N\!+\!1) \big) \, (v_1 \cds w_1)(2N).
\end{eqnarray*}
Multiply $E_1$ by $(\l-\mu)$ and use the recurrence relation (\ref{wronski1}) to express $(\l-\mu) v_2(k) w_1(k)$, $(\l-\mu) v_2(k+1) w_1(k+1)$, $(\l-\mu) v_1(k) w_2(k)$, and $(\l-\mu) v_1(k+1) w_2(k+1)$ in terms of the Wronskians $\Wa$ and $\Wb$ to get
\setlength\arraycolsep{1pt}
 {\begin{eqnarray*}
(\l-\mu) E_1 & = & \sum_{k=1}^{N} [a_k a_{k+1} (\Wa(k) \Wb(k+1) + \Wa(k+1) \Wb(k)) \\
&& - a_{k-1} a_k (\Wa(k-1) \Wb(k) + \Wa(k) \Wb(k-1))] \\
&& + (\l-\mu) B_2.
\end{eqnarray*}}
The sum on the right hand side of the latter identity is a telescoping sum and equals the term $a_1 a_N V$ with $V$ defined in (\ref{wdef}). In a straightforward way one sees that $\frac{(\l-\mu)}{a_1 a_N} (B_1 + B_2)$ equals the expression $B$ defined by (\ref{bdef}), hence formula (\ref{orthfundformula}) is established.
\end{proof}

\begin{cor} \label{dadb}
For any $\l, \mu \in \C$,
\begin{equation} \label{dadb0}
\{ \Delta_\l, \Delta_\mu \}_J = 0.
\end{equation}
\end{cor}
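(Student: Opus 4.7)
The plan is to establish \eqref{dadb0} first for $\l \neq \mu$ by a direct computation combining the gradient formula of Proposition \ref{discrprod} with the orthogonality identity of Lemma \ref{orthfund}, and then to extend to $\l = \mu$ by a continuity argument in the spectral parameters.

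Write $u_i := y_i(\cdot, \l)$ and $v_i := y_i(\cdot, \mu)$. By Proposition \ref{discrprod}, both $\nba \Delta_\l$ and $\nba \Delta_\mu$ are linear combinations of three $2N$-vectors of the form $f \cds g$ with $f, g$ fundamental solutions at a common parameter, the coefficients being polynomials in the corresponding values at $N$ and $N+1$. Expanding $\{\Delta_\l, \Delta_\mu\}_J = \langle \nba \Delta_\l, J \nba \Delta_\mu\rangle$ by bilinearity will then reduce the claim to evaluating a weighted sum of nine brackets of the form $\langle u_i \cds u_j, J(v_k \cds v_l)\rangle$ with $i,j,k,l \in \{1,2\}$, the weights being polynomials in $u_i(N), u_i(N+1), v_j(N), v_j(N+1)$.

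For $\l \neq \mu$, Lemma \ref{orthfund} rewrites each such bracket as $\frac{a_1 a_N}{2(\l-\mu)}(V + B)$, where $V$ and $B$ are the boundary terms defined in \eqref{wdef} and \eqref{bdef}. The crucial simplification is that the fundamental solutions $y_1, y_2$ satisfy the universal initial conditions $y_1(0)=1$, $y_1(1)=0$, $y_2(0)=0$, $y_2(1)=1$ \emph{independently} of the spectral parameter, so all portions of $V$ and $B$ involving indices $0$ and $1$ reduce to explicit constants; the remaining contributions at indices $N$ and $N+1$ become polynomial expressions in the eight values $u_i(N), u_i(N+1), v_j(N), v_j(N+1)$. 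I would then collect the nine weighted contributions and verify that everything cancels. The main obstacle will be this bookkeeping step; the cancellation is forced by the symmetric role of $(y_1, y_2)$ in the formula for $\nba \Delta_\nu$ and by the single nontrivial relation supplied by the Wronskian identity \eqref{wronskispecial}, namely $y_1(N)y_2(N+1) - y_1(N+1)y_2(N) = 1$, which connects the four values at $N$ and $N+1$ for each spectral parameter.

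Finally, to pass from $\l \neq \mu$ to all $(\l, \mu) \in \C^2$, note that for each fixed $(b,a) \in \M$ the map $(\l, \mu) \mapsto \{\Delta_\l, \Delta_\mu\}_J(b,a)$ is polynomial in $\l$ and in $\mu$ (of degree at most $N$ in each), because $\Delta_\l$ and $\nba \Delta_\l$ depend polynomially on $\l$. A polynomial vanishing on the open dense set $\{\l \neq \mu\} \subset \C^2$ must vanish identically, and this yields \eqref{dadb0}.
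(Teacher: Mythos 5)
Your plan has the right ingredients and shares the paper's high-level strategy (expand $\nabla_{b,a}\Delta_\nu$ via Proposition~\ref{discrprod}, then invoke Lemma~\ref{orthfund}), but it misses the single observation that makes the boundary terms vanish without any bookkeeping at all. The paper records, as formulas (\ref{nbdeltashift}) and (\ref{nadeltashift}), that $\frac{\partial\Delta_\lambda}{\partial b_m}$ and $\frac{\partial\Delta_\lambda}{\partial a_m}$ can be written through the shifted coefficients $(S^m b, S^m a)$, and since $(b,a)$ is $N$-periodic this shows that the sequences $(\nabla_b\Delta_\lambda)_m$ and $(\nabla_a\Delta_\lambda)_m$ are themselves $N$-periodic in $m$. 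When you then collect the nine brackets $\langle u_i \cds u_j, J(v_k \cds v_l)\rangle$ with their coefficients, the $B$-term of (\ref{bdef}) pairs exactly the increments $(\nabla_b\Delta_\mu)|_1^{N+1}$ and $(\nabla_b\Delta_\lambda)|_1^{N+1}$, which are zero by this periodicity, and the $V$-term of (\ref{wdef}) collapses for the same reason, so the whole thing is zero at once. This is quite different in spirit from what you propose: you attribute the cancellation to ``the symmetric role of $(y_1,y_2)$'' plus the Wronskian identity, but neither $y_1^{\mathbf{2}}$, nor $y_2^{\mathbf{2}}$, nor $y_1\cds y_2$ is $N$-periodic, and the nine individual brackets most certainly do not vanish, so without the periodicity you would be committed to a genuinely heavy and not-obviously-terminating computation with the boundary data at $N, N+1$. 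Finally, for the diagonal case $\lambda=\mu$ the paper just invokes skew-symmetry of $\{\cdot,\cdot\}_J$, which is the cleanest route; your polynomial-continuation argument is also correct (the bracket is polynomial of degree $\le N-1$ in each variable for fixed $(b,a)$) but is more than is needed.
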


\begin{proof}
By the formula (\ref{dpr}) for the gradient of $\Delta_{\l}$,
\begin{displaymath}
  \{ \Delta_\l, \Delta_\mu \}_J = \langle \nba \Delta_\l, J \nba \Delta_{\mu} \rangle
\end{displaymath}
is a linear combination of terms of the form $\langle v_1 \cds w_1, J (v_2 \cds w_2) \rangle$ for pairs of fundamental solutions $v_1$, $w_1$ and $v_2$, $w_2$ of (\ref{diff}) for $\mu$ and $\l$, respectively. In view of (\ref{nbdeltashift}) and (\ref{nadeltashift}), $\n_b \Delta_\l$ and $\n_a \Delta_\l$ are both $N$-periodic. In the case $\l \neq \mu$ we use Lemma \ref{orthfund} and note that the boundary terms (\ref{wdef}) and (\ref{bdef}) in Lemma \ref{orthfund} vanish, hence $\{ \Delta_\l, \Delta_\mu \}_J = 0$. In the case $\l = \mu$ the identity (\ref{dadb0}) follows from the skew-symmetry of $\{ \cdot, \cdot \}_J$.
\end{proof}

\begin{cor} \label{lkdeltal}
Let $1 \leq k  \leq 2N$ and $\l \in \C$. On the open subset of $\M$ where $\l_k$ is a simple eigenvalue of $Q(b,a)$ one has
\begin{displaymath}
  \{ \l_k, \Delta_\l \}_J = 0.
\end{displaymath}
\end{cor}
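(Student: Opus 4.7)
The plan is to reduce the bracket $\{ \l_k, \Delta_\l \}_J$ to the bracket of two discriminants evaluated at different spectral parameters, and then invoke Corollary \ref{dadb}. The assumption that $\l_k$ is a simple eigenvalue of $Q(b,a)$ is exactly what is needed to express $\nabla_{b,a} \l_k$ via an implicit-function argument.

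First, on the open set where $\l_k$ is simple, one has $\dot{\Delta}(\l_k) \neq 0$, so the identity $\Delta(\l_k(b,a), b, a) = \pm 2$ can be differentiated in $(b,a)$ to yield
\begin{displaymath}
\nabla_{b,a} \l_k = - \frac{\nabla_{b,a} \Delta_\mu \big|_{\mu = \l_k}}{\dot{\Delta}_{\l_k}},
\end{displaymath}
which is precisely the first equality in formula (\ref{dlambda}) of Proposition \ref{muprod}. Here $\nabla_{b,a} \Delta_\mu$ is first computed with $\mu$ treated as a free complex parameter and only then specialized to $\mu = \l_k(b,a)$.

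Second, plugging this into the definition (\ref{poisson}) of the Poisson bracket, bilinearity in the gradients yields
\begin{displaymath}
\{ \l_k, \Delta_\l \}_J = \langle \nabla_{b,a} \l_k, J \nabla_{b,a} \Delta_\l \rangle = -\frac{1}{\dot{\Delta}_{\l_k}} \, \{ \Delta_\mu, \Delta_\l \}_J \Big|_{\mu = \l_k}.
\end{displaymath}
By Corollary \ref{dadb}, the function $(b,a) \mapsto \{ \Delta_\mu, \Delta_\l \}_J(b,a)$ vanishes identically for every fixed pair $(\mu, \l) \in \C^2$; the specialization $\mu = \l_k(b,a)$ therefore also vanishes, and the claim follows.

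There is no real obstacle here: the only subtlety is the bookkeeping that $\nabla_{b,a} \Delta_\mu|_{\mu=\l_k}$ means ``differentiate first, then substitute,'' so that the chain-rule term from the dependence of $\l_k$ on $(b,a)$ does not reappear inside the right hand side. This is exactly how formula (\ref{dlambda}) was set up, and once this is kept straight the proof is a one-line consequence of Corollary \ref{dadb}.
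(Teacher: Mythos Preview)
Your proof is correct and follows exactly the paper's own argument: use the implicit differentiation formula $\nabla_{b,a}\l_k = -\nabla_{b,a}\Delta_\mu|_{\mu=\l_k}/\dot{\Delta}_{\l_k}$ from Proposition~\ref{muprod} and then invoke Corollary~\ref{dadb}. The paper's proof is the same one-liner, just without your explanatory commentary on the ``differentiate first, then substitute'' convention.
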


\begin{proof}
Using formula (\ref{dlambda}) for $\nba \l_k$, we conclude from Corollary \ref{dadb} that
\begin{displaymath}
  \{ \l_k, \Delta_\l \}_J = -\frac{1}{\dot{\Delta}_{\l_k}} \{ \Delta_\mu, \Delta_\l \}_J|_{\mu = \l_k} = 0.
\end{displaymath}
\end{proof}

\begin{cor} \label{dirlemma}
Let $\mu_n$ be the $n$-th Dirichlet eigenvalue of $L(b,a)$ and $\l
\neq \mu_n$ a real number. Then \setlength\arraycolsep{2pt} {
\begin{eqnarray}
(\l - \mu_n) \langle y_1^\mbf2(\mu_n), J y_1^\mbf2(\l) \rangle & = & \left( a_N \frac{y_1(N+1,\l)}{y_2(N+1,\mu_n)} \right)^2 \label{1j1} \\
(\l - \mu_n) \langle y_1^\mbf2(\mu_n), J y_1(\l) \cds y_2(\l) \rangle & = & a_N^2 \frac{y_1(N+1,\l) y_2(N+1,\l)}{y_2(N+1,\mu_n)^2} \label{1j12} \\
(\l - \mu_n) \langle y_1^\mbf2(\mu_n), J y_2^\mbf2(\l) \rangle & = & a_N^2 \left( \left( \frac{y_2(N+1,\l)}{y_2(N+1,\mu_n)} \right)^2 - 1 \right) \label{1j2}
\end{eqnarray}}
\end{cor}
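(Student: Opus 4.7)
The plan is to apply Lemma \ref{orthfund} three times, each with the common choice $v_1 = w_1 = y_1(\cdot, \mu_n)$ (so the parameter $\mu$ in the lemma equals $\mu_n$), and with $(v_2, w_2)$ taken in turn to be $\bigl(y_1(\cdot, \l), y_1(\cdot, \l)\bigr)$, $\bigl(y_1(\cdot, \l), y_2(\cdot, \l)\bigr)$, and $\bigl(y_2(\cdot, \l), y_2(\cdot, \l)\bigr)$. Under these choices the left-hand side of (\ref{orthfundformula}) is precisely the bracket appearing in (\ref{1j1}), (\ref{1j12}), and (\ref{1j2}) respectively, so the task reduces to computing $V + B$ in each case and matching it against the stated right-hand side.

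The term $B$ vanishes universally: the initial condition $y_1(1, \mu_n) = 0$ together with the Dirichlet condition $y_1(N+1, \mu_n) = 0$ forces $(v_1 \cdot w_1)|_1^{N+1} = 0$, and the same Dirichlet condition gives $(v_1 \cds w_1)(2N) = 2 \, y_1(N, \mu_n) y_1(N+1, \mu_n) = 0$.

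For $V = (W_1 \cdot SW_2)|_0^N + (SW_1 \cdot W_2)|_0^N$ with $W_1 = W(v_1, w_2)$ and $W_2 = W(w_1, v_2)$, the upper-end values simplify uniformly: $y_1(N+1, \mu_n) = 0$ reduces $W(N)$ to $y_1(N, \mu_n) \cdot w_2(N+1, \l)$ (respectively $y_1(N, \mu_n) \cdot v_2(N+1, \l)$), and the recurrence (\ref{wronski1}) combined with the periodicity $a_{N+1} = a_1$ yields $W(N+1) = (a_N/a_1) W(N)$ because the inhomogeneous correction again carries the factor $y_1(N+1, \mu_n) = 0$. At the lower end the initial values of $y_1, y_2$ pin down $W(0), W(1)$ explicitly; in case (\ref{1j1}) both vanish, in case (\ref{1j12}) one of the two Wronskians has $W(0) = W(1) = 0$ so that its lower-end contribution to $V$ still cancels, and only in case (\ref{1j2}) do both Wronskians contribute a nonzero lower-end piece, which is the source of the $-1$ on the right-hand side. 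Substituting $y_1(N, \mu_n) = 1/y_2(N+1, \mu_n)$ (which is (\ref{wrmu})) and dividing by the prefactor $2/(a_1 a_N)$ in (\ref{orthfundformula}) then produces the three claimed formulas.

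The main bookkeeping obstacle is the evaluation of $W(1)$ for the mixed Wronskian $W\bigl(y_1(\mu_n), y_2(\l)\bigr)$ needed for (\ref{1j2}): this requires one step of the recurrence $R_{b,a} y_1 = \mu_n y_1$ at $k = 1$ together with the periodicity $a_0 = a_N$, which supplies the explicit constant $a_N/a_1$ that combines with the upper-end product to deliver the cancellation responsible for the subtracted $1$ in identity (\ref{1j2}).
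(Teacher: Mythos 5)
Your proposal is correct and follows essentially the same route as the paper: invoke Lemma \ref{orthfund} with $v_1 = w_1 = y_1(\cdot,\mu_n)$, observe that $B$ vanishes via $y_1(1,\mu_n)=y_1(N+1,\mu_n)=0$, and evaluate $V$ using $y_1(2,\mu_n)=-a_N/a_1$ together with $y_1(N,\mu_n)y_2(N+1,\mu_n)=1$. The paper's proof is just a one-line pointer to exactly these three facts, so your write-out is a faithful (and more explicit) rendering of it.
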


\begin{proof}
The three stated identities follow from Lemma \ref{orthfund}, using that $y_1(N+1,\mu_n) = 0$, $y_1(2, \mu_n) = -a_N/a_1$, and, by the Wronskian identity (\ref{wrmu}), $y_1(N, \mu_n) \cdot \\ y_2(N+1, \mu_n) = 1$.
\end{proof}

\begin{cor} \label{mndl}
Let $\mu_n$ be the $n$-th Dirichlet eigenvalue of $L(b,a)$ and $\l \neq \mu_n$ a real number. Then
\begin{equation} \label{mneq1}
\{ \mu_n, \Delta_\l \}_J = \frac{y_1(N+1,\l)}{\dot{y}_1(N+1,\mu_n)} \frac{\sqrt[*]{\Delta^2_{\mu_n}-4}}{\l-\mu_n}.
\end{equation}
\end{cor}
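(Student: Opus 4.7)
The plan is to compute $\{\mu_n, \Delta_\l\}_J = \langle \nba \mu_n, J \nba \Delta_\l \rangle$ by using the explicit gradient formulas from Propositions \ref{discrprod} and \ref{muprod}, expanding the resulting Poisson bracket as a linear combination of three fundamental pairings that were exactly computed in Corollary \ref{dirlemma}, and then simplifying the algebra using the Wronskian identities at $N$ and at $\mu_n$.

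First I would substitute formula (\ref{wrrepr1}), $\nba\mu_n = y_1^\mbf2(\mu_n)/\bigl(a_N y_1(N,\mu_n)\dot{y}_1(N+1,\mu_n)\bigr)$, and formula (\ref{dpr}) for $\nba\Delta_\l$. This expresses
\begin{displaymath}
(\l-\mu_n)\,\langle y_1^\mbf2(\mu_n), J\,\nba\Delta_\l\rangle
\end{displaymath}
as a linear combination (with coefficients $y_2(N,\l)$, $-y_1(N+1,\l)$, and $y_2(N+1,\l)-y_1(N,\l)$, times $-1/a_N$) of the three expressions on the left sides of (\ref{1j1}), (\ref{1j2}), (\ref{1j12}). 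Substituting the right sides from Corollary \ref{dirlemma} produces a single fraction with denominator $y_2(N+1,\mu_n)^2$ and numerator a polynomial in the values of $y_1,y_2$ at $N,N+1$ (evaluated at $\l$ or at $\mu_n$).

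The main calculation is collecting that numerator. Factoring out $a_N y_1(N+1,\l)$, the bracket to simplify becomes
\begin{displaymath}
y_2(N,\l)y_1(N+1,\l)-y_1(N,\l)y_2(N+1,\l)-\bigl(y_2(N+1,\l)^2-y_2(N+1,\mu_n)^2\bigr)+\bigl(\text{cross terms cancel}\bigr),
\end{displaymath}
where the first two terms form $-W(y_1,y_2)(N,\l)=-1$ by (\ref{wronskispecial}), and the $y_2(N+1,\l)^2$ pieces coming from the second and third pairings cancel. What survives is $y_2(N+1,\mu_n)^2-1$. Using the Dirichlet Wronskian identity (\ref{wrmu}) and the definition (\ref{munstarred}),
\begin{displaymath}
y_2(N+1,\mu_n)^2-1=y_2(N+1,\mu_n)\bigl(y_2(N+1,\mu_n)-y_1(N,\mu_n)\bigr)=-y_2(N+1,\mu_n)\sqrt[*]{\Delta^2_{\mu_n}-4}.
\end{displaymath}

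Putting this together gives
\begin{displaymath}
\langle y_1^\mbf2(\mu_n), J\,\nba\Delta_\l\rangle=\frac{a_N\, y_1(N+1,\l)\sqrt[*]{\Delta^2_{\mu_n}-4}}{(\l-\mu_n)\,y_2(N+1,\mu_n)}.
\end{displaymath}
Dividing by the denominator in $\nba\mu_n$ and using $y_1(N,\mu_n)y_2(N+1,\mu_n)=1$ once more to cancel, one obtains the claimed identity. The main obstacle is keeping track of the signs and verifying the cancellations in the numerator; everything else is a direct assembly of previously established formulas.
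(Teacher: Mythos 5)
Your proposal is correct and follows essentially the same route as the paper: substitute \eqref{wrrepr1} (using \eqref{wrmu}) and \eqref{dpr}, apply the three identities of Corollary~\ref{dirlemma}, collapse the surviving terms via the Wronskian identity \eqref{wronskispecial}, and finish with the definition \eqref{munstarred} of the starred root. The only cosmetic difference is that the paper expresses the residual factor as $y_1(N,\mu_n)^2-1$ while you use $y_2(N+1,\mu_n)^2-1$, related by \eqref{wrmu}; your stated prefactor ``$a_N\,y_1(N+1,\l)$'' is actually $-a_N\,y_1(N+1,\l)/y_2(N+1,\mu_n)^2$ in the bookkeeping, but the final formula you arrive at is the correct one.
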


\begin{proof}
By (\ref{wrrepr1}), combined with (\ref{wrmu}), we get
\begin{equation} \label{mndlrepr}
\{ \mu_n, \Delta_\l \}_J = \frac{y_2(N+1,\mu_n)}{a_N \dot{y}_1(N+1,\mu_n)}\langle y_1^\mbf2(\mu_n), J \nba \Delta_\l \rangle.
\end{equation}
Substituting the formula (\ref{dpr}) for $J \nba \Delta_\l$ we obtain
\begin{eqnarray}
\langle y_1^\mbf2(\mu_n), J \nba \Delta_\l \rangle & = & -\frac{1}{a_N} y_2(N,\l) \langle y_1^\mbf2(\mu_n), J y_1^\mbf2(\l) \rangle \nonumber \\
&& -\frac{1}{a_N} (y_2(N+1,\l) - y_1(N,\l)) \langle y_1^\mbf2(\mu_n), J y_1(\l) \cds y_2(\l) \rangle \nonumber \\
&& +\frac{1}{a_N} y_1(N+1,\l) \langle y_1^\mbf2(\mu_n), J y_2^\mbf2(\l) \rangle. \label{y1munjd}
\end{eqnarray}

To evaluate the right side of (\ref{y1munjd}), we apply Corollary
\ref{dirlemma} and get \setlength\arraycolsep{0pt} {
\begin{eqnarray*}
\frac{\l\!\!-\!\!\mu_n}{a_1 a_N} \langle y_1^\mbf2(\mu_n), J \nba \Delta_\l \rangle & = & \frac{1}{a_1 y_2(N+1,\mu_n)^2} \Big( -y_2(N,\l) y_1(N+1,\l)^2 \\
&& \!\!\! - (y_2(N+1,\l) - y_1(N,\l)) y_1(N+1,\l) y_2(N+1,\l) \\
&& + y_1(N+1,\l) y_2(N+1,\l)^2 \Big) - \frac{y_1(N+1,\l)}{a_1}.
\end{eqnarray*}}
Using the Wronskian identity (\ref{wronskispecial}), the sum of the terms in the square bracket of the latter expression simplifies, and one obtains
\setlength\arraycolsep{2pt} {
\begin{eqnarray}
\frac{\l-\mu_n}{a_1 a_N} \langle y_1^\mbf2(\mu_n), J \nba \Delta_\l \rangle & = & \frac{y_1(N+1,\l)}{a_1 y_2(N+1,\mu_n)^2} - \frac{y_1(N+1,\l)}{a_1} \nonumber \\
& = & \frac{y_1(N+1,\l)}{a_1} (y_1(N,\mu_n)^2 - 1), \label{auxsquare}
\end{eqnarray}
where for the latter equality we again used (\ref{wronskispecial}). Substituting (\ref{auxsquare}) into (\ref{mndlrepr}), we get
\begin{eqnarray*}
\frac{\l-\mu_n}{a_1 a_N} \{ \mu_n, \Delta_\l \}_J & = & \frac{y_2(N+1,\mu_n) y_1(N+1,\l)}{a_1 a_N \dot{y}_1(N+1,\mu_n)} (y_1(N,\mu_n)^2 - 1) \\
& 
= & \frac{y_1(N+1,\l)}{a_1 a_N \dot{y}_1(N+1,\mu_n)} \sqrt[*]{\Delta^2_{\mu_n}-4},
\end{eqnarray*}
where we used that, by the definition of the starred square root (\ref{munstarred}),
$$ \sqrt[*]{\Delta^2_{\mu_n}-4} = y_1(N, \mu_n) - y_2(N+1, \mu_n) = y_2(N+1, \mu_n) (y_1(N, \mu_n)^2 - 1). $$
This proves (\ref{mneq1}).}
\end{proof}

\begin{prop} \label{thndeltal}
For any $\l \in \R$, $1 \leq n \leq N-1$, and $(b,a) \in \M \setminus D_n$,
\begin{equation} \label{mneq2}
\{ \theta_n, \Delta_\l \}_J = \psi_n(\l).
\end{equation}
\end{prop}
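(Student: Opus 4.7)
The plan is to decompose $\theta_n = \sum_{k=1}^{N-1} \beta_k^n$ (setting $\beta_n^n := \eta_n$) and to compute each $\{\beta_k^n, \Delta_\l\}_J$ by differentiating the integral (\ref{reviewdef})--(\ref{reviewdef2}) under the bracket. Three pieces a priori contribute: the upper endpoint $\mu_k^*$, the lower endpoint $\l_{2k}$, and the variation of the integrand $\psi_n(\l')/\sqrt{\Delta^2_{\l'}-4}$. The goal is to show that only the $\mu_k^*$-piece survives, apply Corollary \ref{mndl}, and recognize the resulting sum as a Lagrange interpolation polynomial.

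The first key step is to establish that $\{\psi_n(\l'), \Delta_\l\}_J = 0$ for every fixed $\l' \in \C$. Set $\tilde\psi_n(\l';\l) := \{\psi_n(\l'), \Delta_\l\}_J$. Since $\psi_n(\l')$ is polynomial in $\l'$ of degree at most $N-2$ with smooth coefficients in $(b,a)$, so is $\tilde\psi_n(\cdot;\l)$. Applying $\{\cdot, \Delta_\l\}_J$ to the identity $\frac{1}{2\pi}\int_{c_k}\psi_n(\l')/\sqrt{\Delta^2_{\l'}-4}\,d\l' = \delta_{kn}$ (with the lift $c_k$ chosen locally independent of $(b,a)$ as in Section \ref{tools}) and using Corollary \ref{dadb} to kill the Poisson derivative of the denominator, one obtains $\int_{c_k}\tilde\psi_n(\l';\l)/\sqrt{\Delta^2_{\l'}-4}\,d\l' = 0$ for every $k$. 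Because the map sending a polynomial of degree $\le N-2$ to the vector of its $N-1$ $c_k$-periods is a linear isomorphism (this is precisely the existence/uniqueness statement around (\ref{psi})), $\tilde\psi_n$ vanishes identically. Combined with Corollary \ref{lkdeltal}, which yields $\{\l_{2k}, \Delta_\l\}_J = 0$ on the open dense subset $\mathcal{M}_0 \subset \M \setminus D_n$ on which every gap is open, this kills both the lower-endpoint and integrand contributions upon pairing with $J\nba\Delta_\l$.

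Using the substitution $\l' = \l_{2k} + z$ from the proof of Proposition \ref{dbabkn} to regularize the integrable singularity at $\l_{2k}$, the remaining upper-endpoint contribution is
\[ \{\beta_k^n, \Delta_\l\}_J = \frac{\psi_n(\mu_k)}{\sqrt[*]{\Delta^2_{\mu_k}-4}}\bigl(\{\mu_k,\Delta_\l\}_J - \{\l_{2k},\Delta_\l\}_J\bigr) = \frac{\psi_n(\mu_k)}{\sqrt[*]{\Delta^2_{\mu_k}-4}}\{\mu_k,\Delta_\l\}_J \]
on $\mathcal{M}_0$. Substituting Corollary \ref{mndl} and summing over $k$ yields
\[ \{\theta_n, \Delta_\l\}_J = \sum_{k=1}^{N-1} \frac{\psi_n(\mu_k)\,y_1(N+1,\l)}{\dot y_1(N+1,\mu_k)(\l-\mu_k)}. \]
Since $y_1(N+1,\cdot)$ is a polynomial of degree $N-1$ whose simple zeros (Lemma \ref{dirsimple}) are the pairwise distinct Dirichlet eigenvalues $\mu_1,\ldots,\mu_{N-1}$, the right-hand side is the unique Lagrange interpolation polynomial of degree $\le N-2$ taking the value $\psi_n(\mu_k)$ at $\mu_k$; as $\deg\psi_n \le N-2$, this interpolant equals $\psi_n(\l)$. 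The identity therefore holds on $\mathcal{M}_0$, and by analyticity (Theorem \ref{angleanalytic}) extends to all of $\M \setminus D_n$.

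The main obstacle I anticipate is the rigorous handling of the differentiation at the lower endpoint $\l_{2k}$: the formal endpoint term $-\psi_n(\l_{2k})/\sqrt[c]{\Delta^2_{\l_{2k}}-4}\,\nba\l_{2k}$ is singular, and must be combined with the singular part of the integrand variation by means of the $z$-substitution to produce the clean symmetric combination $(\nba\mu_k - \nba\l_{2k})$ above. Once the vanishing $\tilde\psi_n \equiv 0$ from the first step is in hand, this bookkeeping together with the Lagrange step completes the proof.
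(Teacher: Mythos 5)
Your proof is correct and follows the same overall strategy as the paper's: decompose $\theta_n$ into the $\beta_k^n$, apply Leibniz's rule, remove the lower-endpoint term via Corollary \ref{lkdeltal}, substitute Corollary \ref{mndl}, identify the resulting sum $\sum_k \frac{\psi_n(\mu_k)\,y_1(N+1,\l)}{\dot y_1(N+1,\mu_k)(\l-\mu_k)}$ with $\psi_n(\l)$ by Lagrange interpolation, and pass to general $(b,a)\in\M\setminus D_n$ by continuity. Where you diverge is in killing the contribution from the variation of the integrand. The paper argues geometrically: $\psi_n(\mu)/\sqrt{\Delta^2_\mu-4}$ is a spectral invariant, so its gradient is orthogonal to $T_{b,a}\,\textrm{Iso}\,(b,a)$, while $J\nba\Delta_\l$ lies in that tangent space, hence the bracket vanishes. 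You instead prove the equivalent statement $\{\psi_n(\l'),\Delta_\l\}_J\equiv 0$ directly by applying $\{\cdot,\Delta_\l\}_J$ to the normalization identity (\ref{psi}), using Corollary \ref{dadb} to dispose of the denominator's contribution, and invoking injectivity of the $c_k$-period map on polynomials of degree $\leq N-2$ (the uniqueness half of the existence/uniqueness for $\psi_n$) to conclude $\tilde\psi_n\equiv 0$. Both arguments ultimately rest on $\{\Delta_{\l'},\Delta_\l\}_J=0$; yours is more algebraic and records the involution of $\psi_n$'s coefficients with $\Delta_\l$ explicitly, while the paper's is shorter given the isospectral-foliation framework. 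One minor omission: on your subset $\mathcal{M}_0$ you should also impose $\l_{2k}<\mu_k<\l_{2k+1}$ for every $k$, so that the integration endpoints in (\ref{reviewdef}) are regular functions of $(b,a)$ and the Leibniz computation is legitimate; the paper states this restriction explicitly, and the resulting set remains dense in $\M\setminus D_n$, so the final continuity step is unaffected.
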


\begin{proof}
Recall that $\theta_n = \sum_{k=1}^{N-1} \beta_k^n$ (mod $2\pi$) with $\beta_k^n$ given by (\ref{reviewdef})-(\ref{reviewdef2}). To compute $\{ \beta_k^n, \Delta_\l \}_J$, we first consider the case where $(b,a) \notin \bigcup_{k=1}^{N-1} D_k$ and $\l_{2k} < \mu_k < \l_{2k+1}$ for any $1 \leq k \leq N-1$. Then $\l_{2k}$ and $\mu_k^*$ are smooth near $(b,a)$ and, by Leibniz's rule, we get
\setlength\arraycolsep{2pt} {
\begin{eqnarray*}
  \{ \beta_k^n, \Delta_\l \}_J & = & \Bigg( \int_{\l_{2k}}^{\mu_k^*} \{ \frac{\psi_n(\mu)}{\sqrt{\Delta_\mu^2-4}}, \Delta_\l \}_J \; d\mu \\
&& + \frac{\psi_n(\mu_k)}{\sqrt[*]{\Delta_{\mu_k}^2-4}} \{ \mu_k, \Delta_\l \}_J - \frac{\psi_n(\l_{2k})}{\sqrt[*]{\Delta_{\l_{2k}}^2-4}} \{ \l_{2k}, \Delta_\l \}_J \Bigg).
\end{eqnarray*}}
By Corollary \ref{lkdeltal}, $\{ \l_{2k}, \Delta_\l \}_J = 0$. Moreover, as the gradient $\nba \frac{\psi_n(\mu)}{\sqrt{\Delta_{\mu}^2-4}}$ is orthogonal to $T_{b,a}\,\textrm{Iso}\,(b,a)$ and $J \nba \Delta_\l \in T_{b,a}\,\textrm{Iso}\,(b,a)$ it follows that the Poisson bracket $\{ \frac{\psi_n(\mu)}{\sqrt{\Delta_{\mu}^2-4}}, \Delta_\l \}_J$ vanishes for any $\mu$ in the isolating neighborhood $U_n$ of $G_n$. Hence
\begin{displaymath}
  \{ \beta_k^n, \Delta_\l \}_J = \frac{\psi_n(\mu_k)}{\sqrt[*]{\Delta_{\mu_k}^2-4}} \{ \mu_k, \Delta_\l \}_J.
\end{displaymath}
By (\ref{mneq1}), we then obtain
$$ \{ \theta_n, \Delta_\l \}_J = \sum_{k=1}^{N-1} \frac{\psi_n(\mu_k)}{\dot{y}_1(N+1,\mu_k)} \frac{y_1(N+1,\l)}{\l - \mu_k} = \psi_n(\l), $$
where for the latter equality we used that $\sum_{k=1}^{N-1} \frac{\psi_n(\mu_k)}{\dot{y}_1(N+1,\mu_k)} \frac{y_1(N+1,\l)}{\l - \mu_k}$ and $\psi_n(\l)$ are both polynomials in $\l$ of degree at most $N-2$ which agree at the $N-1$ points $(\mu_k)_{1 \leq k \leq N-1}$.

In the general case, where $(b,a) \in \M \setminus D_n$ and the Dirichlet eigenvalues are arbitrary, $\l_{2k} \leq \mu_k \leq \l_{2k+1}$ for any $1 \leq k \leq N-1$, the claimed result follows from the case treated above by continuity.
\end{proof}

\begin{prop} \label{bknblm}
Let $1 \leq n,m,k,l \leq N-1$ and let $(b,a) \in \M$ with $\l_{2i}(b,a) = \mu_i(b,a)$ for $i = k,l$. Then
$$ \{ \beta_k^n, \beta_l^m \}_J = 0. $$
\end{prop}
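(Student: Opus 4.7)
The starting point is the gradient formula of Proposition~\ref{dbabkn}: under the hypothesis $\l_{2i}=\mu_i$ together with $\gamma_i>0$ for $i=k,l$, it gives
\[
\nba\beta_k^n = -\frac{\psi_n(\mu_k)}{a_N\,\dot{\Delta}_{\mu_k}}\,g_k\cds h_k,\qquad \nba\beta_l^m = -\frac{\psi_m(\mu_l)}{a_N\,\dot{\Delta}_{\mu_l}}\,g_l\cds h_l,
\]
so $\{\beta_k^n,\beta_l^m\}_J$ is a scalar multiple of $\langle g_k\cds h_k,\,J(g_l\cds h_l)\rangle$. When $k=l$ the two arguments coincide and the bracket vanishes by skew-symmetry of $J$. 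So from now on I assume $k\neq l$, hence $\mu_k\neq\mu_l$, and apply Lemma~\ref{orthfund} with $(v_1,w_1,v_2,w_2)=(g_k,h_k,g_l,h_l)$ at eigenvalues $\mu=\mu_k,\l=\mu_l$; it then suffices to show that the right-hand side $V+B$ of (\ref{orthfundformula}) vanishes.

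The term $B$ from (\ref{bdef}) is zero because $g_k,g_l$ are Dirichlet eigenvectors: $g_k(1)=g_k(N+1)=0$ kills $(g_k\cdot h_k)|_1^{N+1}$, and similarly $(g_l\cdot h_l)|_1^{N+1}=0$. The Wronskian term $V$ is where the hypothesis $\mu_i=\l_{2i}$ enters crucially. Combining $\Delta(\mu_k)=\pm 2$ with the Wronskian identity (\ref{wrmu}) forces $y_1(N,\mu_k)=y_2(N+1,\mu_k)=\e_k$, where $\e_k:=\tfrac12\Delta(\mu_k)\in\{\pm 1\}$. Comparing $y_1(\cdot+N,\mu_k)$ with $\e_k\,y_1(\cdot,\mu_k)$ at indices $0,1$ (both solve the same difference equation) yields the Floquet property $g_k(j+N)=\e_k g_k(j)$ for every $j\in\Z$. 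Using (\ref{y2Nm}) together with the explicit form of $h_k$ in terms of $y_1,y_2$ that appears in the proof of Proposition~\ref{dbabkn}, the second solution obeys the twisted shift law $h_k(j+N)=\e_k h_k(j)-c_k g_k(j)$ with $c_k=y_2(N,\mu_k)\|y_1(\mu_k)\|^2$; analogous statements hold for $g_l,h_l$ with constants $\e_l,c_l$.

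Substituting these shift laws into the mixed Wronskians $\Wa:=W(g_k,h_l)$ and $\Wb:=W(h_k,g_l)$ gives
\[
\Wa(j+N)=\e_k\e_l\Wa(j)-\e_k c_l W(g_k,g_l)(j),\quad \Wb(j+N)=\e_k\e_l\Wb(j)-\e_l c_k W(g_k,g_l)(j).
\]
Since $g_k(1)=g_l(1)=g_k(N+1)=g_l(N+1)=0$, the Wronskian $W(g_k,g_l)$ vanishes at $j=0$ and $j=1$, so the correction terms drop out at those indices. Plugging the resulting relations $W_i(N)=\e_k\e_l W_i(0)$ and $W_i(N+1)=\e_k\e_l W_i(1)$ ($i=1,2$) into the definition (\ref{wdef}) of $V$, and using $\e_k^2=\e_l^2=1$, the four terms cancel pairwise and $V=0$. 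The remaining cases with $\gamma_k=0$ or $\gamma_l=0$ either lie outside the domain of definition of the $\beta$'s (when $k=n$ or $l=m$) or follow by continuity from the open-gap stratum via the analyticity results of Theorem~\ref{angleanalytic}. The main technical obstacle is really the identification of the twisted shift law for $h_k$; once it is in place, the cancellation in $V$ is essentially automatic.
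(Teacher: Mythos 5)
Your proof is correct, but it takes a noticeably harder road than the paper does, and the difference is instructive. You apply Lemma~\ref{orthfund} with $(v_1,w_1,v_2,w_2)=(g_k,h_k,g_l,h_l)$, which produces the \emph{mixed} Wronskians $\Wa=W(g_k,h_l)$ and $\Wb=W(h_k,g_l)$. These do not vanish at $j=0,1,N,N+1$, so to kill the boundary term $V$ you are forced to derive the Floquet shift laws $g_k(j+N)=\e_k g_k(j)$ and $h_k(j+N)=\e_k h_k(j)-c_k g_k(j)$, use that $W(g_k,g_l)$ vanishes at $j=0,1$, and then exploit $\e_k^2=\e_l^2=1$. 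All of this is correct (the shift law for $h_k$ indeed follows from $h_k=\|y_1\|\big(\tfrac{\dot y_2(N+1)}{\dot y_1(N+1)}y_1-y_2\big)$ together with the translation formulas (\ref{y1Nm})--(\ref{y2Nm}) evaluated at $\mu_k$, where $y_1(N,\mu_k)=y_2(N+1,\mu_k)=\e_k$), but it is more machinery than the problem requires.

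The paper instead sets $(v_1,w_1,v_2,w_2)=(g_k,h_k,h_l,g_l)$ --- the same $2N$-vector $g_l\cds h_l$, just with the roles of $v_2$ and $w_2$ swapped. Then $\Wa=W(g_k,g_l)$, which vanishes identically at $j=0,1,N,N+1$ because $g_k$ and $g_l$ are Dirichlet eigenvectors and, under the hypothesis $\mu_i=\l_{2i}$, also periodic/antiperiodic: $g_i(1)=g_i(N+1)=0$. Since every term of $V$ in (\ref{wdef}) carries a factor $\Wa(j)$ with $j\in\{0,1,N,N+1\}$, $V$ vanishes outright, and $B$ vanishes for the same reason you give. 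No Floquet analysis, no shift constants $c_k$, no cancellation argument. The lesson is that Lemma~\ref{orthfund} is not symmetric in $(v_2,w_2)\leftrightarrow(w_2,v_2)$ even though $v_2\cds w_2$ is, and choosing the labeling so that one of the two Wronskians is a Wronskian of \emph{two Dirichlet eigenfunctions} makes the boundary contribution die for trivial reasons. Your closing remark about the closed-gap cases is essentially right and matches what the paper leaves implicit: Proposition~\ref{dbabkn} requires $\gamma_k>0$, and the only place Proposition~\ref{bknblm} is used (Theorem~\ref{orthrel2}) is on $\textrm{Iso}(b,a)$ with $(b,a)\notin\bigcup_l D_l$, where all gaps are open.
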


\begin{proof}
In view of Proposition \ref{dbabkn}, this amounts to showing that the scalar product $\langle (g_k \cds \, h_k), J (g_l \cds \, h_l) \rangle$ vanishes. For $k = l$, this follows from the skew-symmetry of the Poisson bracket, hence we can assume $k \neq l$. We apply Lemma \ref{orthfund} with $v_1 := g_k$, $w_1 := h_k$, $v_2 := h_l$ and $w_2 := g_l$, which implies that $\Wa = W(g_k, g_l)$ and $\Wb = W(h_k, h_l)$. Since $g_k(1)$, $g_l(1)$, $g_k(N+1)$ and $g_l(N+1)$ all vanish, we conclude that $W_1(N) = W_1(0) = 0$ and $(SW_1)(N) = (SW_1)(0) = 0$, hence the expressions $V$ and $E$, defined in (\ref{wdef}) and (\ref{bdef}), vanish. This proves the claim.
\end{proof}

\section{Canonical relations} \label{poissonchapter}

In this section we complete the proof of Theorem \ref{fundthm} and Corollary \ref{leafcor}. In particular we show that the variables $(I_n)_{1 \leq n \leq N-1}$, $(\theta_n)_{1 \leq n \leq N-1}$ satisfy the canonical relations stated in Theorem \ref{fundthm}.

Using the results of the preceding sections, we can now compute the Poisson brackets among the action and angle variables introduced in section \ref{coord}.

\begin{theorem} \label{orthrel}
The action-angle variables $(I_n)_{1 \leq n \leq N-1}$ and $(\theta_n)_{1 \leq n \leq N-1}$ satisfy the following canonical relations for $1 \leq n,m \leq N-1$:

\begin{itemize}

\item[(i)] on $\M$,
\begin{equation} \label{inim}
\{ I_n, I_m \}_J = 0;
\end{equation}

\item[(ii)] on $\M \setminus D_n$,
\begin{equation} \label{thnim}
\{ \theta_n, I_m \}_J = -\{ I_m, \theta_n \}_J = -\delta_{nm}.
\end{equation}
\end{itemize}
\end{theorem}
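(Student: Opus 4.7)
The plan is to reduce both canonical relations to two key brackets already established: Corollary \ref{dadb} gives $\{\Delta_\l, \Delta_\mu\}_J = 0$ for all $\l, \mu$, and Proposition \ref{thndeltal} gives $\{\theta_n, \Delta_\l\}_J = \psi_n(\l)$ on $\M \setminus D_n$. Since each $I_n$ is defined by the contour integral (\ref{actsba}) whose integrand depends on $(b,a)$ only through $\Delta_\l$ and $\dot\Delta_\l$, Poisson brackets against $I_n$ can be pulled under the contour (legitimate by the uniform analyticity established in the proof of Theorem \ref{analytic}), and differentiation in the dummy parameter $\l$ commutes with the bracket, which is taken with respect to $(b,a)$.

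For (i), I would write
\begin{displaymath}
\{I_n, I_m\}_J = \frac{1}{(2\pi)^2}\int_{\Gamma_n}\!\!\int_{\Gamma_m}\!\l\mu\left\{\frac{\dot\Delta_\l}{\sqrt[c]{\Delta_\l^2-4}}, \frac{\dot\Delta_\mu}{\sqrt[c]{\Delta_\mu^2-4}}\right\}_{\!J}\!d\mu\,d\l.
\end{displaymath}
Expanding the inner bracket by the Leibniz rule produces a finite sum of products of smooth functions of $\Delta_\l, \dot\Delta_\l, \Delta_\mu, \dot\Delta_\mu$ with one of the four brackets $\{\Delta_\l,\Delta_\mu\}_J$, $\{\Delta_\l,\dot\Delta_\mu\}_J$, $\{\dot\Delta_\l,\Delta_\mu\}_J$, $\{\dot\Delta_\l,\dot\Delta_\mu\}_J$. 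The latter three are $\l$- or $\mu$-derivatives of the first, which vanishes identically by Corollary \ref{dadb}; hence $\{I_n, I_m\}_J \equiv 0$ on $\M$.

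For (ii), applying $\{\theta_n,\cdot\}_J$ to $\dot\Delta_\l/\sqrt[c]{\Delta_\l^2-4}$ and using Proposition \ref{thndeltal} together with the chain rule for the $c$-root yields
\begin{displaymath}
\left\{\theta_n, \frac{\dot\Delta_\l}{\sqrt[c]{\Delta_\l^2-4}}\right\}_{\!J} = \frac{\psi_n'(\l)}{\sqrt[c]{\Delta_\l^2-4}} - \frac{\Delta_\l \dot\Delta_\l \psi_n(\l)}{(\Delta_\l^2-4)^{3/2}} = \frac{d}{d\l}\!\left(\frac{\psi_n(\l)}{\sqrt[c]{\Delta_\l^2-4}}\right).
\end{displaymath}
Multiplying by $\l/(2\pi)$ and integrating over the closed contour $\Gamma_m$, an integration by parts (with vanishing boundary term since $\Gamma_m$ is closed) gives
\begin{displaymath}
\{\theta_n, I_m\}_J = -\frac{1}{2\pi}\int_{\Gamma_m}\frac{\psi_n(\l)}{\sqrt[c]{\Delta_\l^2-4}}\,d\l = -\delta_{nm},
\end{displaymath}
the last equality being the defining normalization (\ref{psi}) of $\psi_n$, transported to the canonical sheet via the lift $c_m$ of $\Gamma_m$.

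The heavy lifting has already been carried out in establishing $\{\Delta_\l, \Delta_\mu\}_J = 0$ and the nontrivial identity $\{\theta_n, \Delta_\l\}_J = \psi_n(\l)$; with these in hand, the argument reduces to the chain rule, bilinearity, a single integration by parts, and the normalization of $\psi_n$. The only point needing care is justifying the interchange of $\{\cdot,\cdot\}_J$ with the contour integral and with $d/d\l$, but this follows immediately because $\l$ is a parameter independent of $(b,a)$ and the integrands are jointly analytic in a neighborhood of $\Gamma_n\times\Gamma_m$ by Theorem \ref{analytic}.
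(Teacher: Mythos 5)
Your argument is correct and uses exactly the same ingredients as the paper's proof: Corollary \ref{dadb}, Proposition \ref{thndeltal}, the normalization \eqref{psi}, and a single integration by parts on the closed contour. The only difference is cosmetic---the paper first integrates by parts in the gradient formula to obtain the cleaner expression $\nabla_{b,a} I_n = -\frac{1}{2\pi}\int_{\Gamma_n}\nabla_{b,a}\Delta_\l/\sqrt[c]{\Delta_\l^2-4}\,d\l$ and then reads off both brackets directly, whereas you carry out the Leibniz expansion and the $\l$-differentiation first and integrate by parts afterward, which adds a few intermediate terms but leads to the same conclusion.
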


\begin{proof}
Recall that $\frac{d}{dt} \, \textrm{arcosh}\,(t) = (t^2-1)^{-\frac{1}{2}}$. Hence for any $(b,a) \in \M$
\begin{displaymath}
  I_n = \frac{1}{2\pi} \int_{\Gamma_n} \l \frac{d}{d\l}\,\textrm{arcosh}\,\left| \frac{\Delta_\l}{2} \right| \, d\l
\end{displaymath}
and therefore
\begin{displaymath}
  \nba I_n = \frac{1}{2\pi} \int_{\Gamma_n} \l \frac{d}{d\l} \frac{\nba \Delta_\l}{\sqrt[c]{\Delta_\l^2-4}} \, d\l.
\end{displaymath}
Integrating by parts we get
\begin{equation} \label{gradcontour}
\nba I_n = -\frac{1}{2\pi} \int_{\Gamma_n} \frac{\nba \Delta_\l} {\sqrt[c]{\Delta^2_\l-4}} d\l.
\end{equation}
As $\{ \Delta_{\l}, \Delta_{\mu} \}_J = 0$ for all $\l, \mu \in \C$ by Corollary \ref{dadb}, it follows that $\{ I_n, I_m \}_J = 0$  on $\M$ for any $1 \leq n,m \leq N-1$.

To prove (\ref{thnim}), use (\ref{gradcontour}) and then Proposition \ref{thndeltal} to get
$$ \{ \theta_n, I_m \}_J = -\frac{1}{2\pi} \int_{\Gamma_m} \frac{\{ \theta_n, \Delta_\l \}_J}{\sqrt[c]{\Delta^2_\l-4}} d\l = -\frac{1}{2\pi} \int_{\Gamma_m} \frac{\psi_n(\l)}{\sqrt[c]{\Delta^2_\l-4}} d\l = -\delta_{nm}, $$
by the normalizing condition (\ref{psi}) of $\psi_n$.
\end{proof}

To prove that the angles $(\theta_n)_{1 \leq n \leq N-1}$ pairwise Poisson commute we need the following lemma. We denote by $K = K(b,a)$ the index set of the open gaps, i.e.
$$ K(b,a) = \{ 1 \leq n \leq N-1: \gamma_n(b,a) > 0 \}. $$
\begin{lemma} \label{lemmaindep}
At every point $(b,a)$ in $\M$, the set of vectors
\begin{itemize}
\item[(i)] \qquad \qquad \qquad \qquad $\big( (\nba I_n)_{n \in K}, \nba C_1, \nba C_2 \big)$
\item[and]
\item[(ii)] \qquad \qquad \qquad \qquad \qquad \qquad $(J \, \nba I_n)_{n \in K}$
\end{itemize}
are both linearly independent.
\end{lemma}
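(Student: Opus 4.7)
\textbf{Proof plan for Lemma \ref{lemmaindep}.} My plan is to prove (ii) first, then derive (i) as an essentially immediate consequence, exploiting the canonical relations already established in Theorem \ref{orthrel} together with the fact that $\nabla_{b,a}C_1$ and $\nabla_{b,a}C_2$ span the kernel of $J$.

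For (ii), assume a linear relation
\begin{displaymath}
\sum_{n \in K} \alpha_n \, J \nabla_{b,a} I_n = 0
\end{displaymath}
at the point $(b,a) \in \M$. For each $m \in K$, the condition $\gamma_m(b,a) > 0$ means $(b,a) \notin D_m$, so by Theorem \ref{angleanalytic} the angle $\theta_m$ (and hence its gradient $\nabla_{b,a}\theta_m$) is well defined at $(b,a)$. Taking the inner product of the above relation with $\nabla_{b,a}\theta_m$ and using the formula $\{F,G\}_J = \langle \nabla F, J\nabla G\rangle$ from (\ref{poisson}) yields
\begin{displaymath}
\sum_{n \in K} \alpha_n \{\theta_m, I_n\}_J = 0.
\end{displaymath}
By Theorem \ref{orthrel}(ii), $\{\theta_m, I_n\}_J = -\delta_{mn}$ on $\M \setminus D_m$, so the sum collapses to $-\alpha_m = 0$. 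As this holds for every $m \in K$, the vectors $(J\nabla_{b,a} I_n)_{n \in K}$ are linearly independent.

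For (i), suppose
\begin{displaymath}
\sum_{n \in K} \alpha_n \nabla_{b,a} I_n + \beta_1 \nabla_{b,a} C_1 + \beta_2 \nabla_{b,a} C_2 = 0.
\end{displaymath}
Applying $J$ and using that $C_1, C_2$ are Casimirs (so $J\nabla_{b,a}C_i = 0$), we obtain $\sum_{n \in K}\alpha_n J\nabla_{b,a} I_n = 0$. Part (ii) then forces $\alpha_n = 0$ for all $n \in K$, reducing the relation to $\beta_1 \nabla_{b,a} C_1 + \beta_2 \nabla_{b,a} C_2 = 0$. From (\ref{c1grad})--(\ref{c2grad}) the gradients $\nabla_{b,a} C_1$ and $\nabla_{b,a} C_2$ are linearly independent at every point of $\M$ (as already noted in the introduction), so $\beta_1 = \beta_2 = 0$.

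There is no real obstacle here, since the heavy lifting has been done in Theorem \ref{orthrel}; the only point that needs minor attention is to verify that for $n \in K$ the angle $\theta_n$ is defined at $(b,a)$, which is exactly the content of the condition $\gamma_n > 0$.
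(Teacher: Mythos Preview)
Your proof is correct and follows essentially the same approach as the paper: both arguments exploit the canonical relations of Theorem \ref{orthrel} by pairing against $\nabla_{b,a}\theta_m$ through $J$, together with the Casimir property of $C_1,C_2$ and the linear independence of their gradients. The only cosmetic difference is the order---the paper proves (i) directly (pairing the relation with $J\nabla_{b,a}\theta_m$) and then remarks that the same computation yields (ii), whereas you establish (ii) first and deduce (i) by applying $J$ to kill the Casimir terms before invoking (ii).
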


\begin{proof}
The claimed statements follow from the orthogonality relations stated in Theorem \ref{orthrel}: Let $(b,a) \in \M$ and suppose that for some real coefficients $(r_n)_{n \in K} \subseteq \R$ and $s_1, s_2 \in \R$ we have
\begin{displaymath}
\sum_{n \in K}r_n \nba I_n + s_1 \nba C_1 + s_2 \n C_2 = 0.
\end{displaymath}
For any $m \in K$, take the scalar product of this identity with $J \nba \theta_m$. Using that $\{ I_n, \theta_m \}_J = \delta_{nm}$ and that $C_1$ and $C_2$ are Casimir functions of $\{ \cdot, \cdot \}_J$ one obtains
$$ 0 = \sum_{n \in K} r_n \{ I_n, \theta_m \}_J = \sum_{n \in K} r_n \delta_{nm} = r_m. $$
Thus $r_m = 0$ for all $m \in K$, and it follows that $s_1 \nba C_1 + s_2 \nba C_2 = 0$. By (\ref{c1grad}) and (\ref{c2grad}), $\nba C_1$ and $\nba C_2$ are linearly independent, hence $s_1 = s_2 = 0$. This shows (i). The proof of (i) also shows that (ii) holds.
\end{proof}

\begin{theorem} \label{orthrel2}
In addition to the canonical relations stated in Theorem \ref{orthrel}, the angle variables $(\theta_n)_{1 \leq n \leq N-1}$ satisfy for any $1 \leq n,m \leq N-1$ on $\M \setminus (D_n \cup D_m)$
\begin{equation} \label{thnthm}
\{ \theta_n, \theta_m \}_J = 0.
\end{equation}
\end{theorem}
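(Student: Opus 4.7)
The plan is to establish $F := \{\theta_n, \theta_m\}_J \equiv 0$ on $\M \setminus (D_n \cup D_m)$, where $F$ is real-analytic by Theorem \ref{angleanalytic}, by combining the Jacobi identity, the canonical relations from Theorem \ref{orthrel}, the pointwise vanishing supplied by Proposition \ref{bknblm}, and analytic continuation.

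First I would apply the Jacobi identity for $\{\cdot,\cdot\}_J$ to the triple $(\theta_n, \theta_m, I_k)$ and use Theorem \ref{orthrel} to obtain, for every $1 \leq k \leq N-1$,
\[
\{F, I_k\}_J = -\{\{\theta_m, I_k\}_J, \theta_n\}_J - \{\{I_k, \theta_n\}_J, \theta_m\}_J = -\{-\delta_{mk}, \theta_n\}_J - \{\delta_{kn}, \theta_m\}_J = 0,
\]
while $\{F, C_i\}_J = 0$ holds automatically because the $C_i$ are Casimirs. Thus $F$ is invariant under each of the commuting isospectral flows $X_{I_k} = J \nba I_k$. On the open dense subset $\M_0 := \M \setminus \bigcup_{k=1}^{N-1} D_k$, Lemma \ref{lemmaindep}(ii) guarantees that $X_{I_1}, \ldots, X_{I_{N-1}}$ are linearly independent; since the connected components of the isospectral set there have dimension $N-1$ by Corollary \ref{leafcor}, these flows span the tangent space to each such Liouville torus, and $F$ is constant on it.

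To pin down this constant, I would consider the real-analytic subvariety $V := \{(b,a) \in \M_0 : \mu_k = \l_{2k} \text{ for every } 1 \leq k \leq N-1\}$, which by van Moerbeke's theorem (case $r=0$) meets every isospectral set in $\M_0$ in exactly one point. At any $(b^\star, a^\star) \in V$, Proposition \ref{bknblm} applies to every pair $(k,l)$ and yields $\{\beta_k^n, \beta_l^m\}_J(b^\star, a^\star) = 0$; summing via $\nba \theta_n = \sum_k \nba \beta_k^n$ (the $2\pi$-ambiguity being killed by the gradient) gives $F(b^\star, a^\star) = \sum_{k,l} \{\beta_k^n, \beta_l^m\}_J = 0$. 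Together with the torus-invariance from the previous paragraph, $F$ vanishes on the union $U \subset \M_0$ of all isospectral components meeting $V$. Since $U$ is open in $\M_0$ (the $V$-point of the isospectral set through $(b,a)$ depends continuously on $(b,a)$) and $\M_0$ is connected (being the complement of finitely many codimension-two real-analytic subvarieties in the connected manifold $\M$), the identity principle for real-analytic functions forces $F \equiv 0$ on $\M_0$, and hence on all of $\M \setminus (D_n \cup D_m)$.

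The hard part will be the openness of $U$ in $\M_0$: this reduces to a continuity statement for the assignment $(b,a) \mapsto$ (unique $V$-point in the isospectral set of $(b,a)$), which follows from the real-analytic dependence of spectral and Dirichlet data on $(b,a)$ combined with van Moerbeke's reconstruction, but it is the only nontrivial analytic input beyond what has already been proved.
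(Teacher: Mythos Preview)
Your approach is essentially the paper's own, but you have introduced a superfluous step at the end. You correctly quote van Moerbeke's theorem as saying that $V$ meets \emph{every} isospectral set in $\M_0$ in exactly one point; since you have already shown that $F$ is constant on each such (connected) torus, this immediately gives $F \equiv 0$ on all of $\M_0$, i.e.\ $U = \M_0$. There is nothing left to prove about openness, continuity of the $V$-point assignment, or analytic continuation --- the ``hard part'' you flag simply does not arise. The paper proceeds exactly this way: reduce by continuity to $\M_0$, use the Jacobi identity and Lemma \ref{lemmaindep}(ii) to show $\{\theta_n,\theta_m\}_J$ is constant on each $\textrm{Iso}(b,a)$, then evaluate at the unique point guaranteed by van Moerbeke and apply Proposition \ref{bknblm}.

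One minor point: the dimension of $\textrm{Iso}(b,a)$ being $N-1$ on $\M_0$ is not Corollary \ref{leafcor} but rather the description of the isospectral set recalled in Section \ref{tools} (van Moerbeke's parametrisation by the positions of the $\mu_k$ in the open gaps).
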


\begin{proof}
Let $1 \leq n,m \leq N-1$. By continuity, it suffices to prove the identity (\ref{thnthm}) for $(b,a) \in \M \setminus \big( \bigcup_{l=1}^{N-1} D_l \big)$. Let $(b,a)$ be an arbitrary element in $\M \setminus \big( \bigcup_{l=1}^{N-1} D_l \big)$. Recall that Iso$(b,a)$ denotes the set of all elements $(b',a')$ in $\M$ with spec$(Q_{b',a'}) =$ spec$(Q_{b,a})$,
\begin{displaymath}
  \textrm{Iso}\,(b,a) = \{ (b',a') \in \M: \Delta(\cdot, b', a') = \Delta(\cdot, b, a) \}.
\end{displaymath}
Then Iso$(b,a)$ is a torus contained in $\M \setminus \big( \bigcup_{l=1}^{N-1} D_l \big)$, and as all eigenvalues of $Q(b,a)$ are simple, its dimension is $N-1$. By Lemma \ref{lemmaindep}, at any point $(b',a') \in$ Iso$(b,a)$, the vectors $(J \, \n_{b',a'} I_k)_{1 \leq k \leq N-1}$ are linearly independent. Using the formula (\ref{gradcontour}) for the gradient of $I_k$, one sees that, by Corollary \ref{dadb}, for any $\mu \in \R$, $1 \leq k \leq N-1$,
\begin{displaymath}
  \langle \n_{b',a'} \Delta_\mu, J \, \n_{b',a'} I_k  \rangle = -\frac{1}{2\pi} \int_{\Gamma_n} \frac{\{ \Delta_\mu, \Delta_\l \}_J}{\sqrt[c]{\Delta_\l^2-4}} \, d\l = 0.
\end{displaymath}
Hence for any $(b',a') \in$ Iso$(b,a)$,
\begin{displaymath}
  (J \, \n_{b',a'} I_k)_{1 \leq k \leq N-1} \in T_{b',a'} \, \textrm{Iso} \, (b,a),
\end{displaymath}
and therefore these vectors form a basis of $T_{b',a'} \, \textrm{Iso} \, (b,a)$.

To prove the identity (\ref{thnthm}), we apply the Jacobi identity
\begin{displaymath}
  \{ F, \{ G, H \}_J\}_J + \{ G, \{ H, F \}_J\}_J + \{ H, \{ F, G \}_J\}_J = 0
\end{displaymath}
to the functions $I_k$, $\theta_n$ and $\theta_m$. Since by Theorem \ref{orthrel}, $\{ I_k, \theta_n \}_J = \delta_{kn}$, we obtain
\begin{displaymath}
  \{ I_k, \{ \theta_n, \theta_m \}_J\}_J = 0 \;\; \textrm{on} \;\; \M \setminus \left( \bigcup_{l=1}^{N-1} D_l \right) \;\; \textrm{for any} \;\; 1 \leq k \leq N-1.
\end{displaymath}
It then follows by the above considerations that $\n_{b',a'} \{ \theta_n, \theta_m \}_J$ is orthogonal to $T_{b',a'} \, \textrm{Iso} \, (b,a)$ for all $(b',a') \in$ Iso$(b,a)$, i.e. $\{ \theta_n, \theta_m \}_J$ is constant on Iso$(b,a)$,
\begin{displaymath}
  \{ \theta_n, \theta_m \}_J (b',a') = \{ \theta_n, \theta_m \}_J (b,a) \quad \forall \, (b',a') \in \, \textrm{Iso} \, (b,a).
\end{displaymath}
By \cite{moer}, Theorem 2.1, there exists a unique element $(b',a') \in$ Iso$(b,a)$ satisfying $\mu_k(b',a') = \l_{2k}(b,a)$ for all $1 \leq k \leq N-1$. The claimed identity (\ref{thnthm}) then follows from Proposition \ref{bknblm}.
\end{proof}

\begin{proof}[Proof of Theorem \ref{fundthm}]
By Theorem \ref{analytic} and Theorem \ref{angleanalytic}, the action and angle variables introduced in Definitions \ref{actionsdef} and \ref{angledefinition}, respectively, have the claimed analyticity properties. The canonical relations among these variables have been verified in Theorem \ref{orthrel} and Theorem \ref{orthrel2}, and the relations $\{ C_i, I_n \}_J = 0$ (on $\M$) and $\{ C_i, \theta_n \}_J = 0$ (on $\M \setminus D_n$) follow from the fact that $C_1$ and $C_2$ are Casimir functions. It remains to show that the actions Poisson commute with the Toda Hamiltonian. To this end note that that the Hamiltonian $H$ can be written as
\begin{displaymath}
  H = \frac{1}{2} \sum_{n=1}^N b_n^2 + \sum_{n=1}^N a_n^2 = \frac{1}{2} \, \textrm{tr} \, (L(b,a)^2) = \frac{1}{2} \sum_{j=1}^N (\l_j^+)^2
\end{displaymath}
where $(\l_j^+)_{1 \leq j \leq N}$ are the $N$ eigenvalues of $L(b,a)$. Recall that on the dense open subset $\M \setminus \cup_{k=1}^{N-1} D_k$ of $\M$, the $\l_i^+$'s ($1 \leq i \leq N$) are simple eigenvalues and hence real analytic. It then follows by (\ref{gradcontour}) that for any $1 \leq n \leq N-1$, 
\begin{displaymath}
  \{ H, I_n \}_J = \sum_{i=1}^N \l_i^+ \{ \l_i^+, I_n \}_J = -\sum_{i=1}^N \frac{\l_i^+}{2\pi} \int_{\Gamma_n} \frac{\{ \l_i^+, \Delta_\l \}_J} {\sqrt[c]{\Delta^2_\l-4}} d\l = 0,
\end{displaymath}
where for the latter identity we used Corollary \ref{lkdeltal}. Hence for any $1 \leq n \leq N-1$,
\begin{displaymath}
  \{ H, I_n \}_J = 0 \quad \textrm{on} \; \M \setminus \cup_{k=1}^{N-1} D_k.
\end{displaymath}
By continuity it then follows that $\{ H, I_n \}_J = 0$ everywhere on $\M$.
\end{proof}

\begin{proof}[Proof of Corollary \ref{leafcor}]
Since for any $\beta \in \R$ and $\alpha > 0$ the symplectic leaf $\Mba$ is a submanifold of $\M$ of dimension $2(N-1)$, there are at most $N-1$ independent integrals in involution on $\Mba$. For any given $(b,a) \in \Mmba$ let $\pmba$ denote the orthogonal projection $T_{b,a}\M \to T_{b,a}\Mmba$. Then the gradient of the restriction ${I_n}|_{\Mmba}$ of $I_n$ to $\Mmba$ ($1 \leq n \leq N-1$) is given by $\pmba \nba I_n$. By Lemma \ref{lemmaindep} the vectors $(\pmba \nba I_n)_{n \in K}$ are linearly independent. As $\Mmba \setminus \cup_{k=1}^{N-1}D_k$ is dense in $\Mmba$, it then follows that $({I_n}|_{\Mmba})_{1 \leq n \leq N-1}$ are functionally independent. Finally, as $C_1$ and $C_2$ are Casimir functions of $\{ \cdot, \cdot \}_J$, it follows that for any $(b,a) \in \Mmba$
\setlength\arraycolsep{2pt} {
\begin{eqnarray*}
  \{ {I_n}|_{\Mmba}, {I_m}|_{\Mmba} \}_J (b,a) & = & \langle \pmba \nba I_n, \pmba J \nba I_m \rangle \\
& = & \langle \nba I_n, J \nba I_m \rangle \; = \; \{ I_n, I_m \}_J \; = \; 0,
\end{eqnarray*}}
i.e. the restrictions ${I_n}|_{\Mmba}$ of $I_n$ ($1 \leq n \leq N-1$) are in involution.
\end{proof}

\appendix

\section{Proof of Lemma \ref{est1}} \label{proofest}

In this Appendix, we prove Lemma \ref{est1}. It turns out that the proof in (\cite{bbgk}, p. 601-602) of the special case where the parameter $\a$ in (\ref{hamtodapq}) equals $1$ can be adapted for arbitrary values.

\begin{proof}[Proof of Lemma \ref{est1}]
Let $(b,a)$ be an arbitrary element in $\M$ and $1 \leq n \leq N-1$. First note that $I_n = \frac{1}{\pi} \int_{\l_{2n}}^{\l_{2n+1}}  \textrm{arcosh} |\frac{1}{2} \Delta(\l)| \, d\l$ and use $\frac{d}{dt}\,\textrm{arcosh}\,(t) = \frac{1}{\sqrt{t^2-1}}$ to obtain
$$ I_n = \frac{1}{\pi} \int_{\l_{2n}}^{\l_{2n+1}}\int_{1}^{|\Delta(\l)|/2} \frac{1}{\sqrt{t^2-1}} \; dt \; d\l. $$
Since the integrand of the inner integral is nonincreasing, we estimate it from below by its value at $\frac{|\Delta(\l)|}{2}$. This leads to
\begin{equation} \label{integral}
I_n \geq \frac{1}{\pi} \int_{\l_{2n}}^{\l_{2n+1}} \frac{\sqrt{|\Delta(\l)|-2}}{\sqrt{|\Delta(\l)|+2}} \; d\l.
\end{equation}
We will show below that for $\l_{2n} \leq \l \leq \l_{2n+1}$
\begin{equation} \label{discrest}
\frac{\sqrt{|\Delta(\l)|-2}}{\sqrt{|\Delta(\l)|+2}} \geq \frac{\sqrt{\l-\l_{2n}}\sqrt{\l_{2n+1}-\l}}{\l_{2N}-\l_1}.
\end{equation}
We then substitute (\ref{discrest}) into the integral (\ref{integral}) and split the integration interval into two equal parts,
\begin{displaymath}
I_n \geq \frac{2}{\pi} \frac{1}{\l_{2N}-\l_1} \int_{\l_{2n}}^{\tau_n} \frac{\sqrt{\l-\l_{2n}}\sqrt{\l_{2n+1}-\l}}{\l_{2N}-\l_1} \, d\l,
\end{displaymath}
where $\tau_n= (\l_{2n}+\l_{2n+1})/2$. For $\l_{2n} \leq \l \leq \tau_n$ we estimate the quantity $\l_{2n+1}-\l$ from below by $\gamma_n/2$, yielding
\begin{displaymath}
I_n \geq \frac{2}{\pi} \frac{1}{\l_{2N}-\l_1} \int_{\l_{2n}}^{\tau_n} \sqrt{\frac{\gamma_n}{2}} \sqrt{\l-\l_{2n}} \; d\l = \frac{1}{3\pi(\l_{2N}-\l_1)} \; \gamma_n^2.
\end{displaymath}

It remains to verify (\ref{discrest}). 
Recall that $\l_{2n}$ and $\l_{2n+1}$ are either both periodic or both antiperiodic eigenvalues of $L$. 
If $\l_{2n}$ and $\l_{2n+1}$ are periodic eigenvalues, we have $\Delta(\l) \geq 2$ for $\l_{2n} \leq \l \leq \l_{2n+1}$, i.e. $|\Delta(\l)| = \Delta(\l)$. In order to make writing easier, let us assume that $N$ is even - the case where $N$ is odd is treated in the same way. Then by (\ref{deltalambdapm2}), $\l_1$ and $\l_{2N}$ are periodic eigenvalues of $L$ and thus for any $\l_{2n} \leq \l \leq \l_{2n+1}$, the left side of (\ref{discrest}) can be estimated from below by
\begin{displaymath}
\frac{\sqrt{\Delta(\l)-2}}{\sqrt{\Delta(\l)+2}} = \sqrt{\frac{(\l-\l_{2n}) (\l-\l_{2n+1})} {(\l-\l_2) (\l-\l_{2N-1})}} \cdot R \geq \frac{\sqrt{\l-\l_{2n}}\sqrt{\l_{2n+1}-\l}}{\l_{2N}-\l_1} \cdot R,
\end{displaymath}
where
\begin{equation} \label{rest}
R \equiv R(\l) = \sqrt{\frac{\l-\l_1}{\l-\l_3} \cdots \frac{\l-\l_{2n-3}}{\l-\l_{2n-1}} \frac{\l_{2n+4}-\l}{\l_{2n+2}-\l} \cdots \frac{\l_{2N}-\l}{\l_{2N-2}-\l}} \, .
\end{equation}
As each of the the fractions under the square root in (\ref{rest}) can be estimated from below by $1$, for any $\l_{2n} \leq \l \leq \l_{2n+1}$ it follows that $R(\l) \geq 1$ on $[\l_{2n}, \l_{2n+1}]$, leading to the claimed estimate (\ref{discrest}).
\end{proof}

\section{Proof of Theorem \ref{ingammantotest}} \label{prooftotalest}

In this Appendix we prove Theorem \ref{ingammantotest} using estimates derived in \cite{bbgk}. Let $(b,a)$ be in $\Mmba$ with $\b \in \R$ and $\a > 0$ arbitrary.

To show Theorem \ref{ingammantotest} we need the following
\begin{prop} \label{harmcomplprop}
For any $(b,a) \in \Mmba$ with $\b \in \R$, $\a > 0$ arbitrary and any $1 \leq n \leq N$,
\begin{equation} \label{esta4}
\l_{2n}(b,a) - \l_{2n-1}(b,a) \leq \frac{2 \pi \alpha}{N}.
\end{equation}
\end{prop}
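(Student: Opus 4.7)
The plan is to reduce the estimate to the case $\alpha = 1$ proved in \cite{bbgk} by a scaling argument. I consider the diffeomorphism $\Phi_\alpha : (b,a) \mapsto (\alpha^{-1} b, \alpha^{-1} a)$. Since the Jacobi matrix $Q(b,a)$ is linear in the entries $(b,a)$, one has $Q(\Phi_\alpha(b,a)) = \alpha^{-1} Q(b,a)$, so the eigenvalues of $Q$ at the rescaled point are $(\alpha^{-1}\lambda_j(b,a))_{1 \leq j \leq 2N}$ in the same increasing order (using $\alpha > 0$).

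A direct computation with the definitions (\ref{casimirdef}) gives $C_1(\Phi_\alpha(b,a)) = \beta/\alpha$ and $C_2(\Phi_\alpha(b,a)) = \bigl(\prod_n \alpha^{-1} a_n\bigr)^{1/N} = 1$, so $\Phi_\alpha$ maps $\Mba$ into $\M_{\beta/\alpha,\,1}$. Applying the estimate $\lambda_{2n} - \lambda_{2n-1} \leq 2\pi/N$ established in \cite{bbgk} on any Casimir leaf with $C_2 = 1$ to the rescaled point yields
$$\alpha^{-1}\bigl(\lambda_{2n}(b,a) - \lambda_{2n-1}(b,a)\bigr) \leq \frac{2\pi}{N},$$
and multiplying through by $\alpha$ gives the claimed bound.

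The genuine content lies in the $\alpha = 1$ estimate from \cite{bbgk}, whose proof is expected to rest on the identity
$$\pi = \int_{\lambda_{2n-1}}^{\lambda_{2n}} \frac{|\dot\Delta(\lambda)|}{\sqrt{4 - \Delta^2(\lambda)}}\, d\lambda$$
(expressing that $\arccos(\Delta/2)$ varies by $\pi$ across each spectral band, where $\Delta$ is monotonic since its critical points $\dot{\lambda}_k$ lie in the closed gaps) together with a comparison of the discriminant with the equilibrium one $\Delta(\lambda) = 2\cos(N\rho)$, $\cos\rho = (\lambda-\beta)/(2\alpha)$, computed via the fundamental solutions in the remark following Lemma \ref{speclemma}. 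The main obstacle to a self-contained proof would be making this comparison sharp enough to yield the constant $2\pi/N$; since \cite{bbgk} handles this, the contribution here reduces to verifying that the argument is compatible with the rescaling $\Phi_\alpha$, which it is, thanks to the linearity of $Q$ in $(b,a)$ and the positivity of $\alpha$ that preserves the eigenvalue ordering.
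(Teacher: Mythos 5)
Your scaling argument is correct and genuinely different in route from the paper's proof. Since the entries of $Q(b,a)$ are linear in $(b,a)$, you have $Q(\alpha^{-1}b,\alpha^{-1}a)=\alpha^{-1}Q(b,a)$, the eigenvalue ordering is preserved because $\alpha>0$, and the Casimir computations $C_1(\Phi_\alpha(b,a))=\beta/\alpha$, $C_2(\Phi_\alpha(b,a))=1$ are right. Multiplying the $\alpha=1$ band-length bound $2\pi/N$ by $\alpha$ then gives exactly \eqref{esta4}. (Your worry about $\beta$ changing under the rescaling is harmless: shifting $b$ by a constant vector only translates the spectrum, so the $\alpha=1$ bound is $\beta$-independent.)

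The paper does not reduce by scaling at all: it proves the inequality directly. It first computes, via Lemma~\ref{speclemma}, the exact band lengths $\lambda_{2n}-\lambda_{2n-1}=4\alpha\sin\frac{(2n-1)\pi}{2N}\sin\frac{\pi}{2N}<\frac{2\pi\alpha}{N}$ at the equilibrium $(-\beta 1_N,\alpha 1_N)$, and then shows $\lambda_{2n}(b,a)-\lambda_{2n-1}(b,a)\leq \lambda_{2n}(-\beta 1_N,\alpha 1_N)-\lambda_{2n-1}(-\beta 1_N,\alpha 1_N)$ by the conformal map $\delta$ of \eqref{deltamapdefine} and domain-monotonicity of harmonic measure (Lemmas~\ref{complanalemma} and \ref{harmanalemma}), since $\Omega(b,a)\subseteq\Omega(-\beta 1_N,\alpha 1_N)$. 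Your reduction is shorter and cleaner, but it pushes the entire analytic content into the external citation; the paper instead rederives the estimate in a self-contained way (the same Marchenko--Ostrovskii conformal-map machinery one would expect in \cite{bbgk}), and the explicit formula for the band lengths at the equilibrium already carries the sharp constant. Your speculation at the end about what \cite{bbgk} must do is in the right spirit but not quite what happens: neither \cite{bbgk} nor this paper deduces the bound from monotonicity of $\Delta$ on bands plus a comparison of integrands; the key mechanism is the monotonicity of harmonic measure under domain inclusion, applied to the image of $\delta$. If you wanted your argument to stand on its own, you would still need to verify that \cite{bbgk} really states the band-length bound on the leaf $C_2=1$ for arbitrary $\beta$ rather than relying on the plausibility of that claim.
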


Before proving Proposition \ref{harmcomplprop} we show how to use it to prove Theorem \ref{ingammantotest}.

\begin{proof}[Proof of Theorem \ref{ingammantotest}]
We begin by adding up the inequalities (\ref{estingnorig}) and get
\begin{equation} \label{sumlemmasing}
\sum_{n=1}^{N-1} \gamma_n^2 \leq 3 \pi \, (\l_{2N} - \l_1) \left( \sum_{n=1}^{N-1} I_n \right).
\end{equation}
Note that
\begin{displaymath}
  \l_{2N} - \l_1 = \sum_{n=1}^{N-1} \g_n + \sum_{n=1}^N (\l_{2n} - \l_{2n-1}).
\end{displaymath}
By the estimate of Proposition \ref{harmcomplprop} we get for any $(b,a) \in \Mmba$
\begin{displaymath}
\l_{2N} - \l_1 \leq 2 \pi \alpha + \sum_{n=1}^{N-1} \gamma_n
\end{displaymath}
which we substitute into (\ref{sumlemmasing}) to yield
\begin{displaymath}
\sum_{n=1}^{N-1} \gamma_n^2 \leq 6 \alpha \pi^2 \, \left( \sum_{n=1}^{N-1} I_n \right) + 3 \pi \, \left( \sum_{n=1}^{N-1} \gamma_n \right) \left( \sum_{n=1}^{N-1} I_n \right).
\end{displaymath}
Using the inequality
\begin{displaymath}
  2ab \leq \epsilon^2 a^2 + \frac{1}{\epsilon^2} b^2 \quad (a,b \in \R, \epsilon > 0)
\end{displaymath}
with $a = \sum_{n=1}^{N-1} \gamma_n$, $b = \sum_{n=1}^{N-1} I_n$, and $\epsilon^2 = \frac{1}{3\pi(N-1)}$, one gets
\begin{displaymath} 
  \sum_{n=1}^{N-1} \gamma_n^2 \leq 6 \pi^2 \alpha \! \left( \sum_{n=1}^{N-1} I_n \! \right) + \frac{3\pi}{2} \! \left( \frac{1}{3\pi(N \!\!-\!\! 1)} \! \left( \sum_{n=1}^{N-1} \gamma_n \! \right)^2 \!\! + 3\pi (N \!\!-\!\! 1) \left( \sum_{n=1}^{N-1} I_n \! \right)^2 \right).
\end{displaymath}
As $\left( \sum_{n=1}^{N-1} \gamma_n \right)^2 \leq (N-1) \left( \sum_{n=1}^{N-1} \gamma_n^2 \right)$, one then concludes that
\begin{equation} \label{lastestimate}
  \frac{1}{2} \sum_{n=1}^{N-1} \gamma_n^2 \leq 6 \pi^2 \alpha \, \left( \sum_{n=1}^{N-1} I_n \right) + \frac{9\pi^2}{2} (N-1) \left( \sum_{n=1}^{N-1} I_n \right)^2,
\end{equation}
which is the claimed estimate (\ref{estingnsumorig}). 
\end{proof}

To prove Proposition \ref{harmcomplprop} we first need to make some preparations. Note that for an element of the form $(b,a) = (\b 1_N, \a 1_N)$ one has, by Lemma \ref{speclemma}, 
\setlength\arraycolsep{2pt}{
\begin{eqnarray*}
\l_{2n}(\b 1_N, \a 1_N) - \l_{2n-1}(\b 1_N, \a 1_N) & = & 2 \a \Big( \cos \frac{(n-1)\pi}{N} - \cos \frac{n\pi}{N} \Big) \\
& = & 4 \a \sin \frac{(2n-1)\pi}{2N} \sin \frac{\pi}{2N} \\
 & < & \frac{2 \pi \a}{N}.
\end{eqnarray*}}
Hence to prove Proposition \ref{harmcomplprop} it suffices to show that for any $(b,a) \in \Mmba$ and any $1 \leq n \leq N$
\begin{equation}
  \l_{2n}(b,a) - \l_{2n-1}(b,a) \leq \l_{2n}(-\b 1_N, \a 1_N) - \l_{2n-1}(-\b 1_N, \a 1_N).
\end{equation}

To this end, following \cite{maos} (cf. also \cite{gatr1}), we introduce the conformal map
\begin{equation} \label{deltamapdefine}
  \delta(\l) := (-1)^N \int_{\l_1}^{\l} \frac{\dot{\Delta}(\mu)}{\sqrt{4 - \Delta^2(\mu)}} \, d\mu,
\end{equation}
where the sign of the square root is chosen such that for $\mu < \l_1$, $\sqrt{4 - \Delta^2(\mu)}$ has positive imaginary part. It is defined on the upper half plane $U := \{ \textrm{Im} \, z > 0 \}$ and its image is the spike domain
$$ \Omega(b,a) := \{ x + iy : 0 < x < N \pi, \, y > 0 \} \setminus \bigcup_{n=1}^{N-1} T_n $$
where for $1 \leq n \leq N-1$, $T_n$ denotes the spike
\begin{displaymath}
T_n := \left\{ n\pi + i t: 0 < t \leq \; \textrm{arcosh} \; \left( (-1)^{N+n} \frac{\Delta(\dot{\l}_n)}{2} \right) \right\}.
\end{displaymath}

To see that $\delta(U) = \Omega(b,a)$, note that for any $(b,a) \in \M$ and $\l \in U$ the discriminant $\Delta(\l)$ and the function $\d(\l)$ are related by the formula
\begin{equation} \label{discrdeltaformula}
  \Delta(\l) = 2(-1)^N \cos \d(\l).
\end{equation}

To prove (\ref{discrdeltaformula}), recall that for $-1<t<1$, one has $\frac{d}{dt} \arccos t = \frac{1}{\sqrt[+]{1 - t^2}}$. This formula remains valid for any $t$ in $\C \setminus ((-\infty,-1] \cup [1,\infty))$. Thus
\begin{displaymath}
    \d(\l) = (-1)^N \int_{\l_1}^{\l} \frac{\dot{\Delta}(\mu)}{\sqrt{4 - \Delta^2(\mu)}} \, d\mu = \int_{\l_1}^{\l} \frac{d}{d \mu} \arccos \left( (-1)^N \frac{\Delta(\mu)}{2} \right).
\end{displaymath}
Since by (\ref{deltalambdapm2}), $\Delta(\l_1) = 2 (-1)^N$, we then get
\begin{equation} \label{deltadiscr}
 \d(\l) = \arccos \left( (-1)^N \frac{\Delta(\mu)}{2} \right) \Big|_{\l_1}^\l = \arccos \left( (-1)^N \frac{\Delta(\l)}{2} \right),
\end{equation}
leading to formula (\ref{discrdeltaformula}) and the claimed statement that $\delta(U) = \Omega(b,a)$.

The map $\delta$ can be extended continuously to the closure $\{ \textrm{Im} \, z \geq 0 \}$ of the upper half plane. This extension, again denoted by $\delta$, is 2-1 over each nontrivial spike $T_n$ and 1-1 otherwise. Since the $n$-th spike $T_n$ is the image under $\delta$ of the $n$-th gap $(\l_{2n}, \l_{2n+1})$, all spikes are empty iff all gaps are collapsed. By Lemma \ref{speclemma}, all gaps are collapsed for $(b,a) = (-\b 1_N, \a 1_N)$, hence $\Omega(b,a) \subset \Omega(-\b 1_N, \a 1_N)$ for any $(b,a) \in \Mmba$.

Note that
\begin{equation} \label{bandintrepr}
  \l_{2n}(b,a) - \l_{2n-1}(b,a) = \delta^{-1} (n\pi-) - \delta^{-1} ((n-1) \pi+) = \int_{-\infty}^\infty u^{(n)}(\delta(\l)) d \, \l,
\end{equation}
where $u^{(n)}: \Omega(b,a) \to \R, \, z \mapsto u^{(n)}(z;b,a)$ is the harmonic measure of the open subset $((n-1) \pi, n \pi)$ of $\partial \Omega(b,a)$ (see e.g. \cite{garnett} for the notion of the harmonic measure).

We need two lemmas from complex and harmonic analysis, respectively.

\begin{lemma} \label{complanalemma}
For $(b,a) = (-\b 1_N, \a 1_N)$ with $\b \in \R$, $\a > 0$ arbitrary, the map $\d(\l)$ defined by (\ref{deltamapdefine}) is given by
\begin{equation} \label{deltastandard}
  \delta(\l) = N \arccos \left( -\frac{\l + \b}{2\a} \right).
\end{equation}
For arbitrary $(b,a)$ in $\Mmba$ and $\xi \in \R$, the following asymptotic estimate holds as $\eta \to \infty$
\begin{equation} \label{deltageneral}
  \delta(\xi + i \eta) = N \arccos \left( -\frac{\xi + i \eta + \b}{2\a} \right) + O(\eta^{-2}), 
\end{equation}
locally uniformly in $\xi$.
\end{lemma}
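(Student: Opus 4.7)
The plan for the identity (\ref{deltastandard}) is to exploit the explicit form of the fundamental solutions at the equilibrium point $(b,a) = (-\b 1_N, \a 1_N)$. Applying the Remark following Lemma \ref{speclemma} with $\b$ replaced by $-\b$ gives $y_1(j,\l) = -\sin(\rho(j-1))/\sin\rho$ and $y_2(j,\l) = \sin(\rho j)/\sin\rho$ with $\cos\rho = (\l+\b)/(2\a)$. A short sum-to-product computation yields
\begin{displaymath}
\Delta(\l) = y_1(N,\l) + y_2(N+1,\l) = 2\cos(N\rho) = 2\, T_N\!\left(\frac{\l+\b}{2\a}\right),
\end{displaymath}
so $\Delta$ is a rescaled Chebyshev polynomial. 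Using $T_N(-x) = (-1)^N T_N(x)$, this rewrites as $\Delta(\l) = 2(-1)^N \cos\bigl(N\arccos(-(\l+\b)/(2\a))\bigr)$. Comparing with (\ref{discrdeltaformula}), and observing that both $\d(\l)$ and the function $\tilde\d(\l):= N\arccos(-(\l+\b)/(2\a))$ vanish at $\l_1 = -\b - 2\a$ (where $\arccos(1) = 0$), I would conclude equality on the upper half plane by analytic continuation from a real neighborhood of $\l_1$.

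For the asymptotic (\ref{deltageneral}), the plan is to compare $\d(\cdot; b,a)$ for an arbitrary $(b,a) \in \Mmba$ with the same $\tilde\d(\l) := N\arccos(-(\l+\b)/(2\a))$, which, by the first part, equals $\arccos\bigl((-1)^N \tilde\Delta(\l)/2\bigr)$ where $\tilde\Delta(\l) := 2 T_N((\l+\b)/(2\a))$ is the Chebyshev discriminant. The key step is to show that $\Delta(\cdot; b,a)$ and $\tilde\Delta$ agree to the top two orders as polynomials in $\l$. Both have degree $N$ with leading coefficient $\a^{-N}$ (cf.\ the discussion after (\ref{delta2lrepr})); since $y_1(N,\l)$ has degree $N-2$, the $\l^{N-1}$-coefficient of $\Delta$ comes solely from $y_2(N+1,\l)$, and running the recurrence (\ref{rdef}) for $y_2$ together with the Casimir identity $\sum_{k=1}^N b_k = -N\b$ on $\Mmba$ yields this coefficient as $N\b\,\a^{-N}$. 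The same value appears in $\tilde\Delta(\l) = \a^{-N}(\l+\b)^N + O((\l+\b)^{N-2})$. Hence $\Delta(\cdot; b,a) - \tilde\Delta$ is a polynomial in $\l$ of degree at most $N-2$, and for $\l = \xi + i\eta$ with $\eta \to \infty$,
\begin{displaymath}
\Delta(\xi + i\eta;b,a) - \tilde\Delta(\xi + i\eta) = O(\eta^{N-2}),
\end{displaymath}
locally uniformly in $\xi$.

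To finish, I would use the representation $\d(\l) = \arccos((-1)^N \Delta(\l)/2)$ from (\ref{deltadiscr}) and estimate the increment of $\arccos$ along the segment joining $(-1)^N\Delta(\l)/2$ and $(-1)^N\tilde\Delta(\l)/2$. Since $|\Delta(\l)|$ and $|\tilde\Delta(\l)|$ are both of order $\eta^N$ for $\l = \xi + i\eta$ with large $\eta$, and $\arccos'(z) = -1/\sqrt{1-z^2} = O(|z|^{-1})$ for large $|z|$, the mean-value bound yields
\begin{displaymath}
|\d(\l) - \tilde\d(\l)| \leq |\Delta(\l) - \tilde\Delta(\l)| \cdot \sup|\arccos'| = O(\eta^{N-2}) \cdot O(\eta^{-N}) = O(\eta^{-2}),
\end{displaymath}
locally uniformly in $\xi$, as required. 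The main obstacle will be consistency of the branches of $\arccos$: this is handled by the common normalization $\d(\l_1) = 0 = \tilde\d(\l_1)$ and by tracking the analytic continuation into the upper half plane, which keeps the arguments of $\arccos$ well away from the cuts $(-\infty,-1] \cup [1,\infty)$ once $\eta$ is large enough, so that the segment between them lies in the domain of analyticity.
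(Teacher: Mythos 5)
Your proposal is correct and follows essentially the same route as the paper: for the equilibrium point you reduce $\Delta$ to a rescaled Chebyshev polynomial and invert via $\arccos$; for general $(b,a)\in\Mmba$ you match the top two Taylor coefficients of $\Delta$ against the Chebyshev discriminant and push the $O(\l^{N-2})$ discrepancy through $\arccos$. The only minor divergence is in the last step: the paper expands $\arccos\bigl(\cos(N\arccos(\cdot))(1+O(\l^{-2}))\bigr)$ using the representation $\arccos z = -i\log(z+i\sqrt[+]{1-z^2})$, whereas you use a mean-value bound with $\arccos'(z)=O(|z|^{-1})$ along the segment joining the two arguments. Both are valid; your version is perhaps a touch more elementary, and your observation that $y_1(N,\l)$ has degree exactly $N-2$ (so the $\l^{N-1}$-coefficient comes only from $y_2(N+1,\l)$) makes the coefficient match a bit more explicit than the paper's terse $\Delta(\l)=\a^{-N}\l^N(1+N\b\l^{-1}+O(\l^{-2}))$. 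The branch-consistency caveat you raise is real and is handled correctly by the common normalization $\d(\l_1)=\tilde\d(\l_1)=0$ together with the fact that for large $\eta$ the arguments of $\arccos$ and the segment between them stay well inside the domain of analyticity.
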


\begin{proof}[Proof of Lemma \ref{complanalemma}]
In view of the formulas (\ref{y1betaalpha}) and (\ref{y2betaalpha}) for the fundamental solutions $y_1$ and $y_2$ for $(b,a) = (-\b 1_N, \a 1_N)$, the discriminant $\Delta(\l) = \Delta(\l, -\b 1_N, \a 1_N)$ is given by
\begin{eqnarray*}
  \Delta(\l) & = & y_1(N,\l) + y_2(N+1,\l) \\
& = & -\frac{\sin (\rho (N-1))}{\sin \rho} + \frac{\sin (\rho (N+1))}{\sin \rho} \\
& = & 2 \cos (\rho N),
\end{eqnarray*}
where $\pi < \rho < 2\pi$ is determined by $\cos \rho = \frac{\l + \b}{2\a}$. Hence
\begin{equation} \label{deltaspecialdiscr}
  \Delta(\l) = 2 T_N \left( \frac{\l + \b}{2\a} \right)
\end{equation}
where for any $z \in U$, 
\begin{equation} \label{chebcompl}
  T_N(z) = \cos (N \arccos z).
\end{equation}
Actually, $T_N(z)$ is a polynomial in $z$ of degree $N$, referred to as Chebychev polynomial of the first kind. 
Substituting (\ref{deltaspecialdiscr}) into (\ref{deltadiscr}), we obtain
\begin{displaymath}
  \d(\l) = \arccos \left( (-1)^N \frac{\Delta(\l)}{2} \right) = \arccos \left( (-1)^NT_N \left( \frac{\l + \b}{2\a} \right) \right).
\end{displaymath}
The claimed identity (\ref{deltastandard}) now follows from the elementary symmetry
\begin{displaymath}
T_N(z) = (-1)^N T_N(-z) \quad \forall \, z \in \C.
\end{displaymath}
Now let $(b,a) \in \Mmba$. The asymptotic estimate (\ref{deltageneral}) follows by comparing the polynomials $\Delta(\l)$ correponding to $(b,a)$ and the one corresponding to $(-\b 1_N, \a 1_N)$. By (\ref{discrdef}) (and the discussion following it), in both cases,
\begin{displaymath}
  \Delta(\l) = \a^{-N} \l^N \left( 1 + N \b \l^{-1} + O(\l^{-2}) \right) \quad \textrm{as} \quad |\l| \to \infty.
\end{displaymath}
This implies that
\begin{displaymath}
\Delta_{b,a}(\l) = \Delta_{-\b 1_N, \a 1_N}(\l) \cdot (1 + O(\l^{-2})),
\end{displaymath}
hence by (\ref{deltastandard}) and (\ref{deltaspecialdiscr})
\begin{eqnarray*}
  \delta_{b,a}(\l) & = & \arccos \left( (-1)^N \frac{\Delta_{b,a}(\l)}{2} \right) \\
& = & \arccos \left( \frac{(-1)^N}{2} \Delta_{-\b 1_N, \a 1_N}(\l) \cdot (1 + O(\l^{-2})) \right). \\
& = & \arccos \left( (-1)^N T_N \left( \frac{\l + \b}{2\a} \right) \cdot (1 + O(\l^{-2})) \right) \\
& = & \arccos \left( T_N \left( -\frac{\l + \b}{2\a} \right) \cdot (1 + O(\l^{-2})) \right).
\end{eqnarray*}
Substituting formula (\ref{chebcompl}) for $T_N$, one then concludes that
\begin{eqnarray}
\delta_{b,a}(\l) & = & \arccos \left( \cos \left( N \arccos \left( -\frac{\l + \b}{2\a} \right) \right) \cdot (1 + O(\l^{-2})) \right) \nonumber\\
& = & N \arccos \left( -\frac{\l + \b}{2\a} \right)+ O(\l^{-2}), \label{deltabaarccosasympt}
\end{eqnarray}
where in the last step we used that $\arccos z = -i\log (z + i\sqrt[+]{1-z^2})$ for any $z$ in $\C \setminus ((-\infty,-1] \cup [1,\infty))$.
\end{proof}

\begin{lemma} \label{harmanalemma}
Let $u: \Omega(b,a) \to \R$ be a bounded harmonic function such that the nontangential limit of $u(z)$ on $\partial \Omega(b,a)$ has compact support, and let $U(\l) := u(\d(\l))$, where $\d(\l)$ is the function defined by (\ref{deltamapdefine}). Then for almost every $t \in \R$, the limit $U(t) := \lim_{\eta \to 0} U(t + i\eta)$ exists and is integrable, and
\begin{equation} \label{sinhformula}
  \int_{-\infty}^\infty U(t) dt = \lim_{x \to \infty} 2 \pi \a \sinh \left( \frac{x}{N} \right) u \left( \frac{N\pi}{2} + i x \right).
\end{equation}
\end{lemma}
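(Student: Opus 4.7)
My plan is to represent $U = u \circ \d$ by the Poisson integral on the upper half plane and extract (\ref{sinhformula}) by asymptotically inverting $\d$ along the vertical ray $\frac{N\pi}{2}+i\R_{>0}$.

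First I would observe that, since $\d$ is a conformal bijection from the upper half plane $\{\textrm{Im}\,z > 0\}$ onto $\Omega(b,a)$, the pullback $U$ is bounded and harmonic on the upper half plane, and its a.e.\ nontangential boundary values on $\R$ are the pullbacks via $\d$ of the nontangential boundary values of $u$ on $\partial\Omega(b,a)$. The only unbounded portions of $\partial\Omega(b,a)$ are the two vertical walls, which are the $\d$-images of $(-\infty,\l_1)$ and $(\l_{2N},\infty)$; the compact-support hypothesis on $u$ therefore forces $U|_\R$ to have bounded support, so in particular $U|_\R \in L^1(\R)$. A bounded harmonic function on the upper half plane is determined by its boundary values up to a summand of the form $c\,\textrm{Im}\,z$, which is unbounded unless $c=0$, hence
\[
U(\xi+i\eta) = \frac{1}{\pi}\int_{-\infty}^\infty \frac{\eta\,U(t)}{(\xi-t)^2+\eta^2}\,dt.
\]
Since $U|_\R$ has compact support, dominated convergence yields $\lim_{\eta\to\infty} \eta\,U(\xi+i\eta) = \frac{1}{\pi}\int_{-\infty}^\infty U(t)\,dt$ uniformly for $\xi$ in bounded subsets of $\R$.

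Next I would asymptotically invert $\d$ along the ray $\frac{N\pi}{2}+i\R_{>0}$. For $x>0$ large let $\l(x)=\xi(x)+i\eta(x)$ denote the unique preimage of $\frac{N\pi}{2}+ix$ under $\d$ in the upper half plane. A direct computation shows that the point $\l_0(x):=-\b+2i\a\sinh(x/N)$ satisfies $N\arccos(-(\l_0+\b)/(2\a)) = \frac{N\pi}{2}+ix$ exactly, so by the asymptotic formula (\ref{deltageneral})
\[
\frac{N\pi}{2}+ix = N\arccos\!\left(-\frac{\l(x)+\b}{2\a}\right) + O(\eta(x)^{-2}).
\]
Taking cosines and using $|\sin(\pi/2+ix/N)| = \cosh(x/N) = O(e^{x/N})$ gives $\l(x) = \l_0(x) + O(\eta(x)^{-2}\,e^{x/N})$. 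Since to leading order $\eta(x)$ is comparable to $\a e^{x/N}$, this correction is $O(e^{-x/N})$, and a short bootstrap yields $\eta(x) = 2\a\sinh(x/N)+o(1)$ and $\xi(x)\to -\b$ as $x\to\infty$.

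Combining these ingredients, from $U(\l(x)) = u(\frac{N\pi}{2}+ix)$ and the boundedness of $u$ one obtains
\[
\eta(x)\,U(\l(x)) = 2\a\sinh(x/N)\,u\!\left(\frac{N\pi}{2}+ix\right) + o(1).
\]
Since $\xi(x)$ stays bounded and $\eta(x)\to\infty$ as $x\to\infty$, the Poisson limit identity above applies and yields $\frac{1}{\pi}\int_{-\infty}^\infty U(t)\,dt = \lim_{x\to\infty} 2\a\sinh(x/N)\,u(\frac{N\pi}{2}+ix)$, which is (\ref{sinhformula}) after multiplication by $\pi$. The main obstacle in executing this plan is the asymptotic inversion: a perturbation of size $O(\eta^{-2})$ inside $\arccos$ is magnified by a factor of $\cosh(x/N)$, which grows like $e^{x/N}$, so it is essential to use the leading-order growth $\eta(x)\sim \a e^{x/N}$ to close the bootstrap. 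A secondary technical point is the Poisson identification of $U$, for which boundedness is used to eliminate an additive $c\,\textrm{Im}\,z$ term.
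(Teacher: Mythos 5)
Your proof follows essentially the same route as the paper: pull $u$ back to a bounded harmonic function on the upper half plane, use the Poisson representation and dominated convergence (since $U|_{\R}$ is compactly supported) to identify $\int U(t)\,dt$ with $\lim_{\eta\to\infty}\pi\eta\,U(\xi+i\eta)$, and then asymptotically invert $\delta$ along the ray $\frac{N\pi}{2}+i\R_{>0}$ using the Chebyshev asymptotics of Lemma \ref{complanalemma} to read off $\eta \sim 2\alpha\sinh(x/N)$. The only differences are minor refinements on your part: you spell out the bootstrap showing the $O(\eta^{-2})$ perturbation inside $\arccos$ stays controlled after being amplified by $\cosh(x/N)$, and you note the uniformity in $\xi$ needed because $\xi(x)$ moves (but stays bounded) as $x\to\infty$ --- points the paper's proof leaves implicit.
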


\begin{proof}[Proof of Lemma \ref{harmanalemma}]
Again by Fatou's theorem, for a.e. $t$ the (nontangential) limit $\lim_{\eta \to 0} U(t + i\eta)$ exists, since $U$ is a bounded harmonic function on $\{ \textrm{Im} \; (z) > 0 \}$. Since $u$ is bounded on $\Omega(b,a)$ and its nontangential limit to $\partial \Omega$ has compact support, $U(t)$ is bounded and compactly supported and thus in particular integrable. For $\l = \xi + i \eta$, one then has the Poisson representation
\begin{equation} \label{uprepr}
  U(\xi + i \eta) = \frac{\eta}{\pi} \int_{-\infty}^{\infty} \frac{U(t)}{(t - \xi)^2 + \eta^2},
\end{equation}
and by dominated convergence we conclude that
\begin{equation} \label{limUtdt}
  \int_{-\infty}^{\infty} U(t) \, dt = \lim_{\eta \to \infty} \pi \eta \, U(\xi + i \eta).
\end{equation}
(In particular, the limit in the latter expression exists.)
In order to compute the right hand side of (\ref{limUtdt}), let $\xi + i \eta$ be given by
\begin{displaymath}
  \xi + i \eta = \d^{-1}\left( \frac{N \pi}{2} + i x \right),
\end{displaymath}
for $x$ sufficiently large. Then $U(\xi + i \eta) = u \left( \frac{N \pi}{2} + i x \right)$, and by (\ref{deltabaarccosasympt}), it follows that
\begin{equation} \label{harmsubstansatz}
 \frac{\pi}{2} + \frac{i}{N} x = \arccos \left( -\frac{(\xi + \b) + i \eta}{2 \a} \right) + O\left( (\xi + i \eta)^{-2} \right) \quad (x \to \infty). 
\end{equation}
Taking the cosine of both sides of (\ref{harmsubstansatz}), multiplying by $-2\a$ and using that $\cos(\frac{\pi}{2} + it) = -i \sinh t$ for $t \in \R$, we obtain
\begin{displaymath}
  2i\a \sinh \frac{x}{N} = ((\xi + \b) + i \eta) \left( 1 + O\left( (\xi + i \eta)^{-2} \right) \right) \quad (x \to \infty).
\end{displaymath}
Hence, as $x \to \infty$,
\begin{equation} \label{xietaresult}
  \xi = O(1), \quad \eta = 2\a \, \sinh \frac{x}{N} + O(1).
\end{equation}
Substituting (\ref{xietaresult}) into (\ref{limUtdt}) leads to the claimed formula (\ref{sinhformula}).
\end{proof}

\begin{proof}[Proof of Proposition \ref{harmcomplprop}]
Let $(b,a) \in \Mmba$. Besides the harmonic measure $u^{(n)}$ of the set $E := ((n-1) \pi, n \pi) \subset \partial \Omega(b,a)$ we also consider the harmonic measure $u_{\b,\a}^{(n)}$ of $E \subset \partial \Omega(-\b 1_N, \a 1_N)$; note that
\begin{displaymath}
\Omega(-\b 1_N, \a 1_N) = \{ x+iy | \, 0 < x < N\pi, \, y > 0 \}
\end{displaymath}
and hence $\Omega(b,a) \subseteq \Omega(-\b 1_N, \a 1_N)$. According to \cite{garnett}, both $u^{(n)}$ and $u_{\b,\a}^{(n)}$ satisfy the hypotheses of Lemma \ref{harmanalemma}. 
Let us recall the monotonicity property of the harmonic measures $u(z, E, \Omega)$ with respect to $\Omega$ (see e.g. \cite{garnett}): If $\Omega_1 \subseteq \Omega_2$, $E \subset \partial \Omega_1 \cap\partial \Omega_2$, and $u(z, E, \Omega_i)$ ($i=1,2$) denotes the harmonic measure of $E \subset \partial \Omega_i$, then for any $z \in \Omega_1$, $u(z,E,\Omega_1) \leq u(z,E,\Omega_2)$. Apply this general principle to $\Omega_1 :=  \Omega(b,a)$ and $\Omega_2 := \Omega(-\b 1_N, \a 1_N)$ to get
\begin{equation} \label{maxprincest}
  u^{(n)} (x) \leq u_{\b,\a}^{(n)} (x).
\end{equation}

Writing $U^{(n)}(\l) := u^{(n)}(\d(\l))$ as well as $U_{\b,\a}^{(n)}(\l) := u_{\b,\a}^{(n)}(\d(\l))$ and combining (\ref{bandintrepr}), (\ref{sinhformula}), and (\ref{maxprincest}), we conclude that
\begin{eqnarray*}
  \l_{2n}(b,a) - \l_{2n-1}(b,a) & = & \int_{-\infty}^\infty U^{(n)}(\l) d \, \l \\
& = & \lim_{x \to \infty} 2 \pi \a \sinh \left( \frac{x}{N} \right) u^{(n)} \left( \frac{N\pi}{2} + i x \right) \\
& \leq & \lim_{x \to \infty} 2 \pi \a \sinh \left( \frac{x}{N} \right) u_{\b,\a}^{(n)} \left( \frac{N\pi}{2} + i x \right) \\
& = & \int_{-\infty}^\infty U_{\b,\a}^{(n)}(\l) d \, \l \\
& = & \l_{2n}(-\b 1_N, \a 1_N) - \l_{2n-1}(-\b 1_N, \a 1_N).
\end{eqnarray*}
This completes the proof of the estimate (\ref{esta4}) and therefore of Proposition \ref{harmcomplprop}.
\end{proof}

\vspace{.4cm}

\textsc{Institut f\"ur Mathematik, Universit\"at Z\"urich, Winterthurerstrasse 190, CH-8057 Z\"urich, Switzerland} \\
\emph{E-mail address:} \texttt{andreas.henrici@math.unizh.ch}

\vspace{.4cm}

\textsc{Institut f\"ur Mathematik, Universit\"at Z\"urich, Winterthurerstrasse 190, CH-8057 Z\"urich, Switzerland} \\
\emph{E-mail address:} \texttt{thomas.kappeler@math.unizh.ch}

\end{document}